\tikzset{
    nonterminal/.style={
      rectangle,
      minimum size=6mm,
      very thick,
      draw=red!50!black!50,
      top color=white,
      bottom color=red!50!black!20,
      font=\itshape},
    terminal/.style={
      rounded rectangle,
      minimum size=6mm,
      very thick,draw=black!50,
      top color=white,bottom color=black!20,
      font=\ttfamily}
    }
\def\C{\mathcal{C}}	
\newcommand{\bigO}{\ensuremath{\mathcal{O}}} 
\def\I{\mathcal{I}}
\def\uri{\mathbf{U}}
\def\bn{\mathbf{B}}
\def\lit{\mathbf{L}}
\def\const{\mathbf{C}}
\def\var{\mathbf{V}}
\def\local{\textrm{local}}
\def\lclosure{\textsf{lclosure}}
\title{ Query Answering over Contextualized RDF/OWL Knowledge with Forall-Existential Bridge Rules:
Decidable Finite Extension Classes (Post Print)
}
\author{
Mathew Joseph$^{1,2}$  \and Gabriel Kuper$^2$ \and Till Mossakowski$^3$ \and Luciano Serafini$^1$
}
\institute{ 
$^1$ DKM, FBK-IRST, Trento, Italy \\
$^2$ DISI, University Of Trento, Trento, Italy\\
$^3$ Faculty of Computer Science,  Otto-von-Guericke University of Magdeburg, 39016, Magdeburg, Germany \\
\medskip
\email{\{joseph, kuper\}@disi.unitn.it, \{serafini\}@fbk.eu, \{mossakow\}@iws.cs.uni-magdeburg.de}
}
\begin{document}
\maketitle
\begin{abstract}
The proliferation of contextualized knowledge in the Semantic Web (SW) has led to the popularity of 
knowledge formats such as \emph{quads} in the SW community. A quad is an extension of an RDF triple with contextual information
of the triple. In this paper, we study the problem of query answering over quads augmented with forall-existential 
bridge rules that enable interoperability of reasoning between triples in various contexts. 
We call a set of quads together with such expressive bridge rules, a quad-system. 
Query answering over quad-systems is undecidable, in general. We derive decidable classes of quad-systems, 
for which query answering can be done using forward chaining. Sound, complete and terminating procedures, which are adaptations
of the well known chase algorithm, are provided for these classes for deciding query entailment. 
Safe, msafe, and csafe class of 
quad-systems restrict the structure of blank nodes generated during the chase computation process
to be directed acyclic graphs (DAGs) of bounded depth. RR and restricted RR classes do not allow the generation of 
blank nodes during the chase computation process. Both data and combined complexity of query entailment has been established for the classes 
derived. We further show that quad-systems are 
equivalent to forall-existential rules whose predicates are restricted to ternary arity, modulo polynomial time translations. 
We subsequently show that the technique of safety, strictly subsumes in expressivity, some of the well known and 
expressive techniques, such as joint acyclicity and model faithful acyclicity, used for decidability guarantees in the realm of
forall-existential rules.  
\end{abstract}

\begin{keywords}
Contextualized Query Answering, Contextualized RDF/OWL knowledge bases, Multi-Context Systems, Quads, 
Query answering, forall-existential rules, Knowledge Representation,  Semantic Web
\end{keywords}

\section{Introduction}
\noindent As the Semantic Web (SW) is getting more and more ubiquitous and its constellation of interlinked ontologies, the web of data, is seamlessly
proliferating at a steady rate, more and more applications have started using SW as a back end, providing their users manifold services, 
leveraging semantic technologies. One of the main reasons why SW enjoys such admirable hospitality 
from its mammoth geographically disparate users is its ``simple'' and ``open'' model. The model is simple, as the only intricacy that 
a creator/consumer of a SW application needs to be equipped with is that of a (RDF) triple. A triple $t$ $=$ $(s,p,o)$ represents the most 
basic piece of knowledge in the SW, where $s$, called the \emph{subject}, is an identifier for a person, place, thing, value, or a resource in general, about which the 
creator of $t$ intended to express his/her knowledge using $t$. $p$, called the \emph{predicate}, is an identifier for a property, attribute, or in general a 
binary relation that relates $s$ with the component $o$, called the object, that is also an identifier for a resource similar to $s$. 
The model is called open, as it allows anybody, anywhere around the 
world to freely create their RDF/OWL ontologies about a domain of their choice, and publish them in (embedded) RDF/OWL formats in their web portals, also linking via 
URIs to the concepts in other similarly published ontologies. Thus the open model, in order to promote reuse and freedom, imposes no arbitration mechanism 
for the ontologies users publish on the SW.

A problem caused by this open model is
that any piece of knowledge which a person publishes is often his/her own perspective about a particular domain, which largely is relative to this person. 
 As a consequence, the truth value of a piece of knowledge in the SW is context-dependent. 
Recently, as a solution to the aforementioned problem, the SW community adopts the use of quads, an extension of 
triples, as the primary carrier of knowledge. A quad $c\colon (s,p,o)$ thus adds a fourth component of 
the context $c$ to the triple $(s,p,o)$,  explicating the identifier of the context in which the triple holds. As a result, more and more 
triple-stores are becoming quad-stores. Some of the popular quad-stores are 4store\footnote{http://4store.org}, Openlink Virtuoso
\footnote{http://virtuoso.openlinksw.com/rdf-quad-store/}, and some
of the currently popular triple-stores like Sesame\footnote{http://www.openrdf.org/}, 
Allegrograph\footnote{http://www.franz.com/agraph/allegrograph/}
internally keep track of the contexts of triples. Some of the recent initiatives
in this direction have also extended existing formats like N-Triples to N-Quads, which the RDF 1.1 has introduced
as a W3C recommendation.  
The latest Billion triple challenge datasets  have all been  released in the N-Quads format.

Other benefits of quads over triples are that they allow knowledge creators to specify various attributes of meta-knowledge 
that further qualify knowledge~\cite{caroll-bizer-named-graphs-data-provenance-wwww2005}, and also allow users to query for this
meta knowledge~\cite{QueryingMetaknowledge}. These attributes, which explicate the various assumptions under which knowledge holds, 
are also called \emph{context dimensions}~\cite{cyc_context_space}. Examples of context dimensions are
provenance, creator, intended user, creation time, validity time, geo-location, and topic. 
Having defined knowledge that is contextualized, as in $c_1 \colon($\textsf{Renzi}, \textsf{primeMinsiterOf}, \textsf{Italy}$)$, 
one can now declare in a meta-context $mc$, statements such as 
$mc\colon(c_1$, \textsf{creator}, \textsf{John}$)$, $mc\colon(c_1$, \textsf{expiryTime}, \textsf{``jun-2016''}$)$ that talk 
about the knowledge in context $c_1$, in this case its creator and expiry time.
Another benefit of such a contextualized approach is that it opens possibilities of interesting ways for
querying a contextualized knowledge base. For instance, if context $c_1$ contains knowledge about football world cup 2014
and context $c_2$ about football euro cup 2012, then the query ``who beat Italy in both world cup 2014 and
euro cup 2012'' can be formalized as the conjunctive query: 
\[
 c_1\text{: }(x,\textsf{beat},\textsf{Italy}) \wedge c_2\text{: }(x,\textsf{beat},\textsf{Italy}), 
\] 
where $x$ is a variable.

When reasoning with knowledge in quad form, since knowledge can 
be grouped and divided context wise and simultaneously be fed to separate reasoning engines, 
this approach  improves both efficiency and scalability~\cite{BozzatoComparing}. Besides the above flexibility, \emph{bridge rules}~\cite{DDL}
can be provided for inter-operating the knowledge in different contexts. Such rules are primarily of the form:
\begin{eqnarray}\label{eqn:bridgeRule}
c:\phi \rightarrow c':\phi' 
\end{eqnarray}
where $\phi,\phi'$ are both atomic concept (role) symbols, $c,c'$ are contexts. The semantics of such a rule is that if, for any $\vec a$, 
$\phi(\vec a)$ holds in context $c$, then $\phi'(\vec a)$ should hold in context $c'$, 
where $\vec a$ is a unary/binary vector depending on whether $\phi, \phi'$ are concept/role symbols.
Although such bridge rules serve the purpose of specifying  knowledge interoperability from a source 
context $c$ to a target context $c'$, in many practical situations there is the need of 
inter-operating  multiple source contexts with multiple target contexts, for which the bridge rules of the form (\ref{eqn:bridgeRule})
are inadequate. Besides, one would also want the ability of creating new values in target contexts for the bridge rules.

In this work, we study contextual reasoning and query answering over contextualized RDF/OWL knowledge bases in the presence of
\emph{forall-existential bridge rules} 
that allow conjunctions and existential quantifiers in them, 
and hence are more expressive than those in DDL~\cite{DDL} and 
McCarthy et al.~\cite{McCarthy95formalizingcontext}.
We provide a basic semantics for contextual reasoning based on which we provide procedures for 
conjunctive query answering. For query answering, we use the notion of a \emph{distributed chase},
which is an extension of the standard \emph{chase}~\cite{JohnsonK84,AbiteboulHV95} that is widely used in the
knowledge representation (KR) and Database (DB) settings for similar purposes. As far as the semantics for reasoning is concerned,
we adopt the approach given in works such as Distributed Description Logics~\cite{DDL}, 
E-connections~\cite{econnections}, and two-dimensional logic of 
contexts~\cite{KlaGutAAAI11}, to use a set of interpretation structures 
as a model for contextualized knowledge. In this way, knowledge in each context is separately interpreted in a different 
interpretation structure.  
\noindent The main contributions of this work are:
\begin{enumerate}
 \item  We formulate a context-based semantics that reuses the standard RDF/OWL semantics, which can be used for reasoning over 
 quad-systems. Studying conjunctive query answering
 over quad-systems, it turns out that the entailment problem of conjunctive queries is undecidable for the most 
 general class of quad-systems, called \emph{unrestricted quad-systems}. 
  \item We derive decidable subclasses of unrestricted quad-systems, namely \emph{csafe}, \emph{msafe}, and \emph{safe} quad-systems, 
  for which we detail both data and combined complexities of conjunctive query entailment. These classes are based on the
  constrained DAG structure of Skolem blank nodes generated during the chase construction. We also provide decision procedures to decide whether
  an input quad-system is safe (csafe, msafe) or not. 
  \item We further derive less expressive classes, RR and restricted RR quad-systems, 
  for which no Skolem blank nodes are generated during the chase construction. 
\item We show that the class of unrestricted quad-systems is equivalent to the class of ternary $\forall\exists$ rule sets. 
We compare the derived classes of quad-systems with well known subclasses of $\forall\exists$ rule sets, such as jointly acyclic and model faithful acyclic
rule sets, and show that the technique of safety we propose, subsumes these other techniques, in expressivity. 
\end{enumerate}
The paper is structured as follows. In section \ref{sec:CQS}, we formalize the idea of contextualized quad-systems, giving
various definitions and notations for setting the background.
In section \ref{sec:Query}, we formalize the problem of query answering for quad-systems, define notions such as distributed chase
that are further used for query answering, and give the undecidability results of query entailment on
unrestricted quad-systems. In section \ref{sec:safe}, we present
\emph{csafe}, \emph{msafe}, and \emph{safe} quad-systems and their computational properties. In section \ref{sec:horn}, RR quad-systems
and restricted RR quad-systems are introduced. In section \ref{sec:Comparison}, we prove the equivalence of quad-systems with ternary $\forall\exists$ 
rule sets, and formally compare a few well known decidable classes in the realm of $\forall\exists$ rules to the classes of quad-systems,
we presented in section \ref{sec:safe}.
We provide a detailed discussion to other relevant related works in section \ref{sec:related work}, and conclude in 
section \ref{section:conclusion}.

Note that parts of the contents of section \ref{sec:CQS} and section \ref{sec:Query} has been taken from conference papers \cite{josephCilc2014}
and \cite{josephRR2014}.
\section{Contextualized Quad-Systems}\label{sec:CQS}
 \vspace{-5pt} 
 \noindent In this section, we formalize the notion of a
 quad-system and its semantics.  For any vector or sequence $\vec x$, we denote
 by $\|\vec x\|$ the number of symbols in $\vec x$, and by $\{\vec x\}$ the set of symbols
 in $\vec x$. For any sets $A$ and $B$, $A \rightarrow B$ denotes the
 set of all functions from set $A$ to set $B$.
 
Given  the set of URIs $\uri$, the set of blank nodes $\bn$, and  the set of literals $\lit$, the set 
$\const=\uri \uplus \bn \uplus \lit$ is called the set of (RDF) constants. Any $(s,p,o) \in \const \times 
\const \times \const$ is called a generalized RDF triple (from now on,
just triple). A graph is a set of triples.
A \emph{quad} is a tuple of the form $c\colon(s,p,o)$, where $(s,p,o)$ is a triple and $c$ is a URI\footnote{Although,
in general a context identifier can be a constant, for the ease of notation, we restrict them to be a URI}, 
called the \emph{context identifier} that denotes the context of the RDF triple.
A \emph{quad-graph} is defined as a set of quads.
For any quad-graph $Q$ and any context identifier $c$, we denote by $graph_Q(c)$ the set $\{(s,p,o)|c\colon(s,p,o) \in Q\}$.
We denote by $Q_{\C}$ the quad-graph whose set of context identifiers is $\C$. 
The set of constants occurring in $Q_{\C}$ is 
given as $\const(Q_{\C})$ $=$ $\{c, s, p, o \ | \ c\colon(s,p,o) \in Q_{\C}\}$.
 The set of URIs in $Q_{\C}$ is given by $\uri(Q_{\C})$ $=$ $\const(Q_{\C})$ $\cap$ $\uri$. The set of blank nodes
 $\bn(Q_{\C})$ and the set of literals $\lit(Q_{\C})$ are similarly defined.
 Let $\var$ be the set of variables, any element of the set ${\const}^{\var}=\var \cup \const$ is a \emph{term}. 
 Any $(s,p,o) \in {\const}^{\var} \times {\const}^{\var} \times
 {\const}^{\var}$ is called a \emph{triple pattern}, and an expression
 of the form $c\colon(s,p,o)$, where $(s,p,o)$ is a triple pattern,
 $c$ a context identifier, is called a \emph{quad pattern}.  A triple
 pattern $t$, whose variables are elements of the vector $\vec x$ or elements of the vector 
 $\vec y$ is written as $t(\vec x, \vec y)$.  For any function $f\colon A \rightarrow B$, the \emph{restriction} of $f$ to a set $A'$,
is the mapping $f|_{A'}$ from $A' \cap A$ to $B$ such that $f|_{A'}(a)=f(a)$, for each $a \in A \cap A'$.
 For any triple pattern $t=(s,p,o)$ and function $\mu$ from  $\var$ to a set $A$,
 $t[\mu]$ denotes $(\mu'(s), \mu'(p), \mu'(o))$, where $\mu'$ is an extension of $\mu$ to $\const$ such that $\mu'|_{\const}$
is the identity function. For any set of  triple patterns $G$, $G[\mu]$ denotes $\bigcup_{t \in G} t[\mu]$.
For any vector of constants $\vec a=\langle a_1,\dots,a_{\|\vec a\|}\rangle$,
 and vector of variables $\vec x$ of the same length, $\vec x/\vec
 a$ is the function $\mu$ such that 
 $\mu(x_i)=a_i$, for $1\leq i\leq \|\vec a\|$. We use the notation $t(\vec a, \vec y)$ to
 denote $t(\vec x, \vec y)[\vec x/\vec a]$. 
 Similarly, the above notations are also extended to sets of
 quad-patterns. For instance $Q(\vec x, \vec y)$
 denotes a  set of quad-patterns, whose variables
 are from $\vec x$ or $\vec y$, and $Q(\vec a, \vec y)$ is written for $Q(\vec x, \vec y)[\vec x/\vec a]$.
For the sake of  interoperating  knowledge in different contexts, bridge rules need to be provided:
\paragraph{Bridge rules (BRs)} 
 Formally, a BR is of the form:
\begin{eqnarray}\label{eqn:intraContextualRuleWithQuantifiers}
  \forall \vec x \forall \vec z \ [c_1\text{: }t_1(\vec x, \vec z) \wedge ... \wedge c_n\text{: }t_n(\vec x, \vec z) 
 \rightarrow \exists \vec y \ c'_1\text{: }t'_1(\vec x, \vec y) \wedge ... \wedge c'_m\text{: }t'_m(\vec x, \vec y)]
\end{eqnarray}  
where $c_1, ..., c_n, c'_1,...,c'_m$ are context identifiers, 
$\vec x$, $\vec y$, $\vec z$ are vectors of variables such that $\{\vec x\}, \{\vec y\}$,
and $\{\vec z\}$ are pairwise disjoint. $t_1(\vec x$, $\vec z)$, 
..., $t_n(\vec x$, $\vec z)$ are triple patterns which do not contain blank-nodes, and whose set of
variables are from $\vec x$ or $\vec z$. $t'_1(\vec x$, $\vec y)$, ..., $t'_m(\vec x, \vec y)$ are triple patterns, whose set of
variables are from $\vec x$ or $\vec y$, and also does not contain blank-nodes. 
For any BR $r$ of the form (\ref{eqn:intraContextualRuleWithQuantifiers}), 
$body(r)$ is the set of quad patterns $\{c_1\text{: }t_1(\vec x$, $\vec z)$,...,$c_n\text{: }t_n(\vec x$, $\vec z)\}$,
and $head(r)$ is the set of quad patterns $\{c'_1\text{: }t'_1(\vec x, \vec y)$, ...  $c'_m\text{: }t'_m(\vec x$, $\vec y)\}$, 
and the frontier of $r$, $\mathit{fr}(r)$ $=$ $\{\vec x\}$. Occasionally, we also note the BR $r$ above as $body(r)(\vec x$, $\vec z)$ $\rightarrow$
$head(r)(\vec x$, $\vec y)$. The set of terms in a BR $r$ is: 
  \[\const^{\var}(r)=\{c,s,p,o \ | \ c:(s,p,o) \in body(r) \cup head(r) \}\]
 The set of terms for a set of BRs $R$ is $\const^{\var}(R)=\bigcup_{r \in R} \const^{\var}(r)$.
The URIs, blank nodes, literals, variables of a BR $r$ (resp. set of
BRs $R$) are similarly defined, and
  are denoted as $\uri(r)$, $\bn(r)$, $\lit(r)$, $\var(r)$ (resp. $\uri(R)$, $\bn(R)$, $\lit(R)$, $\var(R)$), respectively.
\begin{definition}[Quad-System]
A \emph{quad-system} $QS_{\C}$ is defined as a pair $\langle Q_{\C}, R\rangle$, where $Q_{\C}$ is a quad-graph, whose
set of context identifiers is $\C$, and $R$ is a set
of BRs.
\end{definition}
\noindent  
For any quad-system, $QS_{\C}$ $=$ $\langle Q_{\C},R \rangle$, the set of constants in $QS_{\C}$ is given by
$ \const(QS_{\C})=\const(Q_{\C}) \cup \const(R) \nonumber$.
The sets $\uri(QS_{\C})$, $\bn(QS_{\C})$, $\lit(QS_{\C})$, 
and $\var(QS_{\C})$ are similarly defined for any quad-system $QS_{\C}$.
For any quad-graph  $Q_{\C}$ (BR $r$), its symbol size $\|Q_{\C}\|$ $(\|r\|)$ is the number of symbols
 required to print $Q_{\C}$ $(r)$. Hence, $\|Q_{\C}\|\approx 4*|Q_{\C}|$, where $|Q_{\C}|$ denotes the cardinality of
 the set $Q_{\C}$. Note that $|Q_{\C}|$ equals the number of quads in $Q_{\C}$. 
 For a BR $r$, $\|r\|\approx 4*k$, where $k$ is the number of quad-patterns in $r$.
 For a set of BRs $R$, $\|R\|$ is given as $\Sigma_{r \in R} \|r\|$.   
 For any quad-system $QS_{\C}$ $=$ $\langle Q_{\C},R \rangle$, its size $\|QS_{\C}\|=\|Q_{\C}\|+\|R\|$. 
\paragraph{Semantics}
In order to provide a semantics for enabling reasoning over a quad-system, 
we need to use a local semantics for each context to interpret the knowledge pertaining to it. 
Since the primary goal of this paper is a decision procedure for query answering over quad-systems based on 
forward chaining, we consider the following desiderata for the choice of the local semantics and its deductive machinery:
\begin{itemize}
 \item there exists 
 an operation $\lclosure()$ 
 that computes the deductive closure of a graph w.r.t to the local semantics using the local inference rules in a set LIR,
 \item each inference rule in LIR is \emph{range restricted}, i.e. non value-generating,
 \item given a finite graph as input, the $\lclosure()$ operation terminates with a finite graph as output in polynomial time
 whose size is polynomial w.r.t. to the input set. 
\end{itemize}
Some of the alternatives for the local semantics satisfying the above mentioned criterion
 are  Simple, RDF, RDFS~\cite{Hayes04rdfsemantics}, OWL-Horst~\cite{terHorst200579} etc. 
 Assuming that a local semantics has been fixed, for any context $c$, 
 we denote by $I^c=\langle\Delta^c, \cdot^c\rangle$ an interpretation structure for the local semantics, 
 where $\Delta^c$ is the interpretation domain, $\cdot^c$ the corresponding interpretation function.
 Also $\models_{\local}$ denotes the local satisfaction relation between a local interpretation structure and
 a graph.
Given a quad graph $Q_{\C}$, a \emph{distributed interpretation structure} is an 
indexed set $\I^{\C}=\{I^c\}_{c\in\C}$, 
 where $I^c$ is a local interpretation structure, for each $c \in \C$.
We define the satisfaction relation $\models$ between a distributed interpretation
structure $\I^{\C}$ and a quad-system $QS_{\C}$ as: 
\begin{definition}[Model of a Quad-System]\label{def:model-quad-system}
 A distributed interpretation structure $\I^{\C}=\{I^c\}_{c\in\C}$  satisfies
 a quad-system $QS_{\C}$ $=$ $\langle Q_{\C}$, $R\rangle$, in symbols $\I^{\C}$ $\models$ $QS_{\C}$, iff all the following conditions are satisfied:
 \vspace{-5pt}
\begin{enumerate}
\item\label{item:graphSatisfaction}  
$I^c \models_{\local} graph_{Q_{\C}}(c)$, 
for each $c \in \C$;
\item\label{item:rigid} 
$a^{c_i}=a^{c_j}$, for any $a \in \const$, 
$c_i, c_j \in \C$; 
 \item\label{item:BRSatisfaction} 
 for each BR $r \in R$ of the form~(\ref{eqn:intraContextualRuleWithQuantifiers}) and
for each $\sigma \in \var \rightarrow \Delta^{\C}$, where $\Delta^{\C}$ $=$ $\bigcup_{c\in \C} \Delta^c$, if 
\vspace{-5pt}
\[
I^{c_1} \models_{\local} t_1(\vec x, \vec z)[\sigma], ..., I^{c_n} \models_{\local} 
t_n(\vec x, \vec z)[\sigma], 
\]
then there exists a function $\sigma'\supseteq \sigma$, such that
\vspace{-5pt}
 \[
I^{c'_1} \models_{\local} t'_1(\vec x, \vec y)[\sigma'], ..., I^{c'_m} \models_{\local} t'_m(\vec x, \vec y)[\sigma'].
 \]
\end{enumerate}
\end{definition}
\vspace{-5pt}
\noindent
Condition \ref{item:graphSatisfaction} in the above definition ensures that for any model $\I^{\C}$
of a quad-graph, each $I^c \in \I^{\C}$ is a local model of
the set of triples in context $c$. Condition \ref{item:rigid} ensures that
any constant $c$ is rigid, i.e. represents the same resource across a quad-graph, irrespective of the context in which it occurs.
Condition \ref{item:BRSatisfaction} ensures that any model of a quad-system satisfies each BR in it.
Any $\I^{\C}$ such that $\I^{\C} \models QS_{\C}$ is said to be a model of $QS_{\C}$.
A quad-system $QS_{\C}$ is said to be \emph{consistent} if there exists a model $\I^{\C}$, such that 
$\I^{\C} \models QS_{\C}$, and otherwise said to be \emph{inconsistent}. 
For any quad-system $QS_{\C}=\langle Q_{\C},R \rangle$, it can be the case that $graph_{Q_{\C}}(c)$ is locally consistent,
i.e. there exists an $I^c$ such that $I^c \models_{\local} graph_{Q_{\C}}(c)$,
 for each $c \in \C$, whereas $QS_{\C}$ is not consistent. This is because the set of BRs $R$ adds more 
knowledge to the quad-system, and restricts the set of models that satisfy the quad-system. 
\begin{definition}[Quad-system entailment]
(a) A quad-system $QS_{\C}$ entails a quad $c\colon(s$, $p$, $o)$, in symbols $QS_{\C}$ $\models$ $c\colon(s,p,o)$,
iff for any distributed interpretation structure
$\I^{\C}$, if $\I^{\C} \models QS_{\C}$ then $\I^{\C} \models \langle \{c\colon(s,p,o)\}, \emptyset \rangle$.
 (b) A quad-system $QS_{\C}$ entails a quad-graph $Q'_{\C'}$, in symbols $QS_{\C} \models Q'_{\C'}$ iff $QS_{\C} \models c\colon(s,p,o)$
for any $c\colon(s,p,o) \in Q'_{\C'}$. 
(c)  A quad-system $QS_{\C}$ entails a BR $r$ iff 
for any 
$\I^{\C}$, if $\I^{\C} \models QS_{\C}$ then $\I^{\C} \models \langle \emptyset, \{r\} \rangle$. 
(d) For a set of BRs
$R$, $QS_{\C} \models R$ iff $QS_{\C} \models r$, for every $r \in R$. 
(e)  Finally, a quad-system $QS_{\C}$ entails another 
quad-system $QS'_{\C'}$ $=$ $\langle Q'_{\C'}, R' \rangle$, in symbols $QS_{\C} \models QS'_{\C'}$ iff $QS_{\C} \models Q'_{\C'}$ and
$QS_{\C} \models R'$. 
\end{definition}
\vspace{-5pt}
\noindent We call  the decision problems  corresponding to the entailment problems (EPs) in  (a), (b), (c), (d), and (e) as
\emph{quad EP, quad-graph EP, BR EP, BRs EP, and quad-system EP}, respectively.
\vspace{-5pt}
\vspace{-5pt}
\section{Query Answering on Quad-Systems}\label{sec:Query}
\noindent 
In the realm of quad-systems, the classical conjunctive queries or select-project-join queries are 
slightly extended to what we call \emph{Contextualized
Conjunctive Queries} (CCQs). A CCQ $CQ(\vec x)$ is an expression of the form:
\begin{equation}\label{eqn:CCQ}
 \exists \vec y \ q_1(\vec x, \vec y) \wedge ... \wedge q_p(\vec x, \vec y)
\end{equation}
where $q_i$, for $i=1,...,p$ are quad patterns over vectors of \emph{free variables} $\vec x$ 
and \emph{quantified variables} $\vec y$. A CCQ is called a boolean CCQ if it does not have any free variables.
With some abuse, we sometimes discard the logical symbols in a CCQ and consider it as a set of quad-patterns. 
For any CCQ $CQ(\vec x)$ and a vector $\vec a$ of constants such that  $\|\vec x\|=\|\vec a\|$, $CQ(\vec a)$ is boolean.  
A vector $\vec a$ is an \emph{answer} for a CCQ $CQ(\vec x)$ 
w.r.t. structure $\I^{\C}$, 
in symbols $\I^{\C} \models CQ(\vec a)$, iff there exists assignment $\mu\colon \{\vec y\}\rightarrow  \bn$ such that 
$\I^{\C} \models \bigcup_{i=1, \ldots, p} q_i(\vec a, \vec y)[\mu]$. 
A vector $\vec a$ is a \emph{certain answer} for a CCQ $CQ(\vec x)$ over a quad-system $QS_{\C}$, iff
$\I^{\C} \models CQ(\vec a)$, for every model $\I^{\C}$ of $QS_{\C}$.
 Given a quad-system $QS_{\C}$, a CCQ $CQ(\vec x)$, and a vector $\vec a$,
 decision problem of determining whether $QS_{\C} \models CQ(\vec a)$ is called the \emph{CCQ EP}.
It can be noted that the other decision problems over quad-systems, namely Quad/Quad-graph EP, BR(s) EP, Quad-system EP, are reducible to the 
CCQ EP (See Property~\ref{prop:QuadsystemEPsReductiontoCCQeps}). Hence, in this paper, we primarily focus on the CCQ EP. 
\subsection{dChase of a Quad-System}
\noindent In order to build a procedure for query answering over a quad-system, we employ what has been called
in the literature a \emph{chase}~\cite{JohnsonK84,AbiteboulHV95}. Specifically,  
we adopt notions of the \emph{restricted chase}  in Fagin et al.~\cite{Fagin05dataexchange} 
(also called non-oblivious chase).
In order to fit the framework of quad-systems, we extend the standard notion of chase to a 
 \emph{distributed chase}, abbreviated \emph{dChase}. In the following, we show how the dChase of a quad-system can be constructed.
 
For a set of  quad-patterns $S$ and a set of terms $T$, we define the relation $T$-\emph{connectedness} between quad-patterns in $S$ as the least relation with:
\begin{itemize}
 \item  $q_1$ and $q_2$ are $T$-connected, if $\const^{\var}(q_1) \cap \const^{\var}(q_2) \cap T \neq \emptyset$,
 for any two quad-patterns $q_1$, $q_2 \in S$,
 \item  if $q_1$ and $q_2$ are $T$-connected,  and $q_2$ and $q_3$ are $T$-connected, then  $q_1$ and $q_3$
 are also $T$-connected, for any quad-patterns $q_1, q_2, q_3 \in S$.
\end{itemize}
It can be noted that $T$-connectedness is an equivalence relation and partitions $S$ into a set of $T$-components 
(similar notion is called a \emph{piece} in Baget et al.~\cite{BagetLMS11}).
Note that for two distinct $T$-components $P_1$, $P_2$ of $S$, 
$\const^{\var}(P_1) \cap \const^{\var}(P_2) \cap T = \emptyset$.
For any BR $r=body(r)(\vec x, \vec z) \rightarrow head(r)(\vec x, \vec y)$, suppose $P_1, P_2, \ldots, P_k$ are the pairwise distinct 
$\{\vec y\}$-components of $head(r)(\vec x, \vec y)$, then $r$ can be replaced by the semantically equivalent
set of BRs $\{body(r)(\vec x$, $\vec z)$ $\rightarrow$ $P_1$, \ldots, $body(r)(\vec x$, $\vec z)$ $\rightarrow$ $P_k\}$ whose symbol size
is worst case quadratic w.r.t. the symbol size of $r$. Hence, w.l.o.g. we assume that for any BR $r$, the set of quad-patterns
$head(r)$ is a single component w.r.t. the set of existentially quantified variables in $r$. 

Considering the fact that the local semantics for contexts are fixed a priori (for instance RDFS), both the number of rules in the set of local inference rules LIR and 
the size of each rule in LIR can be assumed to be a constant. Note that each local inference rule is range restricted and does not contain
existentially quantified variables in its head. Any $ir \in$ LIR is of the form:
\begin{eqnarray}\label{eqn:LocalIR}
\forall \vec x \forall \vec z \ [t_1(\vec x, \vec z) \wedge \ldots \wedge t_k(\vec x, \vec z) \rightarrow t'_1(\vec x) 
],   
\end{eqnarray}
where $t_i(\vec x, \vec z)$, for $i=1$, \ldots, $n$ are triple patterns, whose variables are from $\{\vec x\}$ or $\{\vec z\}$, and 
$t'_1(\vec x)$ is a triple pattern, whose variables are from $\{\vec x\}$. Hence, for any quad-system $QS_{\C}$ $=$ $\langle Q_{\C}$, $R\rangle$
in order to accomplish the effect of local inferencing in each context $c\in \C$, for each $ir \in$ LIR of the form (\ref{eqn:LocalIR}), we could augment $R$ with a 
BR $ir_c$ of the form:
\begin{eqnarray}
\forall \vec x \forall \vec z \ [c\colon t_1(\vec x, \vec z) \wedge \ldots \wedge c\colon t_k(\vec x, \vec z) \rightarrow c\colon t'_1(\vec x) 
] \nonumber   
\end{eqnarray}
Since $\|$LIR$\|$ is a constant and the size of the augmentation is linear in $|\C|$, 
w.l.o.g we assume that the set $R$ contains a BR $ir_c$, for each $ir \in $ LIR, $c\in \C$.

Given a quad-system $QS_{\C}$, we denote by $\bn_{sk} \subseteq \bn$, a set of blank nodes 
called \emph{Skolem blank nodes}, such that  $\bn_{sk} \cap \bn(QS_{\C})=\emptyset$. For any BR $r$ $=$ $body(r)(\vec x$, $\vec z)$
$\rightarrow$ $head(r)(\vec x$, $\vec y)$ and 
an assignment $\mu\colon \{\vec x\}$ $\cup$ 
 $\{\vec z\}$ $\rightarrow$ $\const$,  the \emph{application} of $\mu$ on $r$ is defined as: 
 \[
apply(r, \mu)=head(r)[\mu^{ext(\vec y)}]  
 \]
where $\mu^{ext(\vec y)}\supseteq \mu$ such that  
 $\mu^{ext(\vec y)}(y_i)=\_\colon b$ is a fresh blank node from $\bn_{sk}$, for each $y_i \in \{\vec y\}$. 

We assume that there exists an order $\prec_l$ (for instance, lexicographic order) on the 
set of constants. 
We extend $\prec_l$ to the set of quads such that  for any two quads $c\colon(s,p,o)$ and $c'\colon(s',p',o')$, 
$c\colon(s,p,o) \prec_l c'\colon(s',p',o')$, iff $ c \prec_l c'$, or 
$c=c', s\prec_l s'$, or $c=c', s=s', p \prec_l p'$, or $c=c', s=s', p=p', o\prec_l o'$.
It can be noted that $\prec_l$ is a strict linear order over the set of all quads. 
For any finite quad-graph $Q_{\C}$, the $\prec_l$-greatest quad of $Q_{\C}$, denoted greatestQuad$_{\prec_l}(Q_{\C})$, is the quad $q \in Q_{\C}$ such that 
$q' \prec_l q$, for every other $q' \in Q_{\C}$. Also, the order $\prec_q$ is defined over the set of finite quad-graphs as follows:
for any two finite quad-graphs $Q_{\C}$, $Q'_{\C'}$, 
\begin{description}
 \item $Q_{\C}$ $\prec_q$ $Q'_{\C'}$, if (i) $Q_{\C}$ $\subset$ $Q'_{\C'}$; 
 \item $Q_{\C}$ $\prec_q$ $Q'_{\C'}$,  if (i) does not hold and   
(ii) greatestQu-\linebreak -ad$_{\prec_l}(Q_{\C}$ $\setminus$ $Q'_{\C'})$ $\prec_l$ 
 greatestQuad$_{\prec_l}(Q'_{\C'}$ $\setminus$ $Q_{\C})$; 
 \item $Q_{\C}$ $\not \prec_q$ $Q'_{\C'}$, if both (i) and (ii) are not satisfied;
\end{description}
A relation $R$ over a set $A$ is called a \emph{strict linear order} iff 
$R$ is irreflexive, transitive, and $R(a,b)$ or $R(b,a)$ holds, for every distinct $a, b \in A$.
\begin{property}\label{prop:precLinearOrder}
 Let $\mathcal{Q}$ be the set of all finite quad-graphs; $\prec_q$ is a strict linear order 
 over $\mathcal{Q}$.
\end{property}
\noindent Also, we now define in parallel the dChase of a quad system $QS_{\C}$ $=$ $\langle Q_{\C}$, $R\rangle$ and the level of a quad in the dChase
 of $QS_{\C}$ as follows: 
any quad in $Q_{\C}$ is of level 0. The level of a set 
of quads is the largest among levels of quads in the set. The level of any quad that results from the application of a BR $r$
w.r.t. an assignment $\mu$ is one more than the level of the set $body(r)[\mu]$, if it has not already been assigned a level.  
Let $\prec$ be an ordering on the quad-graphs such that for any two quad-graphs $Q'_{\C'}$ and $Q''_{\C''}$ of the same level,
$Q'_{\C'} \prec Q''_{\C''}$, iff $Q'_{\C'} \prec_q Q''_{\C''}$. 
For $Q'_{\C'}$ and $Q''_{\C''}$ of different levels, $Q'_{\C'} \prec Q''_{\C''}$, iff level 
of $Q'_{\C'}$ is less than level of $Q''_{\C''}$. It can easily be seen that $\prec$ is a strict linear
order over the set of quad-graphs. 
For any  BRs $r, r'$ and assignments $\mu, \mu'$ over $\var(body(r)), \var(body(r'))$, respectively,
$(r, \mu)$ $\prec$ $(r', \mu')$ iff $body(r)[\mu]$ $\prec$ $body(r')[\mu']$. 
For any quad-graph $Q'_{\C'}$, a set of BRs $R$, a BR $r\in R$, an assignment 
$\mu \in \var(body(r)) \rightarrow \const$, let $applicable_R$ be the least ternary predicate defined inductively as:
\begin{eqnarray}
&& applicable_R(r, \mu, Q'_{\C'})  \text{ holds, if } (a) \ body(r)[\mu] \subseteq Q'_{\C'}, 
head(r)[\mu''] \not \subseteq Q'_{\C'}, \forall \mu''  \nonumber \\
&&   \supseteq \mu,  \text{ and } (b) \not \exists r' \in R, \not \exists \mu'  \text{ such that }  r' \neq r \text{ or } 
\mu' \neq \mu  \text{ with } (r',\mu') \prec (r,\mu) \text{ and  }  \nonumber \\
&&   applicable_R(r', \mu', Q'_{\C'}) ; \nonumber
\end{eqnarray}

For any quad-system $QS_{\C}$ $=$ $\langle Q_{\C},R \rangle$, let

 $dChase_0(QS_{\C})$ $=$ $Q_{\C}$;  

 $dChase_{i+1}(QS_{\C})$ $=$
$ dChase_i(QS_{\C})$ $\cup$ $apply(r$, $\mu)$, if there exists $r$ $=$ $body(r)(\vec x$, $\vec z)$ 
 $\rightarrow$ $head(r)(\vec x$, $\vec y)$ $\in$ $R$, assignment $\mu\colon \{\vec x\}$ $\cup$ 
 $\{\vec z\}$ $\rightarrow$ $\const$ such that  $applicable_R(r$, $\mu$, $dChase_i(QS_{\C}))$;
 
$dChase_{i+1}(QS_{\C})$ $=$ $dChase_i(QS_{\C})$, \text{otherwise}; for any $i \in \mathbb{N}$.
The dChase of $QS_{\C}$, noted $dChase(QS_{\C})$, is given as:
\[
 dChase(QS_{\C})= \bigcup_{i\in \mathbb{N}} dChase_i(QS_{\C})
\]
Intuitively, $dChase_i(QS_{\C})$ can be thought of as the state of $dChase(QS_{\C})$ at the end of iteration $i$. 
It can be noted that, if there exists $i$ such that  $dChase_i(QS_{\C})$ $=$ $dChase_{i+1}(QS_{\C})$, 
then $dChase(QS_{\C})$ is equal to $dChase_i($ $QS_{\C})$. 
A model $\I^{\C}$ of a quad-system $QS_{\C}$ is called \emph{universal}~\cite{Deutsch:2008}, iff the following holds:
$\I^{\C}$ is a model of $QS_{\C}$, and for any model $\I'^{\C}$ of $QS_{\C}$ there exists a homomorphism from $\I^{\C}$ to $\I'^{\C}$.
\begin{theorem}\label{dChaseUniversalModelProperty}
 For any consistent quad-system $QS_{\C}$, the following holds:
 (i) $dChase(QS_{\C})$ is a universal model of $QS_{\C}$.\footnote{Though $dChase(QS_{\C})$ is not an 
interpretation in a strict model theoretic sense,  one can easily create the corresponding interpretation $\I_{dChase(QS_{\C})}$ $=$ $\{I^c$ $=$ $\langle \Delta^c$, 
$.^c \rangle\}_{c\in \C}$, s.t. for every $c \in \C$, $\Delta^c$ is equal to set of constants in $graph_{dChase(QS_{\C})}(c)$, and 
$.^c$ is s.t  $(s,p,o) \in graph_{dChase(QS_{\C})}(c)$ iff $(s^c, o^c) \in p^c$.}, and
(ii) for any boolean CCQ $CQ()$, $QS_{\C} \models CQ()$ iff there exists a map $\mu\colon \var(CQ) \rightarrow \const$ such 
that  $\{CQ()\}[\mu]\subseteq dChase(QS_{\C})$.
\end{theorem}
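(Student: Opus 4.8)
The plan is to prove the two parts in sequence, deriving (ii) from (i), with the backbone of (i) being a \emph{fairness} property of the construction: every pair $(r,\mu)$ that ever satisfies clause (a) of $applicable_R$ is eventually applied, so that in the limit $dChase(QS_{\C})$ no pair is applicable. Once fairness is available, the three conditions of Definition~\ref{def:model-quad-system} fall out, and the universal-model property is obtained by a level-by-level homomorphism construction into an arbitrary model.

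First I would establish the combinatorial lemma that each level of the dChase is finite. Level $0$ is $Q_{\C}$, and, inductively, the quads of level $\ell+1$ arise only from applications $apply(r,\mu)$ whose body $body(r)[\mu]$ lies at level $\le \ell$, hence is a subset of the (finite, by hypothesis) union of levels $0,\dots,\ell$; since $R$ is finite and a body can be realized by only finitely many assignments into a finite set, only finitely many quads are added. Consequently $\prec$ is well-founded on the bodies realized in $dChase(QS_{\C})$, and only finitely many pairs precede any given $(r,\mu)$. Since applying a pair makes clause (a) fail for it permanently (its head becomes present and monotonicity keeps it present), each pair is applied at most once; thus if some $(r,\mu)$ stayed applicable forever it would, after the finitely many smaller pairs are exhausted, become $\prec$-minimal and be applied, a contradiction. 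This is the step I expect to be the main obstacle, as it is where the level ordering, the least-fixpoint definition of $applicable_R$, and well-foundedness must be combined carefully.

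With fairness in hand I would verify the three conditions for the Herbrand-style structure $\I_{dChase(QS_{\C})}$ of the footnote. Condition~\ref{item:rigid} is immediate, since each constant is interpreted as itself. For Condition~\ref{item:graphSatisfaction}, fairness applied to the augmented rules $ir_c$ shows each $graph_{dChase(QS_{\C})}(c)$ is closed under LIR (any unsatisfied $ir_c$-instance would remain permanently applicable), hence equals its $\lclosure$, so $I^c \models_{\local} graph_{dChase(QS_{\C})}(c)$. For Condition~\ref{item:BRSatisfaction}, take any $r$ and $\sigma=\mu$ with the body satisfied; since $body(r)[\mu]$ is finite it lies in some $dChase_i(QS_{\C})$, and if no extension of $\mu$ satisfied the head then $(r,\mu)$ would satisfy clause (a) at every later stage, contradicting fairness, while the applying witness $\mu^{ext(\vec y)}$ furnishes the required $\sigma'$. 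Thus $\I_{dChase(QS_{\C})}\models QS_{\C}$.

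For universality I would build a homomorphism into an arbitrary model $\I'^{\C}$ by induction on levels. On level $0$, send each constant to its (rigid) interpretation in $\I'^{\C}$ and each blank node to a witness supplied by $\I'^{\C}\models_{\local}graph_{Q_{\C}}(c)$. At an application of $r$ with $\mu$, the body is already mapped into $\I'^{\C}$, so $h\circ\mu$ satisfies $body(r)$ there; since $\I'^{\C}\models r$ there is $\sigma'$ satisfying the head, and I extend $h$ on the fresh Skolem nodes $\mu^{ext(\vec y)}(y_i)$ by $\sigma'(y_i)$, preserving the homomorphism. The union over all stages gives a homomorphism from $dChase(QS_{\C})$ to $\I'^{\C}$, proving it universal and completing (i). Finally (ii) follows: the forward direction is immediate from (i), since $\I_{dChase(QS_{\C})}$ is a model and in it CCQ satisfaction coincides with the existence of a matching $\mu$ with $\{CQ()\}[\mu]\subseteq dChase(QS_{\C})$; the converse uses that boolean CCQs are existential-positive and hence preserved under the homomorphism guaranteed by universality, so a match in $dChase(QS_{\C})$ forces $CQ()$ to hold in every model, i.e.\ $QS_{\C}\models CQ()$.
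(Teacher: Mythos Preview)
Your proposal is correct and is precisely the standard restricted-chase argument one finds in the literature the paper cites. Note, however, that the paper does \emph{not} give its own proof of this theorem: immediately after the statement it writes that an analog is stated and proved in Calvanese et al.\ and that ``the proof in \cite{Calvanese2007} can easily be adapted to our case,'' and refers the reader there. So there is no paper-internal argument to compare against; what you have written is essentially the adaptation the authors allude to.

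A couple of remarks on where your write-up goes beyond what the cited references make explicit, and where it leans on assumptions the paper leaves implicit. First, your fairness lemma is the right place to spend effort: the paper's $\prec$ is a bespoke level-then-$\prec_q$ order, and termination of the restricted chase needs exactly the well-foundedness you extract from finiteness of levels; the cited database references typically assume a fair chase sequence rather than derive fairness from a concrete ordering, so this is a genuine (small) addition on your part. Second, your verification of Condition~\ref{item:graphSatisfaction} tacitly uses that a graph closed under the LIR rules is a local model of itself in the Herbrand sense; this holds for the intended semantics (Simple, RDF, RDFS, OWL-Horst) listed in the paper's desiderata, but is an assumption about $\models_{\local}$ rather than something provable from the stated desiderata alone, so it is worth flagging. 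Third, the footnote's per-context domains $\Delta^c$ may differ, so rigidity (Condition~\ref{item:rigid}) should be read with the convention that every constant is interpreted as itself globally; you assume this, and it is the only sensible reading, but the footnote is informal here. None of these affect correctness; they are just the places where the ``easily adapted'' in the paper hides details that you have supplied.
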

\noindent An anolog of the above theorem for DLs and Databases is stated and proved in \cite{Calvanese2007}. Since the proof in \cite{Calvanese2007} can easily be 
adapted to our case, we refer the reader to \cite{Calvanese2007} for the proof. 
\noindent We call the sequence $dChase_0(QS_{\C})$, $dChase_1(QS_{\C})$, ..., the \emph{dChase sequence} of $QS_{\C}$. 
It should be noted that at each iteration $i$, after the application of a BR, any new quad added is assigned a level, and as a result any subset of
the set of quads in $dChase_i(QS_{\C})$ has a level. This assignment of levels guarantees that 
$applicable_R(r$, $\mu$, $dChase_i(QS_{\C}))$ is either true or false, for any $r\in R$, assignment $\mu\colon \var(body(r))$ $\rightarrow$ $\const$.
The following lemma shows that in a dChase sequence of a quad-system, any dChase
iteration can be performed in time exponential w.r.t. the size of the largest BR.
\begin{lemma}\label{lemma:chaseSizeIncrease}
 For a quad-system $QS_{\C}=\langle Q_{\C}, R \rangle$, for any  $i \in \mathbb{N}^+$, the following holds:   
 (i)  $dChase_i(QS_{\C})$ can be computed in time 
$\bigO($ $|R|*\|dChase_{i-1}(QS_{\C})\|^{rs})$, where $rs=max_{r \in R} \|r\|$,
(ii)  $\|dChase_i(QS_{\C})\|$ $=$ $\bigO(\|dChase_{i-1}(QS_{\C})\|+\|R\|)$.
\end{lemma}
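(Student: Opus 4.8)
The plan is to treat the two claims separately, establishing the size bound (ii) first since it feeds directly into the running-time analysis of (i). Throughout I rely on the fact, implicit in the definitions above, that for a fixed $dChase_{i-1}(QS_{\C})$ the predicate $applicable_R$ singles out \emph{at most one} pair $(r,\mu)$ --- the $\prec$-minimal pair whose body matches but whose head is not yet realized --- so that a single dChase iteration appends exactly one rule application $apply(r,\mu)$.

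For (ii): by the defining recurrence, $dChase_i(QS_{\C}) = dChase_{i-1}(QS_{\C}) \cup apply(r,\mu)$ for the minimal applicable pair $(r,\mu)$ (or $dChase_i(QS_{\C}) = dChase_{i-1}(QS_{\C})$ when none exists, in which case the bound is trivial). Since $apply(r,\mu) = head(r)[\mu^{ext(\vec y)}]$ and $head(r)$ contains at most $\|r\|/4$ quad-patterns, the number of symbols added is $\bigO(\|r\|) = \bigO(rs) = \bigO(\|R\|)$, because $rs = \max_{r \in R}\|r\| \le \|R\|$ and each substituted constant or fresh Skolem blank node contributes a bounded number of symbols. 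Hence $\|dChase_i(QS_{\C})\| \le \|dChase_{i-1}(QS_{\C})\| + \bigO(\|R\|)$, which is exactly (ii).

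For (i): the work in iteration $i$ is to locate the minimal applicable pair. For each BR $r \in R$ I would enumerate all assignments of the variables of $r$ --- both the universally quantified $\vec x, \vec z$ and the existentially quantified $\vec y$ --- to constants occurring in $dChase_{i-1}(QS_{\C})$; any assignment useful for matching $body(r)$, or for witnessing an already-realized head, must map into $\const(dChase_{i-1}(QS_{\C}))$, whose cardinality is at most $\|dChase_{i-1}(QS_{\C})\|$. The crucial observation is that the total number of distinct variables in $r$ is bounded by the number of term positions, hence by $\|r\| \le rs$; consequently the number of such assignments is at most $\|dChase_{i-1}(QS_{\C})\|^{rs}$. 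A single joint enumeration over all of $r$'s variables simultaneously decides, for each body-assignment $\mu$, both whether $body(r)[\mu] \subseteq dChase_{i-1}(QS_{\C})$ and whether some extension $\mu'' \supseteq \mu$ realizes $head(r)[\mu'']$ in $dChase_{i-1}(QS_{\C})$. Each assignment is verified by a bounded number of quad-membership tests (constant time under a hash index), and the running $\prec$-minimum is maintained by comparing bodies via levels and $\prec_q$. Summing over the $|R|$ rules, and noting that the polynomial verification cost is dominated by the enumeration, yields $\bigO(|R| \cdot \|dChase_{i-1}(QS_{\C})\|^{rs})$.

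The main obstacle is exactly this running-time bound, and specifically keeping the exponent at $rs$ rather than $2rs$. The restricted-chase condition (a) forbids firing a rule whose head is already satisfiable by some extension $\mu''$ to the existential variables, so a careless analysis would enumerate body-matches (exponent $|\vec x|+|\vec z|$) and, for each, head-extensions (exponent $|\vec y|$), appearing to compound the two costs. The resolution --- enumerating all rule variables jointly, so that the two exponents \emph{add} rather than multiply, their sum being $|\vec x|+|\vec z|+|\vec y| \le rs$ --- is the key step. I must also argue that condition (b), the $\prec$-minimality requirement, contributes only polynomial overhead; this follows because $\prec$ is a strict linear order (Property~\ref{prop:precLinearOrder}) computable on finite quad-graphs in polynomial time, so the minimal applicable pair is tracked within the same pass over the enumeration.
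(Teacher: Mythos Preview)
Your proof is correct and follows essentially the same route as the paper: for (ii) you invoke the single-rule-application bound $\|head(r)[\mu^{ext(\vec y)}]\| = \bigO(\|R\|)$, and for (i) you jointly enumerate assignments to \emph{all} variables of each rule---body and head together---so that the total number of candidate bindings per rule is $\|dChase_{i-1}(QS_{\C})\|^{rs}$, then filter for body matches, unrealized heads, and $\prec$-minimality. The paper's proof proceeds by exactly this joint-binding enumeration (its set~$S$), followed by the same three filtering steps; your explicit remark that the joint enumeration is what keeps the exponent at $rs$ rather than $2rs$ makes the key idea more transparent than the paper's presentation, but the argument is the same.
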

\begin{proof}
(i)
We can first find, if there exists an $r$ among the set of BRs $R$, assignment $\mu$ such that  $applicable_R(r$, $\mu$, $dChase_{i-1}(QS_{\C}))$ holds, in the following naive way:
(1) bind the set of variables in all rules in $R$ with the set of constants in $dChase_{i-1}(QS_{\C})$. Let this set be called 
$S$. Note that $|S|=\bigO(|R|*\|dChase_{i-1}(QS_{\C}$ $)\|^{\|rs\|})$, where $rs=max_{r \in R} \|r\|$. Also, note that each of the binding in $S$ is of the form 
$body(r)(\vec x$, $\vec z)(\mu)$ $\rightarrow$ $head(r)(\vec x$, $\vec y)(\mu')$ ($\heartsuit$), where $r \in R$. (2) From the set $S$ we filter out every binding
of the form ($\heartsuit$) in which $\vec x[\mu]\neq \vec x[\mu']$. Let $S'$ be the resulting set after the above filtering operation. (3) From the set $S'$,
we now filter out all the bindings of the form ($\heartsuit$) with  $head(r)(\vec x$, $\vec y)(\mu')$ $\subseteq$ $dChase_{i-1}(QS_{\C})$, with resulting set $S''$. 
(4) If $S''=\emptyset$, then there is no $r \in R$, assignment $\mu$ such that  $applicable_R(r$, $\mu$, $dChase_{i-1}(QS_{\C}))$ is True.
Otherwise if $S'' \neq \emptyset$, then 
note that each binding of the form ($\heartsuit$) in $S''$ is such that  condition (a) of the true $applicable_R(r$, $\mu$, $dChase_{i-1}(QS_{\C}))$ is
satisfied. Now, we can sort $S''$ w.r.t. $\prec$ and select the least binding $b$ of the form ($\heartsuit$), so that  condition (b) in True condition of
$applicable_R()$ is satisfied for $b$. It can easily be seen that
$applicable_R(r$, $\mu$, $dChase_{i-1}(QS_{\C}))$ holds for the $r$, $\mu$ extracted from $b$. Since the size of each binding is at most $\|rs\|$, 
the operations (1)-(4) can be performed in time $\bigO(|R|*\|dChase_{i-1}(QS_{\C})\|^{rs})$. 
Since $dChase_i(QS_{\C})$ $=$ $dChase_{i-1}(QS_{\C})$ $\cup$ $head(r)[\mu]$, for $r$, $\mu$ with $applicable_R(r$, $\mu$, $dChase_{i-1}(QS_{\C}))$, 
$dChase_i(QS_{\C})$ can be computed in time $\bigO(\|dChase_{i-1}(QS_{\C}$ $)\|^{rs})$.

(ii) Trivially holds, since at worst
$dChase_i(QS_{\C})$ $=$  $dChase_{i-1}(QS_{\C})$ $\cup$ $head(r)[\mu]$, for $r \in R$.
\end{proof}
\begin{lemma}\label{lemma:BnodeUniquenessLemma}
 For any quad-system $QS_{\C}$, If $\_\colon b$ is a Skolem blank node in $dChase(QS_{\C})$, 
 generated by the application of assignment $\mu$ on $r$ $=$ $body(r)(\vec x$, $\vec z)$ $\rightarrow$ $head(r)(\vec x$, $\vec y)$, 
 with $\mu^{ext(\vec y)}(y_j)=\_\colon b$, $y_j \in \{\vec y\}$, then $\_\colon b$ is unique for $(r, y_j, \vec x[\mu^{ext(\vec y)}])$.
\end{lemma}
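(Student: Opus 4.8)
The plan is to recast the statement as the claim that, over the whole dChase sequence, a single BR $r$ can fire at most once for each value of its frontier tuple $\vec x[\mu]$; once this is established, uniqueness of $\_\colon b$ as a function of $(r, y_j, \vec x[\mu])$ follows at once. First I would record two bookkeeping facts about the construction. By freshness of Skolem blank nodes, every $\_\colon b \in \bn_{sk}$ occurring in $dChase(QS_{\C})$ is introduced at exactly one iteration of the dChase sequence, namely the unique step $i+1$ at which some $apply(r, \mu)$ is added; and within that single application the map $\mu^{ext(\vec y)}$ sends the pairwise disjoint variables $\{\vec y\}$ to pairwise distinct fresh blank nodes. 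Hence $\_\colon b$ already determines the triple $(r, y_j, \mu)$, and a fortiori $(r, y_j, \vec x[\mu])$; what remains is the converse, that the tuple $(r, y_j, \vec x[\mu])$ cannot be realized by two different blank nodes.

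Next I would argue this converse by contradiction. Suppose distinct blank nodes $\_\colon b_1$ and $\_\colon b_2$ carry the same tuple: say $\_\colon b_1$ is produced at step $i_1$ by $(r,\mu_1)$ with $\mu_1^{ext(\vec y)}(y_j)=\_\colon b_1$, and $\_\colon b_2$ at step $i_2$ by $(r,\mu_2)$ with $\mu_2^{ext(\vec y)}(y_j)=\_\colon b_2$, where $\vec x[\mu_1]=\vec x[\mu_2]$; without loss of generality $i_1<i_2$. The key step is to exhibit an extension of $\mu_2$ that already satisfies $head(r)$ inside $dChase_{i_2-1}(QS_{\C})$, thereby contradicting clause (a) of $applicable_R(r,\mu_2,dChase_{i_2-1}(QS_{\C}))$. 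Concretely, extend $\mu_2$ (defined on $\{\vec x\}\cup\{\vec z\}$) to $\mu''$ by setting $\mu''(y_k)=\mu_1^{ext(\vec y)}(y_k)$ for every $y_k\in\{\vec y\}$; this is well-defined since $\{\vec x\},\{\vec y\},\{\vec z\}$ are pairwise disjoint. Because $head(r)$ mentions only variables from $\{\vec x\}\cup\{\vec y\}$, and $\mu''$ agrees with $\mu_1^{ext(\vec y)}$ on $\{\vec x\}$ (as $\vec x[\mu_2]=\vec x[\mu_1]$) and on $\{\vec y\}$ by construction, we obtain $head(r)[\mu'']=head(r)[\mu_1^{ext(\vec y)}]$. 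But $head(r)[\mu_1^{ext(\vec y)}]=apply(r,\mu_1)\subseteq dChase_{i_1}(QS_{\C})\subseteq dChase_{i_2-1}(QS_{\C})$, hence $head(r)[\mu'']\subseteq dChase_{i_2-1}(QS_{\C})$ with $\mu''\supseteq\mu_2$, violating clause (a). Therefore $(r,\mu_2)$ is not applicable at step $i_2$, and no second such blank node can arise.

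The main obstacle I anticipate is being precise about variable scope so that the substitution identity $head(r)[\mu'']=head(r)[\mu_1^{ext(\vec y)}]$ is airtight: one must use that heads are blank-node-free and contain no $\vec z$-variables, and that the restricted (non-oblivious) applicability test quantifies over \emph{all} extensions $\mu''\supseteq\mu$, which is exactly the hook that lets the previously generated witnesses block a re-firing. The standing assumption that $head(r)$ is a single $\{\vec y\}$-component is convenient but not essential here; it merely reflects that $apply(r,\mu_1)$ is inserted atomically, so all $\mu_1^{ext(\vec y)}(y_k)$ co-occur. With the contradiction in hand, I would conclude that the assignment $(r,y_j,\vec x[\mu])\mapsto\_\colon b$ is single-valued, i.e.\ $\_\colon b$ is unique for $(r,y_j,\vec x[\mu^{ext(\vec y)}])$, which is the assertion of the lemma.
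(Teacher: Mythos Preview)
Your proposal is correct and follows essentially the same approach as the paper's proof: assume two distinct Skolem blank nodes arise from the same rule $r$, same existential variable $y_j$, and same frontier instantiation $\vec x[\mu]$, then use the earlier application's head instance to build an extension $\mu''\supseteq\mu_2$ witnessing that clause~(a) of $applicable_R$ fails at the later step, a contradiction. Your write-up is in fact more explicit than the paper's---you spell out the construction of $\mu''$ and justify the substitution identity $head(r)[\mu'']=head(r)[\mu_1^{ext(\vec y)}]$ via the variable-scope observations---whereas the paper simply asserts the existence of such a $\mu''$; the initial ``forward direction'' remark (that $\_\colon b$ determines $(r,y_j,\mu)$ by freshness) is extra context rather than part of the lemma's claim, but it does no harm.
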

\begin{proof}
 By contradiction, suppose if $\_\colon b$ is not unique for $(r, y_j, \vec x[\mu^{ext(\vec y)}])$, i.e.
 there exists $\_\colon b'$ $\neq$ $\_\colon b$ in $dChase(QS_{\C})$, with $\_\colon b'$ generated by $r$ such that  
 $\_\colon b'$ $=$ $\mu'^{ext(\vec y)}(y_j)$ and $\vec x[\mu^{ext(\vec y)}]$ $=$ $\vec x[\mu'^{ext(\vec y)}]$.
 W.l.o.g. suppose $\_\colon b$ was generated in an iteration $l\in \mathbb{N}$ and $\_\colon b'$ in an iteration $m>l$.
 This means that $head(r)(\vec x$, $\vec y)[\mu^{ext(\vec y)}]$ $\subseteq$ $dChase_l(QS_{\C})$, and
hence $head(r)(\vec x$, $\vec y)[\mu^{ext(\vec y)}]$ $\subseteq$ $dChase_{m-1}(QS_{\C})$. 
Also, since $\mu|_{\vec x}$ $=$ $\mu'|_{\vec x}$, there $\exists \mu''$ $\supseteq$ $\mu'$ s.t. $head(r)(\vec x$, $\vec y)[\mu'']$ $\subseteq$ $dChase_{m-1}(QS_{\C})$.
This means that (a) part of the function $applicable_R$ is false, for $applicable_R(r$, $\mu'$, $dChase_{m-1}(QS_{\C}))$ to be true, and 
hence $applicable_R(r$, $\mu'$, $dChase_{m-1}(QS_{\C}))$ 
is false. Hence, our assumption
 that $\_\colon b'$ $=$ $y_j[\mu'^{ext(\vec y)}]$ is false.
Hence, $\_\colon b$ is unique for $(r$, $y_j$, $\vec x[\mu^{ext(\vec y)}])$.
\end{proof}

\noindent Although we now know how to compute the dChase of a quad-system, which can be used for deciding CCQ EP, 
the following proposition reveals that for the class of quad-systems whose BRs are of the form 
(\ref{eqn:intraContextualRuleWithQuantifiers}), which we call \emph{unrestricted quad-systems}, 
the dChase can be infinite.
\begin{proposition}\label{prop:chase-infinite}
 There exists unrestricted quad-systems whose dChase is infinite.
\end{proposition}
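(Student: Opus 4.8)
The plan is to exhibit a concrete unrestricted quad-system whose dChase never stabilizes, by writing a single bridge rule whose head introduces a fresh existentially quantified variable that feeds back into the body pattern, so that each application produces a new Skolem blank node that in turn triggers another application. First I would fix a local semantics for which \lclosure{} adds nothing beyond the explicit triples (for instance, Simple semantics), so that all growth in the dChase comes from the bridge rule alone and not from local inference; this lets me isolate the source of infinity cleanly.

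The concrete construction I would use is a quad-system $QS_{\C} = \langle Q_{\C}, R\rangle$ over a single context $c$, with an initial quad-graph $Q_{\C} = \{c\colon(a, p, a)\}$ for some URI $a$ and property $p$, and the single bridge rule
\[
 r\colon \ \forall x \ [\, c\colon(x, p, x) \rightarrow \exists y \ c\colon(x, p, y) \wedge c\colon(y, p, y) \,].
\]
The idea is that the body matches any ``loop'' $(t,p,t)$, and the head creates a fresh successor $y$ together with a new loop $(y,p,y)$ on it, which then matches the body again. First I would trace the dChase sequence: $dChase_0 = \{c\colon(a,p,a)\}$; applying $r$ with $x \mapsto a$ yields a fresh Skolem blank node $\_\colon b_1$ and adds $c\colon(a,p,b_1)$ and $c\colon(b_1,p,b_1)$; the new loop on $b_1$ then makes $r$ applicable with $x \mapsto b_1$, yielding $\_\colon b_2$, and so on indefinitely.

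The key step is to argue that the dChase is genuinely infinite, i.e. that the restricted (non-oblivious) applicability condition $applicable_R$ does not eventually block every application. Here I would invoke Lemma~\ref{lemma:BnodeUniquenessLemma}: each Skolem blank node is uniquely determined by the triple $(r, y, \vec x[\mu^{ext(\vec y)}])$, and since the frontier value $x$ takes a distinct new value $b_i$ at each stage, no two stages generate the same blank node. Consequently, at stage $i$ the loop quad $c\colon(b_i, p, b_i)$ is present while no quad of the form $c\colon(b_i, p, y')$ with the required head pattern instantiated by $b_i$ has yet been derived, so condition (a) of $applicable_R$ holds for $x \mapsto b_i$ and the rule fires again. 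Thus $dChase_{i+1} \neq dChase_i$ for all $i$, so no fixpoint is ever reached and $dChase(QS_{\C}) = \bigcup_i dChase_i(QS_{\C})$ is infinite.

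The main obstacle I anticipate is precisely the restricted-chase bookkeeping: because the chase is non-oblivious, I must be careful that the newly created loop $c\colon(b_i,p,b_i)$ does not accidentally satisfy the head already (which would make condition (a) fail and halt the process), and that the strict-order scheduling in $applicable_R$ does not permanently defer the relevant applications. The former is handled by keeping the existential variable $y$ in a position ($x$'s successor) that cannot coincide with the frontier binding, so a genuinely new witness is forced each time; the latter is handled by noting that the ordering $\prec$ only determines which applicable pair fires first, and since there is always at least one applicable pair at every stage and each firing strictly enlarges the quad-graph, the sequence cannot become stationary. I would also remark that the construction is robust: consistency is immediate because the dChase itself yields a (universal) model by Theorem~\ref{dChaseUniversalModelProperty}, confirming we are in the consistent case to which that machinery applies.
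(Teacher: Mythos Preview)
Your construction does not work under the restricted (non-oblivious) chase the paper uses. With the rule
\[
r\colon \ c\colon(x,p,x)\ \rightarrow\ \exists y\ c\colon(x,p,y)\wedge c\colon(y,p,y)
\]
and initial graph $\{c\colon(a,p,a)\}$, consider the assignment $\mu\colon x\mapsto a$. Condition (a) of $applicable_R$ requires that $head(r)[\mu'']\not\subseteq dChase_0$ for \emph{every} extension $\mu''\supseteq\mu$. But taking $\mu''(y)=a$ gives $head(r)[\mu'']=\{c\colon(a,p,a),\,c\colon(a,p,a)\}=\{c\colon(a,p,a)\}\subseteq dChase_0$. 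Hence condition (a) fails, the rule is never applicable, and $dChase(QS_{\C})=\{c\colon(a,p,a)\}$ is finite. The sentence in your proposal claiming that ``the existential variable $y$ in a position ($x$'s successor) \ldots cannot coincide with the frontier binding'' is exactly the mistake: nothing in the head forces $y\neq x$, and the non-oblivious test is free to reuse $a$ as the witness.

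The paper's example avoids this by using two distinct property symbols: the body is $c\colon(x,\texttt{rdf:type},C)$ and the head is $c\colon(x,P,y)\wedge c\colon(y,\texttt{rdf:type},C)$. Since no $P$-triple exists initially (and, inductively, none with subject equal to the freshly created blank node), no extension of $\mu$ can already satisfy the head atom $c\colon(x,P,y)$, so condition (a) is met at every stage and the chase is genuinely infinite. A minimal repair of your example in the same spirit would be to change the head to $c\colon(x,q,y)\wedge c\colon(y,p,y)$ with $q\neq p$, so that the initial loop cannot double as the required successor edge.
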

\vspace{-2pt}
\begin{proof}
Consider an example of a quad-system $QS_c=\langle Q_c,r \rangle$, where
$Q_c=$ $\{c\colon (a$, $\texttt{rdf:type}$, $C)\}$, and the BR $r=$
$c\colon(x$, \texttt{rdf:type}, $C)$ $\rightarrow$ $\exists y$ $c\colon(x$, $P$, $y)$, $c\colon(y$, \texttt{rdf:type}, $C)$. 
The dChase computation starts with $dChase_0(QS_c)=\{c\colon(a$, \linebreak \texttt{rdf:type}, $C)\}$, now the 
rule $r$ is applicable, and its application leads to 
$dChase_1(QS_c)$ $=$ $\{c\colon(a$, $\texttt{rdf:type}$, $C)$, $c\colon(a,P,\_\colon b_1)$, $c\colon(\_\colon b_1$, 
$\texttt{rdf:type}$, $C)\}$, where $\_\colon b_1$ is a fresh Skolem blank node. It can be noted that $r$ is yet again
applicable on $dChase_1(QS_c)$, for $c\colon(\_\colon b_1$, \texttt{rdf:type}, $C)$, which leads to the generation of another Skolem blank node, 
and so on. Hence, $dChase(QS_c)$ does not have a finite fix-point, and $dChase(QS_c)$ is infinite. 
\end{proof}
\noindent A class $\mathfrak{C}$ of quad-systems is called a \emph{finite extension class} (FEC), iff for every
 member $QS_{\C} \in $ $\mathfrak{C}$, $dChase(QS_{\C})$ is a finite set. Therefore, the class of unrestricted quad-systems is 
 not a FEC. This raises the question if there are other approaches that can be used, for instance, a
similar problem of non-finite chase is manifested in description logics (DLs) with value creation, due to the presence of existential quantifiers, 
whereas the approaches like the one in Glimm et al.~\cite{GliHoLuSa-07} provides an algorithm for CQ entailment 
based on query rewriting.  
Theorem~\ref{theorem:undecidable} below establishes the fact that the CCQ EP for unrestricted quad-systems
is undecidable. Despite this, the reader should note that the following undecidability result and its proof is only 
provided for the sake of self containedness, 
and we do not claim the undecidability theorem nor its proof to be a novel contribution, as we will show in section \ref{sec:Comparison}, 
ternary $\forall\exists$ rule sets are polynomially reducible to unrestricted quad-systems. Hence, the undecidability results provided
in Baget et al.~\cite{BagetLMS11}, Kr{\"o}tzsch et al.~\cite{DL07ELwithRoleComposition}, or Beeri et al.~\cite{BeeriVImplicationProblem81} 
can trivially be applied in our setting to obtain the undecidability result for unrestricted quad-systems.
\begin{theorem}\label{theorem:undecidable}
 The  CCQ entailment problem over unrestricted quad-systems is undecidable.
\end{theorem}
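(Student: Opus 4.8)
The plan is to reduce from the halting problem of deterministic Turing machines (TMs) on empty input, which is undecidable. Given a deterministic TM $M=\langle Q,\Gamma,\delta,q_0,q_{acc}\rangle$, I would construct in polynomial time a quad-system $QS_{\C}^M=\langle Q_{\C},R\rangle$ together with a boolean CCQ $CQ()$ such that $QS_{\C}^M\models CQ()$ holds if and only if $M$ halts. Since BRs of the form~(\ref{eqn:intraContextualRuleWithQuantifiers}) allow the source and target contexts to coincide, the rules restricted to a single context $c$ behave exactly as ternary existential rules, which is all the encoding requires; the contextual machinery is inessential for the hardness, and the whole construction can live in one context.

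The encoding represents the space-time diagram of the run of $M$ as a grid of cells, using two binary relations carried as the predicate of triples: $\textsf{nextCell}$, linking a tape cell to its right neighbour at the same time step, and $\textsf{nextTime}$, linking a cell to the cell at the same tape position one step later. Each cell is given, via $\texttt{rdf:type}$ triples, the tape symbol it currently stores and, if the head scans it, the current control state of $M$. A first group of BRs seeds the initial configuration (head in state $q_0$ over a blank one-sided tape) and existentially generates the leftmost column and the cells of the first row. A second group encodes $\delta$: reading the bounded window of three consecutive cells $(i-1,t),(i,t),(i+1,t)$ whose types fix symbol and state, these rules assign the type of cell $(i,t+1)$. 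A final BR places a marker $\textsf{Halt}$ on any cell typed with $q_{acc}$, and $CQ()$ simply asks $\exists w\; c\colon(w,\texttt{rdf:type},\textsf{Halt})$.

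Correctness rests on Theorem~\ref{dChaseUniversalModelProperty}. Since $R$ contains only value-generating rules and no inconsistency-forcing constraints, and taking the local semantics to be one with no inconsistent graphs (e.g. Simple or RDF), $QS_{\C}^M$ is consistent, so $dChase(QS_{\C}^M)$ is its universal model and $QS_{\C}^M\models CQ()$ iff $CQ()$ maps homomorphically into $dChase(QS_{\C}^M)$. I would then show by induction on $t$ that the cell types assigned in the dChase faithfully reproduce the configuration of $M$ at time $t$; consequently the $\textsf{Halt}$ marker is produced iff $M$ reaches $q_{acc}$, i.e. iff $M$ halts. If $M$ does not halt the dChase is infinite (cf. Proposition~\ref{prop:chase-infinite}) but contains no $\textsf{Halt}$ cell, so the query fails, giving the reduction in both directions.

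I expect the main obstacle to be forcing the grid to be synchronised, i.e. to commute, using only existential (tuple-generating) bridge rules, which cannot equate the two diagonally reachable copies of position $(i+1,t+1)$. The clean way around this is a threaded construction: rather than generating each row independently, a BR extends row $t+1$ rightwards while walking along row $t$, so that whenever a cell $X=(i,t)$ already has its time-successor $X'=(i,t+1)$ and a right neighbour $Y=(i+1,t)$, the rule creates a single fresh cell $Y'=(i+1,t+1)$ and simultaneously posts $\textsf{nextTime}(Y,Y')$ and $\textsf{nextCell}(X',Y')$. This makes the grid commute by construction, so every position carries one consistently typed cell and the transition rules always read a coherent three-cell window. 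The residual care is to check that the application order enforced by $applicable_R$ and $\prec$ never permanently stalls generation of a needed cell, which holds because each generated cell eventually enables the pending extension and propagation rules.
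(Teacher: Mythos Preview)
Your reduction from the halting problem is sound: the threaded grid construction you describe does force the commuting square, the level-based order $\prec$ guarantees that every applicable instance is eventually fired (so no cell is permanently starved), and the appeal to Theorem~\ref{dChaseUniversalModelProperty} for both directions is the right tool. The sketch would go through once the boundary cases (leftmost column, initial row seeding) are spelled out.

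The paper, however, takes a rather different and more economical route: it reduces from \emph{non-emptiness of intersection of two context-free grammars}. A single existential schema
\[
c\colon(x,\texttt{rdf:type},C)\ \rightarrow\ \exists y\ c\colon(x,t_i,y)\wedge c\colon(y,\texttt{rdf:type},C)
\]
(one instance per terminal $t_i$) generates from a seed $a$ the full $|T|$-branching tree of all finite words; every production $v\rightarrow w_1\ldots w_n$ becomes the \emph{range-restricted} rule
\[
c\colon(x_1,w_1,x_2)\wedge\ldots\wedge c\colon(x_n,w_n,x_{n+1})\ \rightarrow\ c\colon(x_1,v,x_{n+1}),
\]
and the query is $\exists y\ c\colon(a,S_1,y)\wedge c\colon(a,S_2,y)$. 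The gain is that there is no two-dimensional grid to synchronise at all: existentials are used only to unfold a tree, and all the ``computation'' (grammar derivations) lives in Horn rules that merely label paths. Your construction, by contrast, pushes the hard work into the existential part and must therefore confront the commuting-diagram issue you correctly identified. Both proofs are valid; the paper's is shorter and sidesteps the only obstacle you worry about, while yours is the more classical TM simulation and makes the connection to general $\forall\exists$-rule undecidability more explicit.
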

\begin{proof}(sketch)
We show that the well known undecidable problem of 
non-emptiness of intersection of context-free grammars (CFGs) is reducible to the CCQ entailment
problem. Given two CFGs, $G_1=\langle V_1, T, S_1, P_1 \rangle$ and $G_2=\langle V_2, T, S_2, P_2 \rangle$, 
where $V_1, V_2$ are the set of variables, 
$T$ such that  $T \cap (V_1 \cup V_2)=\emptyset$ is the set of terminals. $S_1 \in V_1$ is the start symbol of $G_1$,
and $P_1$ are the set of PRs of the form $v \rightarrow \vec w$, where $v \in V$, $\vec w$ is a sequence of the form
$w_1...w_n$, where $w_i \in V_1 \cup T$.  $s_2, P_2$ are defined similarly.  Deciding whether the language
generated by the grammars $L(G_1)$ and $L(G_2)$ have non-empty intersection is known to be 
undecidable~\cite{Harrison-formal-language}.

Given two CFGs $G_1=\langle V_1, T, S_1, P_1 \rangle$ and $G_2=\langle V_2, T, S_2, P_2 \rangle$,
we encode grammars $G_1, G_2$ into a quad-system $QS_c=\langle Q_{c},R\rangle$, with only a single context identifier $c$.
Each PR $r$ $=$ $v$ $\rightarrow$ $\vec w$ $\in P_1$ $\cup$ $P_2$, with $\vec w$ $=$ $w_1w_2w_3..w_n$, is encoded as a BR of the form:
$c\colon(x_1,w_1,x_2), c\colon(x_2, w_2, x_3)$, ..., $c\colon(x_n,w_n,x_{n+1})$ 
$\rightarrow$ $c\colon(x_1,v,x_{n+1})$, 
where $x_1,..,x_{n+1}$ are variables. For each terminal symbol $t_i \in T$, $R$ contains
a BR of the form:
$ c\colon(x, \texttt{rdf:type}, C)$ $\rightarrow$ $\exists y$ \ $c\colon(x,t_i,y)$, 
 $c\colon(y$, \texttt{rdf:type}, $C) \nonumber$
and $Q_c$ is the singleton:
$\{$ $c\colon(a$, \texttt{rdf:type}, $C)\}$.
It can be observed that:
\begin{eqnarray}
 QS_c \models \exists y \ c\colon(a , S_1, y) \ \wedge \ c\colon(a, S_2, y) \Leftrightarrow \nonumber \\
L(G_1)\cap L(G_2)\neq \emptyset \nonumber
\end{eqnarray}
We refer the reader to Appendix for the complete proof.
\end{proof} 
Having shown the undecidability results of query answering of unrestricted quad-systems, the rest of the paper 
focuses on defining subclasses of unrestricted quad-systems for which query answering is decidable, and establishing their relationships 
with similar classes in the realm of $\forall\exists$ rules. While defining decidable classes for quad-systems, one 
mainly has two fundamentally distinct options: 
(i) is to define notions that solely use the structure/properties of the BR part, ignoring the quad-graph part, or (ii) to define
notions that takes into account both the BR and quad-graph part. 
The decidability notions which we define in section~\ref{sec:safe}, namely safety, msafety, and csafety 
belong to type (ii), as these techniques take into account the property of the dChase of a quad-system, which is determined by both 
the quad-graph and BRs of the quad-system. Whereas the ones which we define in section~\ref{sec:horn}, namely RR and restricted RR quad-systems 
fall into type (i), as the properties of BRs alone are used. 
With an analogy between a set of BRs and a set of $\forall\exists$ rules, 
and between a quad-graph and a set of $\forall\exists$ instances, the reader should note that
such distinctions can also be made for the decidability notions in the realm of $\forall\exists$ rule sets. Techniques such as
Weak acyclicity~\cite{Fagin05dataexchange}, 
Joint acyclicity~\cite{KR11jointacyc}, and Acyclic graph of rule dependencies~\cite{BagetLMS11} belong to type (ii), as these notions 
ignore the instance part. Whereas techniques such as model faithful acyclicity~\cite{Bernardo:dlacyclicity:2012} and model 
summarizing acyclicity~\cite{Bernardo:dlacyclicity:2012} are of type (i) as both the rules and instance part is considered. 
\section{Safe, Msafe and Csafe Quad-Systems: Decidable FECs}\label{sec:safe}
\noindent In the previous section, we saw that the query answering problem over unrestricted quad-systems is undecidable, in general. 
We will also see in section~\ref{sec:Comparison} that any quad-system is polynomially translatable to a $\forall \exists$ rule set, which is also a 
first order logic  theory. 
Hence, a possible solution approach is to translate to these more expressive languages, 
and apply well known tests (see related work for details on such tests) available in these languages to check if query answering is decidable. 
If the translated quad-system passes one of these tests, then query answering can be performed on this translation using available 
algorithms in these expressive languages. But such an approach is often discouraged, because of the non-applicability of the already available
tools and techniques available for reasoning over quads. Instead, we in the following define three classes of quad-systems, 
namely \textsc{safe}, \textsc{msafe} and \textsc{csafe}, that are FECs and for which query entailment is decidable. 
Finiteness/decidability is achieved by putting certain restrictions (explained below) on the blank nodes generated in the dChase. 

Recall that, for any quad-system $QS_{\C}$, the set of blank-nodes $\bn(dChase(QS_{\C}))$
in its $dChase(QS_{\C})$ not only contains blank nodes present in $QS_{\C}$, i.e. $\bn(QS_{\C})$,
but also contains Skolem blank nodes that are generated 
during the dChase construction process.
Note that the following holds: 
$\bn_{sk}(dChase(QS_{\C}))$ $=$ $\bn(dChase(QS_{\C}))$ $\setminus$ $\bn(QS_{\C})$.
We assume w.l.o.g. that for any set of BRs $R$,  any
BR in $R$ has a unique rule identifier, and we often write $r_i$ for the BR in $R$, whose identifier is $i$.
\begin{definition}[Origin RuleId/Vector]
For any Skolem blank node $\_\colon b$, generated in the dChase by 
the application of a BR $r_i=body(r_i)(\vec x, \vec z)\rightarrow head(r_i)(\vec x, \vec y)$ using
assignment $\mu\colon \{\vec x\} \cup \{\vec z\}\rightarrow \const$, i.e. $\_\colon b=\mu^{ext(\vec y)}(y_j)$, for some $y_j\in \vec y$,
we say that the origin ruleId of $\_\colon b$ is $i$, denoted  
$originRuleId(\_\colon b)$ $=$ $i$. Moreover $\vec w$ $=$ $\vec x[\mu]$ is said to 
be the origin vector of $\_\colon b$, denoted    
 $originVector(\_\colon b)$ $=$ $\vec w$.  
\end{definition}
\noindent As we saw in Lemma~\ref{lemma:BnodeUniquenessLemma}, any such Skolem blank node $\_\colon b$,
generated in the dChase can uniquely be represented by the expression $(i, j,\vec w)$, where $i$ is
rule id, $j$ is identifier of the existentially quantified variable $y_j$ in $r_i$ substituted by $\_\colon b$ during the application of $\mu$ on $r_i$.
Also in the above case, we denote relation between each constant
$k$ $=$ $\mu^{ext(\vec y)}(x_h)$, $x_h \in \{\vec x\}$, and $\_\colon b$ with the relation \emph{childOf}.  
Moreover, since children of a Skolem blank node can be Skolem blank nodes, which themselves can have children, one can naturally define
 relation \emph{descendantOf}=childOf$^+$ as the transitive closure of childOf. Note that 
 according to the above definition, `descendantOf' is not reflexive.   
In addition, we could keep  track of the set of contexts in which a blank-node was first generated, using the following notion:
 \begin{definition}[Origin-contexts] For any quad-system $QS_{\C}$ and for any Skolem blank node $\_\colon b$ 
 $\in$ $\bn_{sk}(dCha$- $se(QS_{\C}))$, the set of origin-contexts of $\_\colon b$  
 is given by
 $origin$- -$Contexts(\_\colon b)$ $=$ $\{c \ | \ \exists i. \ c\text{:}(s,p,o)$ $\in$ $dChase_i(QS_{\C})$, $s$ $=$ $\_\colon b$ or $p$ $=$ $\_ \colon b$ or 
 $o$ $=$ $\_\colon b,$ and
$\nexists j<i$ with $c'\text{:}(s',p',o')$ $\in$ $dChase_j(QS_{\C})$, $s'$ $=$ $\_\colon b$ or $p'$ $=$ $\_\colon b$ or 
$o'$ $=$ $\_\colon b$, \text{ for any } $c' \in \C \}$.
 \end{definition}
\noindent Intuitively, origin-contexts for a Skolem blank node $\_\colon b$ is the set of contexts in which triples containing
$\_\colon b$ are first generated, during the dChase construction. Note that there can be multiple contexts in which $\_\colon b$
can simultaneously be generated. By setting $originRuleId(k)$ $=$ $n.d.$, (resp. $originVector(k)$ $=$ $n.d.$, resp. $originContexts(k)$ $=$ $n.d.$,) 
where $n.d.$ is an ad hoc constant,  $\forall k \not \in$ $\bn_{sk}(dChase(QS_{\C}))$, 
we extend the definition of origin ruleId, (resp. origin vector, resp. origin-contexts) to all the constants in the dChase of a quad-system.
\begin{example}\label{eg:descendance}
 Consider the quad-system $\langle Q_{\C}, R \rangle$, where $Q_{\C}=\{c_1\colon(a,b,c)\}$.
  Suppose $R$ is the following set:
  \[  R= \left \{
  \begin{array}{r}
     c_1\colon (x_{11},x_{12},z_1) \rightarrow c_2\colon(x_{11},x_{12},y_1) \hspace{0.2cm} (r_1)\\
     c_2\colon (a,z_2,x_{22}) \rightarrow c_3\colon(a,x_{22},y_2) \hspace{0.85cm} (r_2) \\
     c_2\colon (z_{3},b,x_{32}) \rightarrow c_3\colon(b,x_{32},y_3) \hspace{0.9cm} (r_3) \\
     c_3\colon (a,z_{41},x_{41}), c_3\colon (b,z_{42},x_{42}) \hspace{1.6cm} \  \\
      \rightarrow c_2\colon(y_4, x_{41}, a),c_2\colon(y_4, x_{42}, b)\hspace{.5cm} (r_4)
  \end{array}
  \right \}\]
 Suppose that for brevity quantifiers have been omitted, and variables of the form $y_i$ or $y_{ij}$ are implicitly existentially quantified.
 Iterations during the dChase construction are:
  \vspace{-2pt}
 \begin{eqnarray}
&&   dChase_0(QS_{\C})=\{c_1\text{:}(a,b,c)\} \nonumber \\
&&  dChase_1(QS_{\C})=\{c_1\colon (a,b,c), c_2\colon (a,b,\_\colon b_1)\} \nonumber \\
&&  dChase_2(QS_{\C})=\{c_1\text{:}(a,b,c), c_2\colon(a,b,\_\colon b_1), 
   c_3\colon (a,\_\colon b_1,\_\colon b_2)\} \nonumber \\
&&   dChase_3(QS_{\C})=\{c_1\text{:}(a,b,c), c_2\colon(a,b,\_\colon b_1), c_3\colon (a,\_\colon b_1,\_\colon b_2),  \nonumber \\
&&   c_3\colon (b,\_\colon b_1,\_\colon b_3)\} \nonumber \\ 
&&        dChase_4(QS_{\C})=\{c_1\text{:}(a,b,c), c_2\colon(a,b,\_\colon b_1), c_3\colon (a,\_\colon b_1,\_\colon b_2),  c_3\colon (b,\_\colon b_1, \nonumber \\
&&   \_\colon b_3),   c_2\colon ( \_\colon b_4,\_\colon b_2,a), c_2\colon (\_\colon b_4,\_\colon b_3,b)\} \hspace{1.5cm}\nonumber \\
&&  dChase_5(QS_{\C})=dChase_4(QS_{\C}), \hspace{1.7cm} \nonumber
 \end{eqnarray}
 Also note: \\ 
 $originRuleId($ $\_\colon b_1)$ $=$ $1$, $originRuleId(\_\colon b_2)$ $=$ $2$, $originRuleId(\_\colon b_3)$ $=$
 $3$, $originRuleId($ $\_\colon b_4)$ $=$ $4$, \\
 $originVector($ $\_\text{ :} b_1)$ $=$ $\langle a$, $b\rangle$, $originVector(\_\text{ :} b_2)$ $=$ $originVector(\_\text{ :} b_3)$ $=$
 $\langle \_\colon b_1 \rangle$, $originVector(\_\text{ :} b_4)$ $=$ $\langle \_\text{ :} b_2$, $\_\text{ :} b_3\rangle$,
 \\
also $originContexts(\_\text{ :}b_1)$ $=$ $\{c_2\}$, $originConte$- $xts(\_\colon b_2)$ $=$ 
$originContexts($ $\_\colon b_3)$ $=$ $\{c_3\}$, 
$origin$- $Contexts(\_\colon b_4)$ $=$ $\{c_2\}$, \\
also $\_\colon b_1$ descendantOf $\_\colon b_3$, $\_\colon b_1$ descendantOf $\_\colon b_2$, 
$\_\colon b_2$ descendantOf $\_\colon b_4$,
$\_\colon b_3$ descendantOf $\_\colon b_4$, $\_\colon b_1$ descendantOf $\_\colon b_4$. 
\end{example}
\noindent  For any Skolem blank node $\_\colon b$ (in dChase), its descendant hierarchy can be analyzed using a \emph{descendance graph} 
$\langle V, E, \lambda_r, \lambda_v, \lambda_c\rangle$, which is a labeled graph rooted at $\_\colon b$, 
whose set of nodes $V$ are constants in the dChase, the set of edges $E$ is such that $(k, k') \in E$, iff $k'$ is a descendant of $k$. 
$\lambda_r$,  $\lambda_v$, $\lambda_c$
are node labeling functions, such that $\lambda_r(k)$ $=$ $originRuleId(k)$, $\lambda_v(k)$ $=$ $originVector(k)$,
 and  $\lambda_c(k)$ $=$ $originContexts(k)$, for any $k \in V$. 
The descendance graph for $\_\text{ :} b_4$ of Example~\ref{eg:descendance} is shown in Fig.\ref{fig:descendance2}.  
\begin{figure}[t]
\centering
\begin{tikzpicture}
[place/.style={circle,draw=black!50,fill=white!20,thick,
inner sep=0pt,minimum size=8mm}
]
\node(b4)[place,,label=above:$4\mathpunct{,}\langle\_\text{:}b_2\mathpunct{,}\_\text{:}b_3\rangle\mathpunct{,}\{c_2\}$] at ( 0,2) [place] {$\_\text{:}b_4$};
\node(b3) at (-1.5,0) [place, label=below:$
\hspace{-0.5cm}
\begin{array}{c}
\hspace{1cm} 3\mathpunct{,}\langle \_\text{:}b_1\rangle\mathpunct{,} \\
\hspace{0.25cm}\{c_3\}
\end{array}
$] {$\_\text{:}b_3$};
\node(b2)[place,,label=below:$
\begin{array}{c}
\hspace{-0.25cm} 2\mathpunct{,}\langle \_\text{:}b_1 \rangle \mathpunct{,} \\
\{c_3\}
\end{array}
$] at ( 1.5,0) [place] {$\_\text{:}b_2$};
\node(b1)[place,,label=below:$
\begin{array}{c}
1\mathpunct{,}\langle a\mathpunct{,} b \rangle \mathpunct{,}\\
\{c_2\}
\end{array}
$] at ( 0,-2) [place] {$\_\text{:}b_1$};
\node(a)[place,,label=below:] at ( -2,-4) [place] {$a$};
\node(b)[place,,label=below:] at ( 2,-4) [place] {$b$};
\draw [->] (b4.south) to node[auto,swap]{} (b3.north);
\draw [->] (b1.east) to  node[auto,swap]{} (b.west);
\draw [->] (b1.west) to  node[auto,swap]{} (a.east);
\draw [->] (b2.west) to  node[auto,swap]{} (b1.north);
\draw [->] (b2.east) to[out=290,in=80]  node[auto,swap]{} (b.north);
\draw [->] (b3.east) to  node[auto,swap]{} (b1.north);
\draw [->] (b3.west) to[out=250,in=100]  node[auto,swap]{} (a.north);
\draw [->] (b4.south) to node[auto,swap]{} (b1.north);
\draw [->] (b4.south) to node[auto,swap]{} (b2.north);
\draw [->] (b4.east) to[out=-20,in=45] node[auto]{} (b.east);
\draw [->] (b4.west) to[out=200,in=135] node[auto]{} (a.west);
\end{tikzpicture}
\caption{descendance graph of $\_\text{ :}b_4$ in example \ref{eg:descendance}. Note: n.d. labels not shown}
\label{fig:descendance2}
\end{figure}
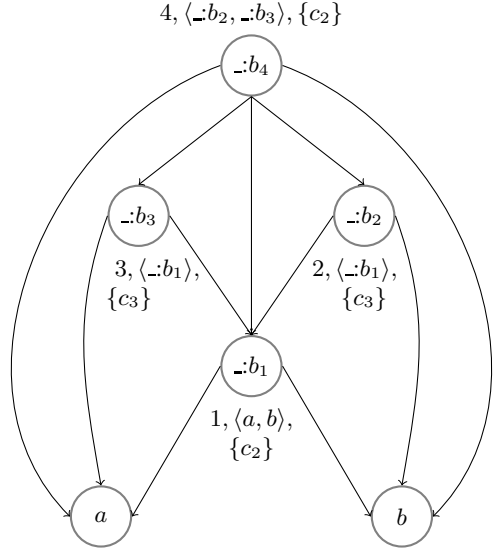
For any two vectors of constants $\vec v, \vec w$, we note $\vec v$ $\cong$ $\vec w$, iff 
there exists a bijection $\mu\colon\bn(\vec v)$ $\rightarrow$ $\bn(\vec w)$ such that 
$\vec w=\vec v[\mu]$.
\begin{definition}[safe, msafe, csafe quad-systems] A quad-system $QS_{\C}$ is said to be \emph{unsafe} (resp. \emph{unmsafe}, resp. \emph{uncsafe}), iff there 
exist Skolem blank nodes $\_\colon b$ $\neq$ $\_\colon b'$ in
$dChase(QS_{\C})$ such that $\_\colon b$ is a descendant of $\_\colon b'$, with $originRuleId(\_\colon b)$ $=$ $originRuleId(\_\colon b')$ 
and $originVector(\_\colon b)$ $\cong$ $originVector(\_\colon b')$
(resp. $origin$- $RuleId(\_\colon b)$ $=$ $originRuleId(\_\colon b')$, resp. $originContexts(\_\colon b)$ $=$ $originConte$- $xts(\_\colon b')$). 
A quad-system is \emph{safe} (resp. \emph{msafe}, resp. \emph{csafe}) iff it is not unsafe (resp. unmsafe, resp. uncsafe).
\end{definition}
\noindent Intuitively, safe, msafe and csafe quad-systems,  does not allow
 repetitive  generation of Skolem blank-nodes with a certain set of attributes in its dChase.  
The containment relation between the class of safe, msafe, and csafe quad-systems are established by the following 
theorem:
\begin{theorem}\label{theorem:StrictContainment}
 Let $\textsc{safe}, \textsc{msafe}$, and $\textsc{csafe}$ denote the class 
 of safe, msafe, and csafe quad-systems, respectively, then the following holds:
 \[
\textsc{csafe} \subset  \textsc{msafe} \subset \textsc{safe} 
 \]
\end{theorem}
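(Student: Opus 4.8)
The plan is to reduce each of the two inclusions to an implication between the defining ``un-'' conditions, since $\textsc{safe}$, $\textsc{msafe}$, $\textsc{csafe}$ are by definition the complements (within the class of all quad-systems) of the unsafe, unmsafe, uncsafe quad-systems. Thus $\textsc{msafe}\subseteq\textsc{safe}$ is equivalent to ``unsafe $\Rightarrow$ unmsafe'', and $\textsc{csafe}\subseteq\textsc{msafe}$ is equivalent to ``unmsafe $\Rightarrow$ uncsafe''. The first implication is immediate: the witnessing condition for unsafety ($\_\colon b$ descendant of $\_\colon b'$ with equal origin ruleIds \emph{and} congruent origin vectors) is a strengthening of the one for unmsafety, so the very same pair $\_\colon b,\_\colon b'$ witnesses unmsafety.

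The core of the argument is the second implication. First I would record the key structural fact that the origin-contexts of a Skolem blank node depend only on its origin ruleId together with the existentially quantified variable it instantiates: if $\_\colon b=\mu^{ext(\vec y)}(y_j)$ is produced by $r_i$, then the whole head $head(r_i)[\mu^{ext(\vec y)}]$ is added in a single iteration and, by the uniqueness of Skolem blank nodes (Lemma~\ref{lemma:BnodeUniquenessLemma}), $\_\colon b$ never occurred earlier; hence $originContexts(\_\colon b)=\{c\mid c\colon t\in head(r_i),\ y_j\in\var(t)\}$, a set determined solely by the pair $(i,j)$. Now suppose $QS_{\C}$ is unmsafe, witnessed by $\_\colon b$ (instantiating $y_j$ of $r_i$) that is a descendant of $\_\colon b'$ (instantiating $y_{j'}$ of $r_i$, same $i$). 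Let the application that produced $\_\colon b'$ use assignment $\mu'$, and set $\_\colon b'':=\mu'^{ext(\vec y)}(y_j)$, the ``sibling'' of $\_\colon b'$ instantiating the same variable $y_j$ as $\_\colon b$. Because $\_\colon b'$ and $\_\colon b''$ arise from the same application, they have identical origin vectors and hence, by the definition of $childOf$, identical sets of children; consequently the first step of any $childOf$-chain descending from $\_\colon b'$ is equally a $childOf$-step out of $\_\colon b''$, so $\_\colon b$ is a descendant of $\_\colon b''$ as well. Since $descendantOf$ is irreflexive we have $\_\colon b\neq\_\colon b''$, and since $\_\colon b$ and $\_\colon b''$ both instantiate $y_j$ of $r_i$ the structural fact gives $originContexts(\_\colon b)=originContexts(\_\colon b'')$. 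Thus the pair $\_\colon b,\_\colon b''$ witnesses uncsafety, proving unmsafe $\Rightarrow$ uncsafe and hence $\textsc{csafe}\subseteq\textsc{msafe}$.

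For the strictness of the two inclusions I would exhibit two quad-systems with finite dChases. To separate $\textsc{csafe}$ from $\textsc{msafe}$, I would take a single context $c$ and two \emph{distinct} rules $r_1,r_2$ chained so that the blank node produced by $r_1$ lies in the origin vector of the blank node produced by $r_2$ (e.g. $r_1\colon c\colon(x,p,x')\to\exists y\,c\colon(x,q,y)$ and $r_2\colon c\colon(x,q,x')\to\exists y\,c\colon(x',r,y)$, seeded by $c\colon(a,p,d)$); the resulting descendant pair has different origin ruleIds, so the system is msafe, but over a single context all origin-contexts coincide, so it is uncsafe. To separate $\textsc{msafe}$ from $\textsc{safe}$, I would use a \emph{single} rule that threads the previously created blank node through its frontier while consuming a finite chain of distinct URIs that eventually runs out (so the restricted-chase applicability test blocks further firings), arranging the two firings to have origin vectors of different blank-node cardinality (one all-URI, one containing the earlier blank node). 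This yields a same-rule descendant pair, so the system is unmsafe, yet the two origin vectors are non-congruent and no other Skolem descendant pair exists, so it is safe.

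The main obstacle is the second implication: the naive attempt to reuse the unmsafety witness directly fails whenever $\_\colon b$ and $\_\colon b'$ instantiate \emph{different} existential variables of $r_i$, because then their origin-contexts may genuinely differ. The sibling construction above is exactly what repairs this, exploiting that a single rule application simultaneously creates fresh blank nodes for \emph{all} of the rule's existential variables with a common origin vector; verifying carefully that $\_\colon b$ remains a descendant of the sibling $\_\colon b''$ (and that the two are distinct) is the delicate point. The strictness constructions are then routine, once one checks that the restricted (non-oblivious) chase indeed halts on the chosen inputs.
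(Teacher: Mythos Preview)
Your argument for the two inclusions is correct and follows exactly the paper's route: contrapositive reduction to ``unsafe $\Rightarrow$ unmsafe'' (trivial) and ``unmsafe $\Rightarrow$ uncsafe'' via the sibling blank node $\_\colon b'':=\mu'^{ext(\vec y)}(y_j)$, using that $\_\colon b'$ and $\_\colon b''$ share their set of children so that $\_\colon b$ is also a descendant of $\_\colon b''$, and that origin-contexts are determined by the pair $(i,j)$; the paper does the same, only splitting explicitly into the subcases $j=j'$ and $j\neq j'$ where you handle both at once. Your strictness witnesses differ from the paper's: your single-context two-rule example for $\textsc{csafe}\subsetneq\textsc{msafe}$ is simpler than the paper's four-rule Example~\ref{eg:descendance} and works for the same reason, while for $\textsc{msafe}\subsetneq\textsc{safe}$ you only sketch a plan (single rule, URI chain that runs out, two firings with origin vectors of blank-node cardinality $0$ and $1$)---this plan is sound and can indeed be realised by something like $c\colon(x_1,p,x_2)\wedge c\colon(x_2,q,z)\to\exists y\,c\colon(y,p,x_1)$ over $\{c\colon(a,p,b),\,c\colon(b,q,d),\,c\colon(a,q,e)\}$, whereas the paper supplies the fully worked (and considerably more intricate) Example~\ref{eg:strictContainmentMsafeToSafe}.
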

\begin{proof}
We first show \textsc{msafe} $\subseteq$ \textsc{safe}, by showing the inverse inclusion of their compliments, i.e.
 $\textsc{unsafe} \subseteq \textsc{unmsafe}$. Suppose a given quad-system $QS_{\C}$ is unsafe, then by definition
 its dChase contains two distinct Skolem blank nodes $\_\colon b$, $\_\colon b'$ such that $\_\colon b$ is a descendant of $\_\colon b'$, with
  $originRuleId(\_\colon b)$ $=$  $originRuleId(\_\colon b')$ and  $originVector($ $\_\colon b)$ $\cong$  $originVector($ $\_\colon b')$.
 But this will imply that 
 $originRuleId($ $\_\colon b)$ $=$  $originRuleId($ $\_\colon b')$. Hence, by definition, $QS_{\C}$ is unmsafe. 
 Hence $\textsc{unsafe} \subseteq \textsc{unmsafe}$ ($\dagger$).
 
 Now, we show that \textsc{csafe} $\subseteq$ \textsc{msafe} by showing
 $\textsc{unmsafe} \subseteq \textsc{uncsafe}$. Suppose a given quad-system $QS_{\C}=\langle Q_{\C}, R\rangle$ is unmsafe, then by definition
 its dChase contains two distinct Skolem blank nodes $\_\colon b$, $\_\colon b'$ such that $\_\colon b$ is a descendant of $\_\colon b'$, with
 $originRuleId(\_\colon b)$ $=$ $originRuleId(\_\colon b')$. But this implies that there exists a BR $r_i$ $=$ 
 $body(r_i)(\vec x$, $\vec z)$ $\rightarrow$ $head(r_i)(\vec x$, $\vec y)$, assignment $\mu$, (resp. $\mu'$,) s.t.
 $\_\colon b$ (resp. $\_\colon b'$) was generated in $dChase(QS_{\C})$ as result of 
 application of $\mu$ (resp. $\mu'$) on $r_i$. That is $\_\colon b$ $=$ $y_j[\mu^{ext(\vec y)}]$, and
 $\_\colon b'$ $=$ $y_k[\mu'^{ext(\vec y)}]$, where $y_j, y_k \in \{\vec y\}$. We have the following two
 subcases (i) $j=k$, (ii) $j\neq k$.
 Suppose (i) $j=k$, then it immediately follows that $originContexts(\_\colon b)$ $=$ $originContexts(\_\colon b')$. Hence, $QS_{\C}$ is uncsafe.
Suppose (ii) $j\neq k$, then
 by construction of dChase, on application 
 of $\mu'$ to $r_i$,  along with $\_\colon b'$,  there gets also generated a Skolem blank node
 $\_\colon b''$ $=$ $y_j[\mu'^{ext(\vec y)}]$, with $y_j \in \{\vec y\}$. Since $\_\colon b$ and $\_\colon b''$ are 
 generated by substitutions of the same variable $y_j \in \{\vec y\}$ of BR $r_i$, $originContexts(\_\colon b)$ $=$ $originContexts(\_\colon b'')$. 
 Also considering that childOf$(\_\colon b')$ $=$ childOf$(\_\colon b'')$ $=$ $\{\vec x[\mu'^{ext(\vec y)}]\}$,  
we can deduce that $\_\colon b$ is a descendant of $\_\colon b''$. Hence, 
 by definition, it holds that $QS_{\C}$ is uncsafe. 
 Hence $\textsc{unmsafe} \subseteq \textsc{uncsafe}$ ($\ddagger$). 
 
  From $\dagger$ and $\ddagger$, it follows that 
\textsc{csafe} $\subseteq$  \textsc{msafe} $\subseteq$ \textsc{safe}. 
 To show that the containments are strict, consider the quad-system $QS_{\C}$ in example~\ref{eg:descendance}. 
 By definition, $QS_{\C}$ is msafe, however uncsafe,
as the Skolem blank nodes $\_\colon b_1$, $\_\colon b_4$, which have the same origin contexts are s.t.
$\_\colon b_1$ is a descendant of $\_\colon b_4$. Hence, \textsc{csafe} $\subset$  \textsc{msafe}.
For \textsc{msafe} $\subset$  \textsc{safe}, the following example shows an instance of a quad-system that is
unmsafe, yet is safe. 
\end{proof}
 \begin{example}\label{eg:strictContainmentMsafeToSafe}  
 Consider the quad-system $QS_{\C}$ $=$ $\langle Q_{\C}$, $R\rangle$,
where $Q_{\C}$ $=$ $\{c_1\colon (a$, $b$, $c)$, $c_2\colon (c$, $d$, $e)\}$, $R$ is given by:
\begin{eqnarray}
     c_1\colon (x_{11},x_{12},x_{13}), c_2\colon(x_{13}, x_{14}, z_{1}) \rightarrow c_3\colon(y_1, 
     x_{11}, x_{12}), c_4\colon(x_{12},x_{13},x_{14}) \hspace{.4cm} (r_1) \nonumber \\
          c_3\colon (x_{21},a,x_{22}), c_4\colon(x_{22}, x_{23}, x_{24}) \rightarrow c_1\colon(x_{21}, a, 
     x_{22}), c_2\colon(x_{22}, x_{23}, x_{24}) \hspace{.4cm} (r_2) \nonumber \\
                c_3\colon (x_{21}, x_{22}, a), c_4\colon(a, x_{23}, x_{24}) \rightarrow c_1\colon(x_{21},  
     x_{22}, a), c_2\colon(a, x_{23}, x_{24}) \hspace{.3cm} (r_3) \nonumber \\
               c_3\colon (x_{21}, x_{22}, x_{23}), c_4\colon(x_{23}, a, x_{24}) \rightarrow c_1\colon(x_{21},  
     x_{22}, x_{23}), c_2\colon(x_{23}, a, x_{24}) \hspace{.3cm} (r_4) \nonumber 
 \end{eqnarray}
 \begin{eqnarray}
        c_3\colon (x_{21}, x_{22}, x_{23}), c_4\colon(x_{23}, x_{24}, a) \rightarrow c_1\colon(x_{21},  
     x_{22}, x_{23}), c_2\colon(x_{23}, x_{24}, a) \hspace{.3cm} (r_5) \nonumber
\end{eqnarray}
Note that for brevity quantifiers have been omitted, and variables of the form $y_i$ or $y_{ij}$ are implicitly existentially quantified.
 Iterations during the dChase construction are:
  \vspace{-2pt}
 \begin{eqnarray}
&&  dChase_0(QS_{\C})=\{c_1\text{:}(a,b,c), c_2\text{:}(c,d,e)\} \nonumber \\
&&  dChase_1(QS_{\C})= dChase_0(QS_{\C}) \cup   \{c_3\colon (\_\colon b_1,  
 a, b), c_4\colon (b,c,d)\} \nonumber \\
&&      dChase_2(QS_{\C})=dChase_1(QS_{\C}) \cup   \{c_1\colon (\_\colon b_1,  
 a, b), c_2\colon (b,c,d)\} \nonumber \\
&&    dChase_3(QS_{\C})= dChase_2(QS_{\C}) \cup   \{c_3\colon (\_\colon b_2,  
 \_\colon b_1, a), c_4\colon (a,b,c)\} \nonumber \\
&&      dChase_4(QS_{\C})=dChase_3(QS_{\C}) \cup   \{c_1\colon (\_\colon b_2,  
 \_\colon b_1, a), c_2\colon (a,b,c)\} \nonumber \\
&&      dChase_5(QS_{\C})= dChase_4(QS_{\C}) \cup   \{c_3\colon (\_\colon b_3,  
 \_\colon b_2, \_\colon b_1), c_4\colon (\_\colon b_1,a,b)\} \nonumber \\
&&       dChase_6(QS_{\C})=dChase_5(QS_{\C}) \cup   \{c_1\colon (\_\colon b_3,  
 \_\colon b_2, \_\colon b_1), c_2\colon (\_\colon b_1,a,b)\} \nonumber \\
&&      dChase_7(QS_{\C})= dChase_6(QS_{\C}) \cup   \{c_3\colon (\_\colon b_4,  
 \_\colon b_3, \_\colon b_2), c_4\colon (\_\colon b_2, \_\colon b_1, a)\} \nonumber \\ 
&&     dChase_8(QS_{\C})=dChase_7(QS_{\C}) \cup    \{c_1\colon (\_\colon b_4,  
 \_\colon b_3, \_\colon b_2), c_2\colon (\_\colon b_2, \_\colon b_1, a)\} \nonumber \\
&&   dChase_9(QS_{\C})= dChase_8(QS_{\C}) \cup   \{c_3\colon (\_\colon b_5,  
 \_\colon b_4, \_\colon b_3), c_4\colon (\_\colon b_3, \_\colon b_2, \nonumber \\
 &&  \_\colon b_1)\} \nonumber \\
&&   dChase(QS_{\C})=  dChase_9(QS_{\C}) \nonumber 
 \end{eqnarray}
 It can be seen that $\_\colon b_1$, $\_\colon b_2$, $\_\colon b_3$, $\_\colon b_4$, $\_\colon b_5$ form 
 a descendant chain, since $\_\colon b_i$ descendantOf $\_\colon b_{i+1}$, for each $i=1, \ldots, 4$.
 Also, $originRuleId(\_\colon b_i)$ $=$ $originRule$- $Id(\_\colon b_{i+1})$, for each $i=1, \ldots, 4$. 
 Hence it turns out that $QS_{\C}$ is unmsafe. However, it can be seen that $originVector(\_\colon b_1)$ $=$ $\langle a$, $b$, $c$, $d \rangle$, 
and $originVector(\_\colon b_2)$ $=$ $\langle \_\colon b_1$, $a$, $b$, $c \rangle$, and
 $originVector(\_\colon b_3)$ $=$ $\langle \_\colon b_2$, $\_\colon b_1$, $a$, $b\rangle$, and
 $originVector(\_\colon b_4)$ $=$ $\langle \_\colon b_3$, $\_\colon b_2$, $\_\colon b_1$, $a\rangle$,
and $originVector(\_\colon b_5)$ $=$ $\langle \_\colon b_4$, $\_\colon b_3$, $\_\colon b_2$, $\_\colon b_1\rangle$, and 
 $originVector(\_\colon b_i) \not \cong  originVector(\_\colon b_j)$, for $1 \leq$ $i$ $\neq$ $j$ $\leq$ $5$, 
 and hence, by definition, $QS_{\C}$ is safe with a terminating dChase. It can be noticed that 
 during each distinct application of $r_1$, the vector of constants bound to the vector of 
 variables $\langle x_{11}, \ldots, x_{14} \rangle$ are different w.r.t $\cong$. 
 Safe quad-systems in this way are capable of recognizing such positive cases of finite dChases (which are classified as
negative cases by msafe quad-systems) by also keeping track of the origin vectors of Skolem blank-nodes in their dChases. 
 \end{example}
\noindent The following property shows that for a safe quad-system, the descendance graph of any Skolem blank node in its dChase 
is a directed acyclic graph (DAG):
\begin{property}[DAG property]
 For a safe (csafe, msafe) quad-system $QS_{\C}$, and for any blank node $b \in \bn_{sk}(dChase(QS_{\C}))$, its descendance graph is
 a DAG.
\end{property}
\begin{proof}
  By construction, as there exists no descendant for any constant $k \in \const(QS_{\C})$, there cannot be any out-going
 edge from any such $k$. Hence, no member of $\const(QS_{\C})$ can be involved in cycles. Therefore, the only members that
 can be involved in cycles are the members of $\const(dChase(QS_{\C}))-\const(QS_{\C})$ $=$ $\bn_{sk}(dChase(QS_{\C}))$. But if there
 exists $\_\colon b \in \bn_{sk}(dChase(QS_{\C}))$, such that there exists a cycle through $\_\colon b$, then this 
 implies that $\_\colon b$ is a descendant of  $\_\colon b$. This would violate the prerequisites of being safe (resp. csafe, resp. msafe), 
 and imply that $QS_{\C}$ is unsafe (resp. uncsafe, resp. unmsafe), which is a contradiction.
\end{proof}

\begin{algorithm}
\scriptsize
\SetKwFunction{unRavel}{UnRavel}

\SetKwData{True}{True}
\SetKwData{dChase}{dChase}
\SetKwData{CQ}{CQ}
\SetKwData{False}{False}
\SetKwInOut{Input}{Input}\SetKwInOut{Output}{Output}
 \unRavel(Descendance Graph $G$)\\
\tcc{procedure to unravel, a descendance graph into a tree}
\Input{descendance graph $G=\langle V,E, \lambda_r, \lambda_v, \lambda_c \rangle$ }
\Output{A labeled Tree $G$}
\Begin{
$G=\langle V,E, \lambda_r, \lambda_v, \lambda_c \rangle$ $:=$ RemoveTranstiveEdges($G$)\;
\ForEach{Node $v_o \in$ preOrder($G)$}{
\If{($k=$ indegree($v_o$)$)>1$}{
$\{v_1,...,v_k\}:=$getFreshNodes();\tcc{each $v_i\not \in V$ is fresh}
\tcc{replace old node $v_o$ by the fresh nodes in $V$}
removeNodeFrom($v_o, V$)\; 
addNodesTo($\{v_1,...,v_k\}, V$)\;

\ForEach{$(v_o,v')\in E$}{\tcc{replace each outgoing edge from $v_o$ with a fresh outgoing edges from each fresh node $v_i$}
removeEdgeFrom($(v_o, v'), E$)\;
addEdgesTo($\{(v_1,v'),...,(v_k,v')\}, E$);
}
$i:=1$\;
\ForEach{$(v',v_o)\in E$}{\tcc{replace each incoming edge of $v_o$ with an incoming edge for a unique $v_i$}
removeEdgeFrom($(v',v_o), E$)\;
addEdgeTo($(v',v_i),E$)\;
$i$++;
}
}
}
\tcc{restrict node labels to the updated set of nodes in $V$}
 $\lambda_r:=\lambda_r|_{V}$, 
 $\lambda_v:=\lambda_v|_{V}$, 
 $\lambda_c:=\lambda_c|_{V}$\;
\Return $G$;
}
\caption{}
\label{alg:unRavel}
\end{algorithm}
\begin{figure}[t]
\centering
\begin{tikzpicture}
[place/.style={circle,draw=black!50,fill=white!20,thick,
inner sep=0pt,minimum size=8mm}
]
\node(b4)[place,,label=above:$4\mathpunct{,}\langle\_\text{:}b_2\mathpunct{,}\_\text{:}b_3\rangle\mathpunct{,}\{c_2\}$] at ( 0,2) [place] {$\_\text{:}b_4$};

\node(b3) at (-0.85,0) [place, label=below:$
\hspace{0.2cm}
\begin{array}{c}
3\mathpunct{,}\langle \_\text{:}b_1\rangle\mathpunct{,}\\
\{c_3\}
\end{array}
$] {$\_\text{:}b_3$};

\node(b2) at (0.85,0) [place, label=below:$
\hspace{0.2cm}
\begin{array}{c}
2\mathpunct{,}\langle \_\text{:}b_1\rangle\mathpunct{,}\\
\{c_3\}
\end{array}
$] {$\_\text{:}b_2$};

\node(b1)[place,,label=below:$
\begin{array}{c}
1\mathpunct{,}\langle a\mathpunct{,} b \rangle \mathpunct{,}\\
\{c_2\}
\end{array}
$] at ( 2,-2) [place] {$\_\text{:}b_1$};

\node(b1a)[place,,label=below:$
\begin{array}{c}
1\mathpunct{,}\langle a\mathpunct{,} b \rangle \mathpunct{,}\\
\{c_2\}
\end{array}
$] at ( -2,-2) [place] {$\_\text{:}b_1$};

\node(a)[place,,label=below:] at ( 1,-4) [place] {$a$};
\node(b)[place,,label=below:] at ( 3,-4) [place] {$b$};
\node(aa)[place,,label=below:] at ( -3,-4) [place] {$a$};
\node(bb)[place,,label=below:] at ( -1,-4) [place] {$b$};

\draw [->] (b4.south) to node[auto,swap]{} (b3.north);
\draw [->] (b1.east) to  node[auto,swap]{} (b.north);
\draw [->] (b1.west) to  node[auto,swap]{} (a.north);
\draw [->] (b2.east) to  node[auto,swap]{} (b1.north);
\draw [->] (b4.south) to node[auto,swap]{} (b2.north);
\draw [->] (b3.west) to  node[auto,swap]{} (b1a.north);
\draw [->] (b1a.east) to  node[auto,swap]{} (bb.north);
\draw [->] (b1a.west) to  node[auto,swap]{} (aa.north);
\end{tikzpicture}
\caption{Descendance graph of Fig. $\ref{fig:descendance2}$ unraveled into a tree. Note: n.d. labels are not shown}
\label{fig:descendanceTree}
\end{figure}
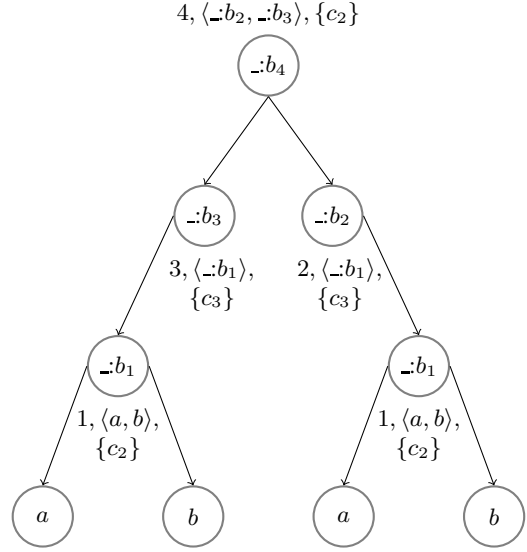

\noindent Since the descendance graph $G$ of any Skolem blank node $\_\colon b \in \bn_{sk}(dChase(QS_{\C}))$ is such that 
$G$ is rooted at $\_\colon b$ and is acyclic, any directed path from $\_\colon b$ terminates at some node. Hence, one can use a tree traversal technique, such as 
preorder (visit a node first and then its children) to sequentially traverse nodes in $G$. 
Algorithm \ref{alg:unRavel} takes a descendance graph 
$G$ and unravels it into a tree. The algorithm first removes all the transitive edges from $G$, i.e. if there are $v,v' \in V$ with $(v,v') \in E$ and 
$G$ contains a path of length greater than 1 from $v$ to $v'$, then it removes $(v,v')$. Note that, in the resulting graph, 
the presence of a path from $v$ to $v''$ still gives us the information that $v''$ is a descendant of $v$. 
The algorithm then traverses the graph in preorder fashion, as it encounters a node $v$, if $v$ has 
an indegree $k$ greater than one, it replaces $v$ with $k$ fresh nodes $v_1,...,v_k$, and distributes the set of edges 
incident to $v$ across $v_1,...,v_k$, such that (i) each $v_i$ has at-most one incoming edge (ii) all the edges incident to $v$
are incident to some $v_i$, $i \in \{1,\ldots, k\}$. Outgoing edges of $v$ are copied for each $v_i$. Hence,
after the above operation each $v_i$  has an indegree $1$, whereas outdegree of $v_i$ is same as the outdegree of $v$, 
$i \in \{1,\ldots, k\}$. Hence, after all the nodes are visited, every node except the root in the new graph $G$ has an indegree $1$. 
$G$ is still rooted, connected, acyclic,
and is hence a tree. The algorithm terminates as there are no cycles in the graph, and at some point reaches a node with no children.
For instance, the unraveling of the descendance graph of $\_\text{ :}b_4$ in
Fig. \ref{fig:descendance2} is shown in Fig. \ref{fig:descendanceTree}.
The following property holds for any Skolem blank node of a safe quad-system.
\begin{property} \label{property:descendanceTree}
For a safe quad-system $QS_{\C}$ $=$ $\langle Q_{\C}, R\rangle$, and any Skolem blank node
in $dChase(QS_{\C})$, the unraveling (Algorithm \ref{alg:unRavel}) of its descendance graph 
results in a tree $t$ $=$ $\langle V$, $E$, $\lambda_r$, $\lambda_v$, $\lambda_c\rangle$ s.t.:
 \begin{enumerate}
  \item any leaf node of $t$ is from the set $\const(QS_{\C})$,
  \item any non-leaf node of $t$ is from  $\bn_{sk}(dChase(QS_{\C}))$,
  \item $order(t)$ $\leq$ $w$, where $w$ $=$ $max_{r\in R} |\mathit{fr}(r)|$,
    \item there cannot be a path between $b \neq b'\in V$, with $\lambda_r(b)$ $=$ $\lambda_r(b')$
  and $\lambda_v(b) \cong \lambda_v(b')$, 
     \item there cannot be a path between $b \neq b'\in V$, with $\lambda_r(b)$ $=$ $\lambda_r(b')$, if $QS_{\C}$ is also msafe,
    \item there cannot be a path between $b \neq b'\in V$, with $\lambda_c(b)$ $=$ $\lambda_c(b')$, if $QS_{\C}$ is also csafe. 
 \end{enumerate}
\end{property}
\begin{proof}
\begin{enumerate}
 \item 
 Any node $n$ in the descendance graph is such that $n$ $\in$ $\const(dChase(QS_{\C}))$, and $\const(dChas$- $e(QS_{\C}))$ $=$
 $\const(QS_{\C})$ $\uplus$ $\bn_{sk}(dChase(QS_{\C}))$. Since any member $m \in \bn_{sk}(dChase(QS_{\C}))$ is generated from an application
 of a BR with an assignment $\mu$ such that its frontier variables are assigned by $\mu$ with a set of constants, 
 $m$ has at-least one child. But, since $n$ is a leaf node, $n \in \const(QS_{\C})$.
 
 \item Since no member $m \in \const(QS_{\C})$ can have descendants and any non-leaf node has children, 
 $m$ cannot be a non-leaf node. Hence, non-leaf nodes must be from $\bn_{sk}(dChase($ $QS_{\C}))$.
 
 \item The order of $t$ is the maximal outdegree among the nodes of $t$, 
 and outdegree of a node is the number of children it has. Since any node in $t$ with non-zero outdegree 
 is a Skolem blank-node $\_\colon b$ generated by 
 application of an assignment $\mu$ to $r$ $=$ $body(r)(\vec x$, $\vec z)$ $\rightarrow head(r)(\vec x$, $\vec y)\in R$,
 the number of children $\_\colon b$ has equals $\|\vec x\|$. Hence the order of $t$ is bounded by $w$. 
 
 \item Since any path from $b$ to $b'$ implies that $b'$ is a descendant of $b$, it must be the case that
 $\lambda_r(b) \neq \lambda_r(b')$ or $\lambda_v(b) \not \cong \lambda_v(b')$,  
 otherwise safety condition  would be violated.
 \item Similar as above, immediate by definition.
 \item Similar as above, immediate by definition.
 \end{enumerate}
\end{proof}
\noindent The property above is exploited to show that there exists a finite bound in the dChase size and its 
computation time. 
\begin{lemma}\label{lemma:safeChaseComputationTime}
 For any safe/msafe/csafe quad-system $QS_{\C}$ $=$ $\langle Q_{\C}$, $R\rangle$, the following holds: 
 (i) the dChase size $\|dChase(QS_{\C})\|$ $=$ $\bigO(2^{2^{\|QS_{\C}\|}})$,  
 (ii) $dChase(QS_{\C})$ can be computed in \textsc{2EXPTIME},
 (iii) if $\|R\|$ and the set of schema triples in $Q_{\C}$ is fixed
 to a constant, then $\|dChase(QS_{\C})\|$ is a polynomial in $\|QS_{\C}\|$ and
can be computed in \textsc{PTIME}.
\end{lemma}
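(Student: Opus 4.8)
The plan is to bound $\|dChase(QS_{\C})\|$ by counting the Skolem blank nodes it contains, and then to convert that count into time bounds through Lemma~\ref{lemma:chaseSizeIncrease}. First I would observe that every quad of $dChase(QS_{\C})$ is built from constants in $\const(dChase(QS_{\C}))$ over the $|\C|$ contexts, so if $n=|\const(dChase(QS_{\C}))|$ then $\|dChase(QS_{\C})\| = \bigO(|\C|\cdot n^{3})$; since $n = |\const(QS_{\C})| + |\bn_{sk}(dChase(QS_{\C}))|$, it suffices to bound the number $S$ of Skolem blank nodes. By Lemma~\ref{lemma:BnodeUniquenessLemma} each such blank node is uniquely named by a triple $(i,j,\vec w)$, where $i$ is a rule id, $j\le m:=\max_{r\in R}|\{\vec y\}|$ indexes an existential variable, and $\vec w=originVector(\_\colon b)$; hence $S$ is at most $|R|\cdot m$ times the number of distinct realized origin vectors.

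The core of the argument is a counting step resting on Property~\ref{property:descendanceTree}. I would stratify the blank nodes by nesting depth $d(\_\colon b)$, the length of the longest descendant chain from $\_\colon b$ down to a constant of $\const(QS_{\C})$. Property~\ref{property:descendanceTree} supplies two structural facts: the descendance tree branches at most $w:=\max_{r\in R}|\mathit{fr}(r)|$ ways (item 3), and no label may repeat along a root-to-leaf path, so the depth is bounded by the number of distinct labels --- $|R|$ for msafe (item 5), at most $|R|\cdot m$ distinct origin-context sets for csafe (item 6), and at most $|R|\cdot(|\const(QS_{\C})|+w)^{w}$ distinct $(\textrm{rule id},\textrm{origin-vector-type up to }\cong)$ pairs for safe (item 4). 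Writing $N_{\le d}$ for the number of constants of nesting depth $\le d$, and noting that the origin vector of a depth-$(d{+}1)$ node is a tuple of length $\le w$ over the depth-$\le d$ constants, I obtain the recurrence $N_{\le d+1}\le N_{\le d}+|R|\cdot m\cdot N_{\le d}^{\,w}$ with $N_{\le 0}=|\const(QS_{\C})|$. Iterating it up to the maximum depth $D$ gives $N_{\le D}=\bigO\big((|\const(QS_{\C})|\cdot c)^{\,w^{D}}\big)$ for a constant $c$; substituting $w,|R|,m,|\const(QS_{\C})|=\bigO(\|QS_{\C}\|)$ together with a \emph{polynomial} depth bound $D$ then collapses the double recurrence to the stated $\bigO(2^{2^{\|QS_{\C}\|}})$, establishing (i).

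For (ii), each dChase iteration costs $\bigO(|R|\cdot\|dChase_{i-1}(QS_{\C})\|^{rs})$ with $rs=\max_{r\in R}\|r\|=\bigO(\|QS_{\C}\|)$ by Lemma~\ref{lemma:chaseSizeIncrease}(i), and the number of iterations is at most the number of quads, which is double exponential by (i). Since raising a double-exponential to the power $rs$ keeps it double exponential, and a double-exponential number of double-exponential-cost steps still runs in double-exponential time, $dChase(QS_{\C})$ is computable in \textsc{2EXPTIME}. For (iii), fixing $\|R\|$ and the schema triples of $Q_{\C}$ turns $w$, $|R|$, $m$, $rs$ into constants and, for msafe and csafe, the maximum depth $D$ as well; the recurrence then reads $N_{\le d+1}\le c\,N_{\le d}^{\,w}$ with constant $c,w$, iterated a constant number $D$ of times, so $N_{\le D}=\bigO(N_0^{\,w^{D}})$ with $w^{D}$ constant, i.e.\ polynomial in $N_0=|\const(QS_{\C})|=\bigO(\|QS_{\C}\|)$. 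Hence $\|dChase(QS_{\C})\|$ is polynomial in $\|QS_{\C}\|$ and, since each of polynomially many iterations costs $\bigO(|R|\cdot\|dChase_{i-1}(QS_{\C})\|^{rs})$ with constant $rs$, the dChase is computable in \textsc{PTIME}.

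The main obstacle is the depth bound underlying the core counting: everything hinges on showing that the maximum nesting depth $D$ is \emph{small} --- polynomial in general and constant once the schema is fixed --- so that the two nested applications of the per-level blow-up $N_{\le d+1}\approx N_{\le d}^{\,w}$ bottom out at a tower of height exactly two (and at a polynomial in the data-complexity regime). This is precisely where the ``no repeated label on a path'' clauses of Property~\ref{property:descendanceTree} do the real work. The csafe and msafe cases are immediate, since their labels (origin-context sets, respectively rule ids) range over polynomially many, indeed schema-bounded, values. The safe case is the delicate one, because its label is the full $(\textrm{rule id},\textrm{origin-vector-type up to }\cong)$ pair, whose range involves origin vectors over the data constants; controlling this so that $D$ stays within the regime that yields the claimed $\bigO(2^{2^{\|QS_{\C}\|}})$ and \textsc{PTIME} data complexity is the step I would scrutinise most carefully.
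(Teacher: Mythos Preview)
Your overall strategy---count Skolem blank nodes via descendance trees, bound the branching factor by $w=\max_{r\in R}|\mathit{fr}(r)|$, bound the depth via the no-repeated-label clauses of Property~\ref{property:descendanceTree}, and then feed the resulting constant bound into Lemma~\ref{lemma:chaseSizeIncrease}---is exactly the paper's approach. Your recurrence $N_{\le d+1}\le N_{\le d}+|R|\,m\,N_{\le d}^{\,w}$ is a clean way to phrase the same tree-counting the paper does, and parts (ii) and (iii) go through just as in the paper once (i) is in hand.

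The one genuine gap is precisely the point you flag. For csafe and msafe the depth bound is immediate (at most $\|R\|$ distinct origin-context sets, at most $|R|$ rule ids), and your (iii) is complete for those two classes. For \emph{safe}, however, item~4 of Property~\ref{property:descendanceTree} only forbids repetition of $(\text{rule id},\text{origin-vector type up to }\cong)$ along a path; the number of such pairs is $|R|\cdot(|\const(QS_{\C})|+w)^{w}$, which is exponential in $\|QS_{\C}\|$. Plugging an exponential $D$ into your recurrence blows the bound past double exponential, and in (iii) the depth would depend on the data, breaking PTIME. So as written your argument establishes (i)--(iii) only for msafe/csafe; for safe it stops at the point you yourself identify.

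The paper closes this gap with an additional (admittedly informal) argument: if the same rule $r$ fires twice along a descendant chain with non-isomorphic origin vectors, then the intervening dChase steps must contain applications of \emph{other} BRs that transform and back-propagate the constants so that $r$ becomes applicable again on a genuinely new vector (cf.\ the role of $r_2$--$r_5$ in Example~\ref{eg:strictContainmentMsafeToSafe}). The number of such ``filter'' rule applications available is polynomial in $\|R\|$, which in turn bounds the number of times any single rule can reappear with a new origin-vector type along one path. This yields a depth bound polynomial in $\|R\|$ (hence constant when $\|R\|$ is fixed), after which your recurrence gives the double-exponential bound for (i) and the polynomial bound for (iii) also in the safe case. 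Supplying this step is exactly what your proof needs.
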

\begin{proof}
The proofs are provided for safe quad-systems, but since \textsc{csafe} $\subset$ \textsc{msafe} $\subset$ \textsc{safe} and since we 
are giving upper bounds, they also propagate trivially to msafe and csafe quad-systems.

(i)
For any Skolem blank node in $dChase(QS_{\C})$,  the size of its originVector is upper bounded
by $w$ $=$ $max_{r \in R} |\mathit{fr}(r)|$. If $S$ is the set of all origin vectors of blank-nodes in $dChase(QS_{\C})$, then cardinality of the set 
$S'$ $=$ $S\setminus$ $\cong$ is upper bounded by $(|\uri(QS_{\C})|$ $+$ $|\lit(QS_{\C})|+w)^w$, which means that 
$|S'|$ $=$ $\bigO(2^{\|QS_{\C}\|})$. Also, since the set of origin ruleId labels, $Rids$, can at most be $|R|$, 
the cardinality of the set $Rids\times S'$ $=$ $\bigO(2^{\|QS_{\C}\|})$.   
 For the descendance tree $t$ of any Skolem blank node of $dChase(QS_{\C})$, 
since there cannot be paths in $t$ between distinct $b$ and $b'$, such that $originRuleId(b)$ $=$ $originRuleId(b')$
and $originVector(b)$ $\cong$ $originVector(b')$, the length of any such path is upper bounded by 
$|Rids\times S'|$ $=$ $\bigO(2^{\|QS_{\C}\|})$. However, it turns out that the above upper bound provided is loose, as there is the need of
additional \emph{filter} BRs to transform/back-propagate  vectors of constants associated with Skolem blank nodes generated by repetitive 
application of the same BR. For instance, consider the set of BRs in eg: \ref{eg:strictContainmentMsafeToSafe}. The BR $r_1$ transforms
the origin vector to a new vector each time during its application. BRs $r_2$ - $r_5$ deals with back propagation of these vectors
back to input origin vectors of BR $r_1$. Such filter BRs  
rule out the case of a BR being applied to a quad that contains a Skolem blank node that was generated using the same BR on
an isomorphic origin vector, ensuring that the safety criteria for Skolem blank-nodes generated is not violated. 
It turns out that the number of such filter BRs required is polynomial w.r.t. to 
the number of descendants with the same rule id, for a node in $t$. Hence, it turns out the 
depth of $t$ is polynomially bounded by $\|R\|$. (Note that depth of $t$ is bounded by $|R|$ for msafe quad-systems. Also
since, the set of origin context labels are bounded by the set of existential variables in $R$, depth of $t$ is bounded by $\|R\|$
for csafe quad-systems.)
 Also order of the tree is bounded by $w$. 
 Hence, any such tree can have at most
$\bigO(2^{\|QS_{\C}\|})$ leaf nodes, $\bigO(2^{\|QS_{\C}\|})$ inner nodes, and  $\bigO(2^{\|QS_{\C}\|})$ nodes.  
Since each of the leaf nodes can only be from
 $\const(QS_{\C})$ and each of the inner nodes  
 correspond to an existential variable in $R$, the number of such possible trees are 
 clearly bounded double exponentially in $\|QS_{\C}\|$, hence bounds the number of
Skolem blank nodes generated in the dChase.

(ii) From (i) $\|dChase(QS_{\C})\|$ is double exponential in $\|QS_{\C}\|$, and since
each iteration add at-least one quad to its dChase, the number of iterations are bounded double exponentially in
$\|QS_{\C}\|$. Also, by Lemma \ref{lemma:chaseSizeIncrease} any iteration $i$ can be done in
time $\bigO(\|dChase_{i-1}(QS_{\C})\|^{\|R\|})$. Hence, by using (i), we get 
$\|dChase_{i-1}(QS_{\C})\|$ $=$ $
\bigO(2^{2^{\|QS_{\C}\|}})$. Hence, we can infer that each iteration $i$ can be done in time 
$\bigO(2^{\|R\|*2^{\|QS_{\C}\|}})$. Also since 
the number of iterations is at most double exponential, computing $dChase(QS_{\C})$ is in 2EXPTIME.

(iii) Since $\|R\|$ is fixed to a constant, the set of  
existential variables is also a constant. 
In this case, since the size of the frontier of any $r\in R$ is also a constant, the order and depth of any descendant tree $t$ of a Skolem blank node
is a constant.  Hence, the number of (leaf) nodes of
 $t$ is bounded by a constant.  Also in this setting, 
the label of inner nodes of $t$, which correspond to existential variables, is also a constant, and the leaf nodes of $t$ can only be a constant
in $\const(QS_{\C})$. Hence, the number of descendant trees and consequentially,
the number of Skolem blank nodes generated is bounded by $\bigO(|\const(QS_{\C})|^z)$, where $z$ is a constant.
Hence, the set of constants generated in $dChase(QS_{\C})$ is a polynomial in $\|QS_{\C}\|$, and
so is $\|dChase(QS_{\C})\|$.

Since in any dChase iteration except the final one, at least one quad is added, and also since the final dChase
can have at most $\bigO(\|QS_{\C}\|^z)$ triples, the total number of iterations are bounded by $\bigO(\|QS_{\C}\|^z)$ $(\dagger)$.
By Lemma \ref{lemma:chaseSizeIncrease}, since any iteration $i$ can be computed in $\bigO(\|dChase_{i-1}(QS_{\C})\|^{\|R\|})$
time, and since $\|R\|$ is a constant, the time required for each iteration is  
a polynomial in $\|dChase_{i-1}(QS_{\C})\|$, which is at most a polynomial in $\|QS_{\C}\|$. 
Hence, any dChase iteration can be performed in polynomial time in size of $QS_{\C}$ $(\ddagger)$.
From $(\dagger)$ and $(\ddagger)$, it can be concluded that dChase can be computed in PTIME. 
\end{proof}

\begin{lemma}\label{lemma:safe-computational-properties}
For any safe/msafe/csafe quad-system, the following holds: 
 (i) data complexity of CCQ entailment is in \textsc{PTIME},
 (ii) combined complexity of CCQ entailment is in \textsc{2EXPTIME}. 
\end{lemma}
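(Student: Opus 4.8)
The plan is to reduce CCQ entailment to a membership test against the (finite) dChase and then read off both complexity bounds from Lemma~\ref{lemma:safeChaseComputationTime} together with a bound on the cost of matching the query into the dChase. The decision procedure in both parts is the same: (a) compute $dChase(QS_{\C})$, simultaneously checking each context for local inconsistency via $\lclosure()$; (b) if the quad-system is consistent, search for a substitution $\mu\colon \var(CQ)\rightarrow\const$ witnessing the match. The correctness of step~(b) rests on Theorem~\ref{dChaseUniversalModelProperty}(ii): for a consistent $QS_{\C}$, $QS_{\C}\models CQ(\vec a)$ iff there is $\mu$ with $\{CQ(\vec a)\}[\mu]\subseteq dChase(QS_{\C})$. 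Since safe (hence msafe/csafe) quad-systems are FECs, $dChase(QS_{\C})$ is finite, so this search is effective. I would first dispose of the inconsistent case separately: local inconsistency surfaces during the chase as a context whose local graph is no longer locally satisfiable, which is detectable in time polynomial in $\|dChase(QS_{\C})\|$ by the assumed $\lclosure()$ machinery, and an inconsistent quad-system entails every CCQ vacuously.

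For the combined complexity~(ii), step~(a) is in \textsc{2EXPTIME} by Lemma~\ref{lemma:safeChaseComputationTime}(ii), and by part~(i) of that lemma $\|dChase(QS_{\C})\|=\bigO(2^{2^{\|QS_{\C}\|}})$, so $|\const(dChase(QS_{\C}))|$ is at most double exponential in $\|QS_{\C}\|$. In step~(b) the number of candidate substitutions for the query variables is bounded by $|\const(dChase(QS_{\C}))|^{|\var(CQ)|}$, and each candidate is tested for containment in time polynomial in $\|dChase(QS_{\C})\|$. Writing $n$ for the total input size, $|\var(CQ)|$ is linear in $n$, so this count is $\bigO((2^{2^{n}})^{n})$, which I would argue still lies within a double-exponential bound. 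Hence the whole procedure runs in \textsc{2EXPTIME}.

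For the data complexity~(i), the set of BRs $R$, the schema triples of $Q_{\C}$, and the query $CQ$ are all held fixed, and only the assertional part of the quad-graph varies. Under exactly these hypotheses Lemma~\ref{lemma:safeChaseComputationTime}(iii) gives that step~(a) runs in \textsc{PTIME} and yields a dChase of size polynomial in $\|QS_{\C}\|$. In step~(b), since $CQ$ is fixed, $|\var(CQ)|$ is a constant, so the number of candidate substitutions $|\const(dChase(QS_{\C}))|^{|\var(CQ)|}$ is polynomial in $\|QS_{\C}\|$, and each containment check is polynomial; thus step~(b), and the whole procedure, is in \textsc{PTIME}.

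The main obstacle I anticipate is the bookkeeping in the combined-complexity case: one must verify that raising a double-exponentially large constant set to a power that is itself (at most) linear in the input does not escape the double-exponential class, i.e.\ that $(2^{2^{n}})^{n}=2^{n\cdot 2^{n}}$ remains $2^{2^{\bigO(n)}}$ (using $n\cdot 2^{n}\le 2^{2n}$), and to confirm that neither consistency detection via $\lclosure()$ nor the enumeration of matches dominates the cost of computing the dChase itself. The secondary subtlety, in the data-complexity case, is making precise which components count as ``data'' so that part~(iii) of Lemma~\ref{lemma:safeChaseComputationTime} is genuinely applicable, in particular that the query and the schema triples belong to the fixed input while only the extensional quad-graph grows.
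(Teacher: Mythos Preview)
Your proposal is correct and follows essentially the same approach as the paper's own proof: compute the finite dChase (invoking Lemma~\ref{lemma:safeChaseComputationTime}), then enumerate all assignments of query variables to constants of the dChase and test containment, bounding the number of candidate assignments by $\|dChase(QS_{\C})\|^{\|CQ\|}$ in each complexity regime. Your treatment is slightly more careful than the paper's in two respects---you explicitly dispose of the inconsistent case and you spell out why $(2^{2^{n}})^{n}=2^{n\cdot 2^{n}}\le 2^{2^{2n}}$ stays double exponential---but these are refinements of the same argument, not a different route.
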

\begin{proof}
Note that the proofs are provided for safe quad-systems, but since \textsc{csafe} $\subset$ \textsc{msafe} $\subset$ \textsc{safe}
and since we are giving upper bounds, they also propagate trivially to msafe and csafe quad-systems.

Given a safe quad-system $QS_{\C}=\langle Q_{\C}, R\rangle$, since $dChase(QS_{\C})$ is finite, 
a boolean CCQ $CQ()$ can naively be evaluated by binding the set of constants in the dChase to the variables in the $CQ()$,
and then checking if any of these bindings are contained in $dChase(QS_{\C})$. The number of such 
bindings can at most be $\|dChase(QS_{\C})\|^{\|CQ()\|}$ ($\dagger$).

 (i) Since for data complexity, the size of the BRs $\|R\|$, the set of schema triples, and $\|CQ()\|$ is fixed to a constant. 
 From Lemma \ref{lemma:safeChaseComputationTime} (iii), we know that under the above mentioned settings the dChase can be computed
 in PTIME and is polynomial in the size of $QS_{\C}$. Since $\|CQ()\|$ is fixed to a constant, and 
 from ($\dagger$), binding the set of constants in $dChase(QS_{\C})$ on $CQ()$ still gives a number of bindings that is
 worst case polynomial in the size of $\|QS_{\C}\|$. Since membership of these bindings can checked in the polynomially sized 
 dChase in PTIME, the time required for CCQ entailment is in PTIME.
 
 (ii) Since in this case $\|dChase(QS_{\C})\|=\bigO(2^{2^{\|QS_{\C}\|}})$ $(\ddagger)$, from ($\dagger$) and $(\ddagger)$,  
 binding the set of constants in $dChase(QS_{\C})$ to $CQ()$ amounts to $\bigO(2^{\|CQ()\|*2^{\|QS_{\C}\|}})$ number of 
 bindings. Since the dChase is double exponential in $\|QS_{\C}\|$, checking the membership of each of these bindings
 can be done in 2EXPTIME. Hence, the combined complexity is in 2EXPTIME.
\end{proof}
\begin{theorem}\label{thoeorem:safe-computational-properties}
 For any safe/msafe/csafe quad-system, the following holds: 
 (i) The data complexity of CCQ entailment is \textsc{PTIME}-complete (ii) 
The combined complexity of CCQ entailment is  \textsc{2EXPTIME}-complete. 
\end{theorem}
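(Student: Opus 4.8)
The plan is to combine the membership bounds already in hand with two matching hardness reductions. For membership, Lemma~\ref{lemma:safe-computational-properties} already gives that CCQ entailment over safe quad-systems is in \textsc{PTIME} in data complexity and in \textsc{2EXPTIME} in combined complexity; since \textsc{csafe} $\subset$ \textsc{msafe} $\subset$ \textsc{safe} by Theorem~\ref{theorem:StrictContainment}, every csafe or msafe instance is in particular a safe one, so the same upper bounds hold for all three classes. It therefore remains only to prove the two lower bounds, and for these I would aim the reductions at the \emph{smallest} class, \textsc{csafe}: a reduction that outputs csafe quad-systems automatically witnesses hardness for \textsc{msafe} and \textsc{safe} as well, because a csafe instance is also an msafe and a safe instance.

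For the \textsc{PTIME} lower bound (data complexity) I would reduce from a fixed \textsc{PTIME}-complete problem such as the monotone circuit value problem (equivalently, the evaluation of a fixed recursive Datalog program). The circuit is placed entirely in the \emph{varying} quad-graph: in a single context $c$ one records, for each gate $g$, triples stating its type and its two input wires, together with the truth values of the input gates. A \emph{fixed} set of BRs --- one clause propagating truth through \textsc{and}-gates and two through \textsc{or}-gates --- then computes the value of every gate, and the fixed boolean CCQ asks whether the output gate evaluates to true. These BRs contain no existential variables, hence are range restricted and generate no Skolem blank nodes; consequently the uncsafe/unmsafe/unsafe conditions (which require two distinct Skolem blank nodes in a descendant relation) are vacuously false, so every instance produced is csafe. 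Since the reduction is logspace and monotone circuit value is \textsc{PTIME}-hard, data complexity is \textsc{PTIME}-hard, matching the upper bound.

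For the \textsc{2EXPTIME} lower bound (combined complexity) --- the hard part --- I would reduce from the acceptance problem of an alternating Turing machine running in exponential space, which characterises \textsc{2EXPTIME} via the Chandra--Kozen--Stockmeyer theorem (\textsc{AEXPSPACE} $=$ \textsc{2EXPTIME}). The construction would use the existential BRs to build, via a $p(n)$-bit binary counter, addresses for the $2^{p(n)}$ tape cells of a configuration, to generate the (doubly-exponential) tree of reachable configurations, to encode the transition relation as body-to-head BRs, and to back-propagate acceptance from halting configurations up to the initial one; the CCQ then asks whether the initial configuration is accepting. The delicate point is to make this construction land in \textsc{csafe}: I would arrange that the Skolem blank nodes produced along any descendant chain are generated in pairwise distinct sets of origin-contexts (for instance by routing successive generation steps through distinct context identifiers), so that the csafety criterion on $originContexts$ is met while the dChase still realises the doubly-exponential structure required to simulate the machine.

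I expect this combined-complexity lower bound to be the main obstacle, for two reasons. First, faithfully simulating an exponential-space alternating computation with $\forall\exists$ BRs --- in particular addressing exponentially many cells and threading alternation through the generated configuration tree --- is intricate. Second, and more specific to our setting, one must simultaneously guarantee that the resulting quad-system is csafe (not merely safe), which constrains how Skolem blank nodes may recur and hence how the simulation may reuse structure. An alternative, less self-contained route would be to invoke the polynomial equivalence between quad-systems and ternary $\forall\exists$ rule sets established in Section~\ref{sec:Comparison}, together with the known \textsc{2EXPTIME}-hardness of CQ entailment over weakly/jointly acyclic rule sets (which safe quad-systems subsume); but this transfer would still have to be checked to yield a csafe quad-system in order to establish the bound uniformly for all three classes.
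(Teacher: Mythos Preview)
Your membership arguments and your \textsc{PTIME} hardness argument are fine; the paper handles these the same way (it cites Lemma~\ref{lemma:safe-computational-properties} for membership and, for data-complexity hardness, simply points forward to the RR case in Theorem~\ref{theorem:hornDataComplexity}, which like your circuit-value reduction uses range-restricted BRs and hence produces csafe instances).

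The gap is in the \textsc{2EXPTIME} lower bound. Your plan is to simulate an \textsc{AEXPSPACE} machine and materialise the doubly-exponential tree of reachable configurations with existential BRs, then ``route successive generation steps through distinct context identifiers'' to keep the system csafe. But any descendant chain in such a configuration tree has doubly-exponential length, while the reduction may introduce only polynomially many context identifiers; so along any such chain the $originContexts$ labels must repeat, violating csafety. In other words, the mechanism you gesture at cannot work as stated: you cannot get pairwise distinct origin-context sets along a chain of length $2^{2^{p(n)}}$ when only $\mathrm{poly}(n)$ contexts are available. Your fallback via WA/JA hardness does not close the gap for \textsc{csafe} either, since (as Example~\ref{eg:comparisonWAcsafe} shows) \textsc{wa} $\not\subseteq$ \textsc{csafe}.

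The paper avoids this obstacle with a different idea: it reduces from the word problem of a \emph{deterministic} \textsc{2EXPTIME} machine, and it does \emph{not} create Skolem blank nodes along the computation. Instead it uses $n{+}1$ contexts $c_0,\ldots,c_n$ and an iterated-squaring rule (from $c_i$ to $c_{i+1}$) to pre-generate $2^{2^n}$ objects, first for tape cells and then again for configurations, both as linear chains. Every Skolem blank node generated in $c_{i+1}$ has children whose origin-contexts are $\{c_i\}$, so descendant chains have length at most $n$ and the system is csafe by construction. All the actual DTM bookkeeping (successor, head position, state, letters, transitions, inertia, acceptance back-propagation) is done with range-restricted BRs that introduce no new blank nodes. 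The key structural insight you are missing is thus: generate the doubly-exponential supply of objects \emph{up front} via squaring through polynomially many contexts, and keep the simulation rules existential-free.
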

\begin{proof} (i)(Membership) See Lemma \ref{lemma:safe-computational-properties} for the membership in PTIME. 
 
 \noindent(Hardness) Follows from the PTIME-hardness of data complexity of CCQ entailment for Range-Restricted quad-systems 
 (Theorem \ref{theorem:hornDataComplexity}), which are contained in safe/msafe/csafe quad-systems. 
 
 \noindent (ii) (Membership) See Lemma \ref{lemma:safe-computational-properties}.
 
 \noindent (Hardness) See following heading.
\end{proof}
\subsection{2EXPTIME-Hardness of CCQ Entailment}
\noindent In this subsection, we show that the combined complexity of the decision problem of CCQ entailment for context 
acyclic quad-systems is 2EXPTIME-hard. We show this
by reduction of the word-problem of a 2EXPTIME deterministic turing machine (DTM) 
to the CCQ entailment problem. We notify the reader that the technique we follow is, similar to works such as 
\cite{CaliGP2012AIJ,DBLP:conf/rr/CaliGP10}, to iteratively 
generate a doubly exponential number of objects that represent the configurations and cells of the tape of the DTM, 
and then simulate its working by appropriate BRs.
A DTM $M$ is a tuple $M=\langle Q, \Sigma, \Delta, q_0, q_A\rangle$, where 
\begin{itemize}
 \item $Q$ is a set of states,
 \item $\Sigma$ is a finite set of letters that includes the blank symbol $\Box$,
 \item $\Delta\colon (Q \times \Sigma) \rightarrow (Q \times \Sigma \times \{+1, -1\})$ is the transition function,
 \item $q_0 \in Q$ is the initial state.
 \item $q_A \in Q$ is the accepting state.
\end{itemize}
 W.l.o.g. we assume that there exists exactly one accepting state, which is also the lone halting state. 
A  configuration is a word $\vec \alpha \in \Sigma^*Q\Sigma^*$. A
configuration $\vec \alpha_2$ is a successor of the configuration $\vec \alpha_1$, iff one of the following holds:
\begin{enumerate}
 \item $\vec \alpha_1=\vec w_lq\sigma\sigma_r \vec w_r$ and $\vec \alpha_2=\vec w_l\sigma'q'\sigma_r \vec w_r$, 
 if $\Delta(q,\sigma) =(q',\sigma',R) $, or
  \item $\vec \alpha_1=\vec w_lq\sigma$ and $\vec \alpha_2=\vec w_l\sigma'q'\Box$, 
  if $ \Delta(q,\sigma)=(q',\sigma',R)$, or
  \item $\vec \alpha_1=\vec w_l\sigma_lq\sigma \vec w_r$ and $\vec \alpha_2=\vec w_lq'\sigma_l\sigma' \vec w_r$, 
  if $ \Delta(q,\sigma)=(q',\sigma',L)$.
\end{enumerate}
where $q,q' \in Q$, $\sigma,\sigma',\sigma_l, \sigma_r \in \Sigma$, and $\vec w_l, \vec w_r\in \Sigma^*$. 
Since number of configurations can at most be doubly exponential in the size of the input string, and since 2EXPTIME $\subseteq$ 2EXPSPACE,  
the number of tape cells traversed by the DTM tape head is also bounded double exponentially. A configuration $\vec c=\vec w_lq\vec w_r$ is an accepting configuration iff
$q=q_A$. A language $L \subseteq \Sigma^*$ is accepted by a 2EXPTIME bounded DTM $M$, iff 
for every $\vec w \in L$, $M$ accepts $\vec w$ in time $\bigO(2^{2^{\|\vec w\|}})$. 

\paragraph{Simulating DTMs using Safe Quad-Systems}\label{subsection:ATM simulation}
 Consider a double exponential time bounded DTM $M=\langle Q,\Sigma, \Delta, q_0, q_A\rangle$, and a string $\vec w$, with $\|\vec w\|=m$. 
 Suppose that $M$ terminates in $2^{2^n}$ time, where $n$ $=$ $m^k$, $k$ is a constant. 
 In order to simulate $M$,  
we construct a quad-system $QS^M_{\C}=\langle Q^M_{\C}, R\rangle$,
where  $\C=\{c_0,c_1,...,c_n\}$, whose various elements represents the constructs of $M$.  
Let $Q^M_{\C}$ be initialized with the following quads:
\begin{eqnarray}
&& c_0\colon(k_0,\texttt{rdf:type},R),c_0\colon(k_1,\texttt{rdf:type},R), c_0\colon(k_0,\texttt{rdf:type}, min_0), \nonumber \\
&&  c_0\colon(k_1,\texttt{rdf:type},  max_0),  c_0\colon(k_0,succ_0,k_1) \nonumber
\end{eqnarray}
Now for each pair of elements of type $R$ in $c_i$, a Skolem blank-node is generated
in $c_{i+1}$, and hence follows the recurrence relation $r(j+1)=[r(j)]^2$, with seed $r(0)=2$, which after $n$ iterations
yields $2^{2^n}$. In this way, a doubly exponentially long chain of elements is created in 
$c_n$ using the following set of rules:
\begin{eqnarray}
&& c_i\colon(x_0, \texttt{rdf:type},R),c_i\colon(x_1,\texttt{rdf:type},R) \rightarrow  \exists y \ c_{i+1}\colon(x_0,x_1,y),\nonumber \\
&&  
c_{i+1}\colon(y,\texttt{rdf:type},R) \hspace{0.3cm} (eBr)\nonumber 
\end{eqnarray}
The combination of the minimal element with the minimal element (elements of type $min_i$) in $c_i$
create the minimal element in $c_{i+1}$, and similarly the combination of the maximal element with the 
maximal element (elements of type $max_i$) in $c_i$
create the maximal element of $c_{i+1}$:
\begin{eqnarray}
&& c_{i+1}\colon(x_0,x_0,x_1), c_i\colon(x_0,\texttt{rdf:type},min_i) \rightarrow 
 c_{i+1}\colon(x_1,\texttt{rdf:type},min_{i+1}) \nonumber \\
&& c_{i+1}\colon(x_0,x_0,x_1), c_i\colon(x_0,\texttt{rdf:type},max_i) \rightarrow 
 c_{i+1}\colon(x_1,\texttt{rdf:type},max_{i+1}) \nonumber 
\end{eqnarray}
The successor relation $succ_{i+1}$ is created in $c_{i+1}$ using the following set of rules, using the well-known
integer counting technique:
\begin{eqnarray}
&& c_i\colon(x_1,succ_i,x_2), c_{i+1}\colon(x_0,x_1,x_3), 
 c_{i+1}\colon(x_0,x_2,x_4) \rightarrow c_{i+1}\colon(x_3, succ_{i+1},x_4) \nonumber \\
&& c_i\colon(x_1,succ_i,x_2), c_{i+1}\colon(x_1,x_3,x_5),c_{i+1}\colon(x_2, 
 x_4,x_6), c_i\colon(x_3, \texttt{rdf:type}, \nonumber \\
&&  max_i), c_i\colon(x_4, \texttt{rdf:type}, min_i) \rightarrow c_{i+1}\colon(x_5, succ_{i+1},x_6) \nonumber 
\end{eqnarray}
Each of the above set of rules are instantiated for $0\leq i <n$, and in this way after $n$ generating dChase iterations, 
$c_n$ has doubly exponential number of elements of type $R$, that are ordered linearly using the relation $succ_n$.
By virtue of the first rule below, each of the objects representing the cells of the DTM are 
linearly ordered by the relation $succ$. Also the transitive closure of $succ$ is defined as the relation
$succt$
\begin{eqnarray}
&& c_n\colon(x_0, succ_n, x_1) \rightarrow c_n\colon(x_0, succ, x_1) \nonumber \\
&& c_n\colon(x_0, succ, x_1) \rightarrow c_n\colon(x_0, succt,x_1) \nonumber \\
&& c_n\colon(x_0, succt, x_1), c_n\colon(x_1,succt,x_2) 
 \rightarrow c_n\colon(x_0, succt, x_2) \nonumber
\end{eqnarray}
Also using a similar construction, we can reuse the $2^{2^{n-1}}$ linearly ordered elements in $c_{n-1}$
to create another linearly ordered chain of a doubly exponential number of objects in $c_n$ that 
represents configurations of $M$, whose minimal element is of type $conInit$, and the linear order relation being $conSucc$.

 Various triple patterns that are used to encode the possible configurations, runs and their relations in $M$ are:
\begin{description}
 \item[$(x_0,head,x_1)$] denotes the fact that in configuration $x_0$, the head of the DTM is at cell $x_1$.
 \item[$(x_0,state, x_1)$] denotes the fact that in configuration $x_0$, the DTM is in state $x_1$.
 \item[$(x_0,\sigma, x_1)$] where $\sigma \in \Sigma$, denotes the fact that in configuration
 $x_0$, the cell $x_1$ contains $\sigma$.
 \item[$(x_0,succ, x_1)$] denotes the linear order between cells of the tape.
 \item[$(x_0,succt, x_1)$] denotes the transitive closure of $succ$.
 \item[$(x_0,conSucc,x_1)$] to denote the fact that $x_1$ is a successor configuration of $x_0$.
 \item$(x_0,\texttt{rdf:type},Accept)$ denotes the fact that the configuration $x_0$ is an accepting configuration.
\end{description}
Since in our construction, each $\sigma \in \Sigma$ is represented as a relation, we could constrain
that no two letters $\sigma \neq \sigma'$ are on the same cell  using the 
following axiom:
\begin{eqnarray}
  c_n\colon(z_1 ,\sigma, z_2), c_n\colon(z_1, \sigma', z_2) \rightarrow  \nonumber
\end{eqnarray}
for each $\sigma \neq \sigma' \in \Sigma$. Note that the above BR has an empty head, is equivalent to asserting the negation of its body. 

\paragraph{Initialization}
Suppose the initial configuration is $q_0\vec w\Box$, where $\vec w=\sigma_0...\sigma_{n-1}$, then we 
enforce this using the following BRs in our quad-system $QS^M_{\C}$ as:
\begin{eqnarray}
&& c_n\colon(x_0,\texttt{rdf:type}, conInit), c_n\colon(x_1,\texttt{rdf:type}, min_n) \rightarrow c_n\colon(x_0,head,x_1), \nonumber \\
&&  c_n\colon(x_0,state,q_0) \nonumber \\ 
&& c_n\colon(x_0,\texttt{rdf:type}, min_n) \wedge \bigwedge_{i=0}^{n-1} c_n\colon(x_i, succ, x_{i+1}) \ \wedge \ c_n\colon(x_j, \texttt{rdf:type}, \nonumber \\ 
&&   conInit)  \rightarrow 
\bigwedge_{i=0}^{n-1} c_n\colon(x_j, \sigma_i, x_i) \wedge c_n\colon(x_j,\Box, x_n) \nonumber \\
&& c_n\colon (x_j, \texttt{rdf:type}, conInit), c_n\colon(x_j,\Box,x_0), c_n\colon 
 (x_0, succt,x_1) \rightarrow c_n\colon(x_j, \Box, x_1) \nonumber
\end{eqnarray}
 The last BR copies the $\Box$ to every succeeding cell in the initial configuration. 
\paragraph{Transitions}
 For every left transition $\Delta (q,\sigma)$ $=$ $(q_j$, $\sigma'$, $-1)$, the following BR:
 \begin{eqnarray}
 && c_n\colon(x_0, head,x_i), c_n\colon(x_0, \sigma, x_i), c_n\colon(x_0, state, q),  c_n\colon(x_j, succ, x_i),  c_n\colon(x_0, \nonumber \\
 &&   conSucc, x_1) \rightarrow c_n\colon(x_1, 
   head, x_j), c_n\colon(x_1,\sigma',x_i), c_n\colon(x_1, state, q_j) \nonumber 
 \end{eqnarray}
 For every right transition $\Delta (q,\sigma)=(q_j,\sigma',+1) $, the following BR:
 \begin{eqnarray}
 && c_n\colon(x_0, head,x_i), c_n\colon(x_0, \sigma, x_i), c_n\colon(x_0, state, q), c_n\colon(x_i, succ, x_j), c_n\colon(x_0, \nonumber \\
 &&   conSucc, x_1) \rightarrow  c_n\colon(x_1, 
  head, x_j), c_n\colon(x_1,\sigma',x_i), c_n\colon(x_1, state, q_j) \nonumber 
 \end{eqnarray}
 \paragraph{Inertia}
 If in any configuration the head is at cell $i$ of the tape, then in every successor configuration, elements
 in preceding and following cells of $i$ in the tape are retained. The following two BRs ensures this:
 \begin{eqnarray}
  && c_n\colon (x_0, head, x_i), c_n\colon(x_0, conSucc, x_1), c_n\colon(x_j, succt, x_i), c_n\colon(x_0, \sigma, x_j) \nonumber \\
  &&     \rightarrow c_n\colon(x_1, \sigma, x_j) \nonumber \\
 && c_n\colon(x_0, head, x_i), c_n\colon(x_0, conSucc, x_1), c_n\colon(x_i,succt, x_j),  c_n\colon(x_0, \sigma, x_j)  \nonumber \\
  &&  \rightarrow c_n\colon(x_1, \sigma, x_j) \nonumber
 \end{eqnarray}
 The rules above are instantiated for every $\sigma \in \Sigma$.
 \paragraph{Acceptance}
 A configuration whose state is $q_A$ is accepting:
 \begin{eqnarray}
  c_n\colon(x_0, state,q_A) \rightarrow c_n\colon(x_0, \texttt{rdf:type}, Accept) \nonumber
 \end{eqnarray}
If a configuration of accepting type is reached, then it can be back propagated to the initial configuration, using the following BR:
  \begin{eqnarray}
&&   c_n\colon(x_0, conSucc, x_1), c_n\colon(x_1, \texttt{rdf:type}, Accept) 
    \rightarrow c_n\colon(x_0, \texttt{rdf:type}, Accept) \nonumber
 \end{eqnarray}
Finally $M$ accepts $\vec w$ iff the initial configuration is an accepting configuration. Let 
$CQ^M$ be CCQ: $\exists y \ c_n\colon (y$, \texttt{rdf:type}, $conInit)$, $c_n\colon(y$, \texttt{rdf:type}, $Accept)$. 
It can easily be verified that $QS^M_{\C} \models CQ^M$ iff the initial configuration is an accepting configuration. 
In order to prove the soundness and completeness of our simulation, we prove the following claims:
\begin{claim}(1)
 The quad-system $QS^M_{\C}$ in the aforementioned simulation is a csafe quad-system
\end{claim}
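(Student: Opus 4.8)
The plan is to apply the uncsafety criterion directly. By definition, $QS^M_{\C}$ is csafe unless its dChase contains two distinct Skolem blank nodes $\_\colon b \neq \_\colon b'$ such that $\_\colon b$ is a descendant of $\_\colon b'$ and $originContexts(\_\colon b) = originContexts(\_\colon b')$. So I would prove csafety by attaching to every Skolem blank node the single context in which it is born, and then showing that this context strictly ``drops'' along every \emph{childOf} edge; this makes the origin-context strictly monotonic along every \emph{descendantOf} chain, which rules out the configuration above.

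First I would isolate the value-generating BRs. Every BR of $R$ other than the existential rule $(eBr)$ (instantiated for $0 \le i < n$) and the analogous rule used to build the configuration chain in $c_n$ from the linearly ordered elements of $c_{n-1}$ is range-restricted, i.e.\ carries no existential variable in its head; by the definition of $apply$, such a BR introduces no fresh blank node. Hence every element of $\bn_{sk}(dChase(QS^M_{\C}))$ is produced either by $(eBr)$ at some index $i$ or by the configuration rule. Next I would read off the origin-context of each such node: in $(eBr)$ the existentially quantified $y$ occurs in the head only in quad patterns of the form $c_{i+1}\colon(\cdots)$, so the generated node first appears \emph{exclusively} in $c_{i+1}$ and $originContexts(\_\colon b)=\{c_{i+1}\}$ is a singleton; the same reasoning yields $\{c_n\}$ for every node produced by the configuration rule. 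The ``first generated'' clause in the definition is essential here, because such a node is later copied into higher contexts (e.g.\ the $c_i$-elements $x_0,x_1$ reappear in the head triple $c_{i+1}\colon(x_0,x_1,y)$), yet its origin-context stays pinned to the context of its birth.

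Then I would analyse the \emph{childOf} structure. The frontier of $(eBr)$ at index $i$ is $\{x_0,x_1\}$, whose values are bound in the body, which lies entirely in $c_i$; hence the children of a node born in $c_{i+1}$ are elements of type $R$ in $c_i$, and whenever such a child is itself a Skolem blank node it was born in $c_i$. The configuration rule behaves identically with $c_{n-1}$ in place of $c_i$ and $c_n$ in place of $c_{i+1}$. Consequently every \emph{childOf} edge between two Skolem blank nodes strictly decreases the context index by exactly one, so along any \emph{descendantOf} chain the origin-context indices strictly decrease. In particular, two distinct Skolem blank nodes in an ancestor--descendant relation can never share their (singleton) origin-context, which is precisely the negation of uncsafety; therefore $QS^M_{\C}$ is csafe.

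The main obstacle I anticipate is the book-keeping around the configuration-generating rule, whose construction is only sketched as ``a similar construction'': I must verify that it reads its frontier from $c_{n-1}$ and writes the fresh node \emph{solely} into $c_n$, so that it obeys the same ``index drops by one'' invariant. This matters because both the configuration rule and $(eBr)$ at index $n-1$ produce nodes with the same origin-context $\{c_n\}$; the strict-decrease invariant is exactly what guarantees that two such $c_n$-born nodes can never stand in the ancestor--descendant relation (any descendant of a $c_n$-node has origin index at most $n-1$), so no appeal to distinguishing them is needed. A secondary check is to confirm that no range-restricted rule ever causes a blank node to be \emph{first} generated in two contexts at once, which would break the singleton property; inspecting the heads of all such rules shows each added triple lies in a single context and merely propagates already-existing blank nodes.
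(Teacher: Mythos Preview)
Your proposal is correct and follows essentially the same line as the paper's own proof: both arguments observe that only the $(eBr)$-type rules carry existentials, that each Skolem blank node has a singleton origin-context $\{c_j\}$, and that the \emph{childOf} relation between Skolem nodes strictly decreases the index $j$, whence no ancestor--descendant pair can share origin-contexts. Your treatment is in fact slightly more careful than the paper's sketch---you explicitly separate out the configuration-generating rule and note that its frontier is read from $c_{n-1}$, and you correctly qualify the index-drop claim with ``whenever such a child is itself a Skolem blank node'' (the paper's assertion that every generated node ``has exactly two child blank-nodes'' is imprecise at the base level, where the children are the seed constants $k_0,k_1$).
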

It can be noted that the only BRs in which existentials are present are the BRs used to generate the double exponential chain of tape cells and 
configurations, and are of the form (eBr). Note that in each of application of such a BR, a blank-node $\_\colon b$
generated in a context $c_i$, for any $i=1, \ldots, n$, is such that $originContexts(\_\colon b)=\{c_i\}$ and 
has exactly two child blank-nodes, each of whose origin contexts is $\{c_{i-1}\}$. Hence, any Skolem blank-node generated in any $c_i$, for $i=1\ldots n$
is such that its child blank-nodes has origin contexts $c_{i-1}$. Thanks to the above property, it turns out that there exists no two 
blank-nodes $\_\colon b, \_\colon b'$  in the dChase of  $QS^M_{\C}$ such that $\_\colon b$ is a descendant of $\_\colon b'$ and 
$originContexts(\_\colon b)$ $=$ $originContexts(\_\colon b')$. Therefore $QS^M_{\C}$ is csafe. 

\begin{claim}(2)
 $QS^M_{\C} \models CQ^M$ iff $M$ accepts $\vec w$.
\end{claim}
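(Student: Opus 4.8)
The plan is to invoke Theorem~\ref{dChaseUniversalModelProperty}. Since $QS^M_{\C}$ is csafe by Claim~(1), $dChase(QS^M_{\C})$ is finite (Lemma~\ref{lemma:safeChaseComputationTime}), and the boolean query $CQ^M$ is entailed iff there is a common object $o$ with both $c_n\colon(o,\texttt{rdf:type},conInit)$ and $c_n\colon(o,\texttt{rdf:type},Accept)$ in $dChase(QS^M_{\C})$. So it suffices to characterise exactly which atoms of these shapes the dChase contains, and the overall strategy is to establish a faithful correspondence between the dChase and the unique (deterministic) run of $M$ on $\vec w$, from which both directions follow. I would first check \emph{consistency}: the only constraint is the empty-headed BR forbidding two distinct letters on one cell of one configuration, and I would argue inductively that a genuine simulation writes exactly one letter per cell per configuration, so this BR never fires; hence $QS^M_{\C}$ is consistent and Theorem~\ref{dChaseUniversalModelProperty} applies.

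The first technical lemma concerns the counter. I would show by induction on $i$ that, after the generating iterations, context $c_i$ contains exactly $2^{2^i}$ elements of type $R$, that $succ_i$ linearly orders them with a unique $min_i$- and a unique $max_i$-element, and that $succ_i$ is the \emph{immediate}-successor relation of this order (no gaps, no duplicate edges). The inductive step verifies that the pairing rule $(eBr)$ creates one fresh element per ordered pair, that the $min$/$max$ propagation rules select the correct extremal elements, and that the two integer-counting rules reconstruct the lexicographic successor on pairs. Transferring this to $c_n$ via the $succ_n\!\to\!succ\!\to\!succt$ rules yields a doubly-exponential chain of cell objects linearly ordered by $succ$ with transitive closure $succt$; the analogous reuse of $c_{n-1}$ yields a doubly-exponential chain of configuration objects ordered by $conSucc$ with minimal element of type $conInit$. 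Fixing order-isomorphisms from these chains onto $\{0,\dots,2^{2^n}-1\}$ gives index maps on cells and configurations. I expect this counter-correctness argument to be the main obstacle, since the doubly-exponential integer-counting encoding is the part most prone to off-by-one and adjacency errors, and every downstream step relies on having an exact linear order of the right length.

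With the index maps in place, the core is a simultaneous induction on the time step $t$, for $0\le t<2^{2^n}$, proving that the $t$-th configuration object $\gamma_t$ carries \emph{exactly} the encoding of $M$'s configuration after $t$ steps: $c_n\colon(\gamma_t,state,q)$ for the correct $q$, $c_n\colon(\gamma_t,head,\kappa_h)$ for the head cell $\kappa_h$, and $c_n\colon(\gamma_t,\sigma,\kappa_j)$ with $\sigma$ the $j$-th tape symbol, and no further atoms of these shapes. The base case $t=0$ follows from the Initialization BRs, which stamp $q_0$, place the head at $min_n$, write $\sigma_0\cdots\sigma_{n-1}$ on the first cells and $\Box$ on every succeeding cell. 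For the inductive step I would case on whether $\Delta$ prescribes a left or right move and apply the matching Transition BR to rewrite the head cell and relocate the head in $\gamma_{t+1}$, while the two Inertia BRs, which range over cells strictly before, resp.\ strictly after, the head via $succt$, copy every unchanged symbol; the head cell is deliberately excluded from inertia, so no letter clash arises and exactness is preserved. Determinism of $M$ guarantees that at most one Transition BR is applicable, which is precisely what makes the encoding exact rather than merely sound.

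Finally I would close both directions. For completeness, if $M$ accepts then some $t^\ast<2^{2^n}$ has state $q_A$; the Acceptance BR labels $\gamma_{t^\ast}$ with $Accept$, and the backward-propagation BR carries $Accept$ along $conSucc$ down to $\gamma_0$, which is of type $conInit$, so both query atoms hold at $\gamma_0$ and, by Theorem~\ref{dChaseUniversalModelProperty}, $QS^M_{\C}\models CQ^M$. For soundness, if $QS^M_{\C}\models CQ^M$ then by Theorem~\ref{dChaseUniversalModelProperty} some object $o$ is both $conInit$ and $Accept$ in the dChase; by the counter lemma $o=\gamma_0$, and since the only producers of $Accept$ are the Acceptance and backward-propagation BRs, there is a $conSucc$-path from $\gamma_0$ to some $\gamma_{t^\ast}$ whose state is $q_A$; the faithfulness induction then forces $M$ to be in state $q_A$ at step $t^\ast$, i.e.\ $M$ accepts $\vec w$. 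This establishes the biconditional.
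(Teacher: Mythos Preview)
Your proposal is correct and follows essentially the same approach as the paper: both directions go through Theorem~\ref{dChaseUniversalModelProperty} and the faithfulness of the DTM encoding in the dChase, tracing the $conSucc$-chain from the $conInit$ object and arguing that each configuration object encodes the corresponding step of the run. The paper's own argument is considerably more informal---it simply asserts the correctness of the counter construction, the initialization, and the transition/inertia simulation without spelling out the inductive lemmas you outline, and it does not separately address consistency---so your plan is a more detailed and rigorous version of the same proof.
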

Suppose that $QS^M_{\C}$ $\models$ $CQ^M$, then by Theorem \ref{dChaseUniversalModelProperty},
 there exists an assignment $\mu\colon\var(CQ^M)$ $\rightarrow$ $\const$, with
$CQ^M[\mu]$ $\subseteq$ $dChase(QS_{\C})$. This implies that there exists a constant $o$ in $\const(dChase(QS_{\C}))$, 
with  $\{c_n\colon (o$, \texttt{rdf:type}, $Accept)$, $c_n\colon (o$, \texttt{rdf:type}, $conInit)\}$ $\subseteq$ $dChase(QS_{\C}$.
But thanks to the acceptance axioms it follows that there exists an constant $o'$ such that 
$\{c_n\colon(o$, $conSucc$, $o_1)$, $c_n\colon(o_1$, $conSucc$, $o_2)$, \ldots, $c_n\colon(o_n$, $conSucc$, $o')\}$ $\subseteq$ $dChase(QS_{\C})$,  
and $c_n\colon(o'$, \texttt{rdf:type}, $Accept)$ $\in$ $dChase(QS_{\C})$.
Also thanks to the initialization axioms, it can be seen 
that $o$ represents the initial configuration of $M$ i.e. it represents the configuration in which the initial state is $q_0$, 
and the left end of the read-write tape contains $\vec w$ followed by trailing $\Box$s, with the read-write head positioned
at the first cell of the tape. Also the transition axioms makes sure that if $c_n\colon (o, conSucc, o'')$ $\in$ $dChase(QS_{\C})$, then
$o''$ represents a successor configuration of $o$. That is, if $o$ represents the configuration in which
$M$ is at state $q$ with read-write head at position $pos$ of the tape that contains a letter $\sigma \in \Sigma$,
and if $\Delta(q, \sigma)$ $=$ $(q', \sigma', D)$, then $o''$ represents the configuration in which $M$ is at state $q'$,
in which read-write head is at the position $pos-1/pos+1$ depending on whether $D=-1/+1$, and $\sigma'$ is at the position $pos$
of the tape. As a consequence of the above arguments, it follows that $o'$ represents an accepting  configuration of $M$, i.e.
a configuration in which the state is $q_A$, the lone accepting, halting state. 
This means that $M$ accepts the string $\vec w$.

For the converse, we briefly show that if  $QS^M_{\C} \not \models CQ^M$ then $M$ does not accept $\vec w$. 
Suppose that  $QS^M_{\C} \not \models CQ^M$, then by Theorem \ref{dChaseUniversalModelProperty},
for every assignment $\mu\colon\var(CQ^M)$ $\rightarrow$ $\const$, it should be the case that
$CQ^M[\mu]$ $\not \subseteq$ $dChase(QS_{\C})$. By the initialization axioms, we know that there exists a constant $o\in \const(dChase(QS_{\C}))$ 
with $c_n\colon(o$, \texttt{rdf:type}, $conInit)$ $\in$ $dChase(QS_{\C})$.  
 We know that $o$ represents the initial configuration of $M$. Also by the initial construction axioms of $QS^M_{\C}$, we know that $o$ is the initial element of
a double exponential chain of objects that are linearly ordered by property symbol $conSucc$. From transition axioms we know that,
if, for any $o''$, $c_n\colon (o$, $conSucc$, $o'')$ $\in$ $dChase(QS_{\C})$, then $o''$ represents a valid successor configuration of $o$, 
which itself holds for $o''$, and so on.
This means that for none of the succeeding double exponential configurations of $M$, the accepting state $q_A$ holds. This means
that $M$ does not reach an accepting configuration with string $\vec w$, and hence rejects it.

Since the construction above shows the existence of a polynomial time reduction of the word problem of a 2EXPTIME DTM, 
which is a 2EXPTIME-hard problem, to the CCQ entailment problem over csafe quad-systems, 
it immediately follows that CCQ entailment over csafe/msafe/safe quad-systems is 2EXPTIME-hard. 

\subsection{Procedure for detecting safe/msafe/csafe quad-systems}
\noindent In this subsection, we present a procedure for deciding whether a given quad-system is safe (resp. msafe, resp. csafe) or not. If
the quad-system is safe (resp. msafe, resp. csafe), the result of the procedure is a \emph{safe dChase} (resp. \emph{msafe dChase}, \emph{csafe dChase})
that contains the standard dChase, and can be used for query answering.  
Since the safety (resp. msafety, resp. csafety) property of a quad-system is attributed to the 
dChase of the quad-system, the procedure nevertheless performs the standard operations for computing the dChase,
but also generate quads that indicate origin ruleIds and origin vectors (resp. origin ruleIds, resp. origin-contexts)
of each Skolem blank node generated. 
 In each iteration, a test for safety is performed, by checking the
 presence of  Skolem blank-nodes that violate the safety (resp. msafety, resp. csafety) condition. In
case a violation is detected, a distinguished quad is generated and the safe (resp. msafe, resp. csafe) dChase construction is aborted, prematurely.
 On the contrary, if there exists an iteration in which no new quad is generated,
the safe (resp. msafe, resp. csafe) dChase computation stops with a completed safe (resp. msafe, resp. csafe) dChase that 
contains the standard dChase. Since all the additional quads produced for accounting information
use a distinguished context identifier $c_c \not \in \C$, the computed safe (resp. msafe, resp. csafe) dChase itself can be used for
standard query answering. Before geting to the details of the procedure, 
we give a few necessary definitions. 
\begin{definition}[Context Scope]
The context scope of a term 
$t$ in a set of quad-patterns  
$Q$, 
denoted by $cScope(t,Q)$ is given as: 
$cScope(t,Q)$ $=$ $\{c \ | \ c\colon(s$, $p$, $o)$ $\in$ $Q,s=t \vee p=t \vee o=t \}$.
\end{definition} 
\noindent For any quad-system $QS_{\C}$ $=$ $\langle Q_{\C},R \rangle$, let 
$c_c$ be an ad hoc context identifier such that $c_c \not \in \C$, then for $r_i$ $=$ $body(r_i)(\vec x$, $\vec z)$ $\rightarrow$ $head(r_i)(\vec x$, 
$\vec y)$ $\in$ $R$, we define transformations $augS(r_i)$, $augM(r_i)$, $augC(r_i)$ as follows:
\begin{eqnarray}
&& augS(r_i)= body(r_i)(\vec x, \vec z) \rightarrow head(r_i)(\vec x, \vec y) \wedge  \forall y_j \in   \{\vec y\} \ [  
 \bigwedge_{x_k \in \{\vec x\}} c_c\colon(x_k, \nonumber \\
&& \text{descendantOf}, y_j) \ \wedge \ c_c\colon(y_j, 
  \text{descendantOf}, y_j) \ \wedge c_c\colon(y_j, \text{originRuleId},i) \ \wedge \hspace{0.5cm} \nonumber \\
&&   c_c\colon(y_j, \text{originVector},\vec x) ]  \hspace{3.7cm}\nonumber
\end{eqnarray}
It should be noted that $c_c\colon(y_j$, originVector, $\vec x)$ is not a valid quad pattern, 
 and is only used for notation brevity. In the actual implementation, vectors can be stored using an rdf container data structure
 such as $\texttt{rdf:List}$, $\texttt{rdf:Seq}$ or by typecasting it as a string. 
 \begin{eqnarray}
&& augM(r_i)= body(r_i)(\vec x, \vec z) \rightarrow head(r_i)(\vec x, \vec y) \wedge  \forall y_j \in \{\vec y\} \ [  \bigwedge_{x_k \in \{\vec x\}} c_c\colon(x_k,  \nonumber \\
&&  \text{descendantOf}, y_j) \ \wedge \ c_c\colon(y_j, 
  \text{descendantOf}, y_j) \ \wedge c_c\colon(y_j, \text{originRuleId},i)] \hspace{1.3cm} \nonumber 
\end{eqnarray}
\begin{eqnarray}
&& augC(r_i)= body(r_i)(\vec x, \vec z) \rightarrow head(r_i)(\vec x, \vec y) \wedge  \forall y_j \in   \{\vec y\} \ 
 [  \bigwedge_{x_k \in \{\vec x\}} c_c\colon(x_k,\nonumber \\
&& \text{descendantOf}, y_j) \wedge \ c_c\colon(y_j,  
  \text{descendantOf}, y_j) \ \wedge  \bigwedge_{ c \in cScope(y_j, head(r_i))} c_c\colon(y_j,  \hspace{1cm} \nonumber \\
&& \text{originContext}, c) ]  \hspace{5.5cm}\nonumber
\end{eqnarray}
Intuitively, the transformation $augS/augM/augC$ on a  BR $r_i$,
augments the head part of $r_i$ with additional types of quad patterns, which are the following:
\begin{enumerate}
 \item $c_c\colon(x_k, \text{descendantOf}, y_j)$, 
for every existentially quantified variable $y_j$ in $\vec y$ and universally quantified variable $x_k$ $\in$ 
$\{\vec x\}$. This is done because, during dChase computation any application of an assignment $\mu$ to $r_i$ such that $\vec x[\mu]$
$=\vec a$,  resulting 
in the generation of a Skolem blank node $\_\colon b$ $=$ $\mu^{ext(\vec y)}(y_j)$, any $a_i \in \{\vec a\}$ is a descendant of
$\_\colon b$. Hence, due to these additional quad-patterns,
quads of the form $c_c\colon(a_i$, \text{descendantOf}, $\_\colon b)$ are also produced, and
in this way, keeps track of the descendants of any Skolem blank node produced.
\item  $c_c\colon(y_j$, descendantOf, $y_j)$, in order
to maintain also the reflexivity of `descendantOf' relation.
\item  $c_c\colon(y_j, \text{originContext},c)$, 
for every existentially quantified variable $y_j$ in $\{\vec y\}$, every $c$ $\in$ $cScope(y_j$, $head(r_i))$.
This is done because during dChase computation, any application of an assignment $\mu$ on $r_i$, such that $\vec x[\mu]$
$=\vec a$, resulting 
in the generation of a Skolem blank node $\_\colon b$ $=$ $\mu^{ext(\vec y)}(y_j)$, $c$ is an origin context of $\_\colon b$. 
 Hence due to these additional quad-patterns,
quads of the form $c_c\colon(\_\colon b, \text{originContext},c)$ is also produced. In this way, we keep track
of the origin-contexts of any Skolem blank node produced.
\item $c_c\colon(y_j$, originVector, $\vec x)$, 
This is done because during the dChase computation, for any application of an assignment $\mu$ on $r_i$, such that $\vec x[\mu]$
$=\vec a$, resulting  in the generation of a Skolem blank node $\_\colon b$ $=$ $\mu^{ext(\vec y)}(y_j)$,
$\vec a$ is the origin vector of $\_\colon b$.  Hence, due to these additional quad-patterns,
quads of the form $c_c\colon(\_\colon b$, originVector, $\vec a)$ is also produced. In this way, we keep track
of the origin vector of any Skolem blank node produced.
\item $c_c\colon(y_j$, originRuleId, $i)$, 
for every  existentially quantified variable $y_j$ in $\{\vec y\}$, inorder to keep track of the ruleId of the BR used to create any Skolem blank node.
\end{enumerate}
It can be noticed that for any BR $r_i$ without existentially quantified variables, the transformations $augS/augM/augC$ leaves $r_i$ unchanged.
For any set of BRs $R$, let 
\begin{eqnarray}
&& augS(R) \ (\text{resp. } augM(R), \text{resp. } augC(R)) =\nonumber \bigcup_{r_i \in R} augS(r_i) \ (\text{resp. } augM(r_i), \\
&&  \text{resp. }augC(r_i)) \cup  \{c_c \colon (x_1, \text{descendantOf}, z_1) \wedge c_c \colon (z_1, \text{descendantOf},  x_2)  \nonumber \\
&& \rightarrow  c_c \colon (x_1, \text{descendantOf}, x_2)\} \nonumber
\end{eqnarray}
\noindent The function unSafeTest (resp. unMSafeTest, resp. unCSafeTest) defined below, given a  
BR $r_i$ $=$  $body(r_i)(\vec x$, $\vec z)$ $\rightarrow$ $head(r_i)(\vec x$, $\vec y)$, an assignment $\mu$, and a quad-graph $Q$ checks, if application
of $\mu$ on $r_i$  violates the safety (resp. msafety, resp. csafety) condition on $Q$. 

\noindent\textbf{unSafeTest}($r_i, \mu, Q$)$=$True iff $\exists \_\colon b$, $\_\colon b' \in \bn$, 
with all the following conditions being satisfied:
\begin{itemize}
\item 
$\_\colon b \in \{\vec x[\mu]\}$, and
\item 
$c_c\colon(\_\colon b',\text{descendantOf}, \_\colon b)\in  Q$, and 
\item 
$c_c\colon(\_\colon b', \text{originRuleId}, i) \in Q$, and 
\item 
$c_c\colon(\_\colon b', \text{originVector},\vec a) \in Q$, and 
$\vec a \cong \vec x[\mu]$.
\end{itemize}
Intuitively, unSafeTest returns True, if 
 $\mu$ applied to $r_i$ will produce a fresh Skolem blank node $\_\colon b''$, whose child $\_\colon b$ $\in$ $\{\vec x[\mu]\}$,
 and according to knowledge in $Q$,
  $\_\colon b'$ is a descendant of $\_\colon b$  
such that the origin ruleId of $\_\colon b'$ is $i$ (which is also the origin ruleId of $\_\colon b''$) and the origin vector of $\_\colon b'$ is isomorphic to 
the origin vector of $\vec x[\mu]$ (which is also the origin vector of $\_\colon b''$). The functions unMSafeTest and unCSafeTest are similarly defined 
as follows:

\noindent\textbf{unMSafeTest}($r_i$, $\mu$, $Q$)$=$True iff $\exists \_\colon b$, $\_\colon b' \in \bn$, 
with all the following conditions being satisfied:
\begin{itemize}
\item 
$\_\colon b \in \{\vec x[\mu]\}$, and
\item 
$c_c\colon(\_\colon b'$,\text{descendantOf}, $\_\colon b)\in  Q$, and 
\item 
$c_c\colon(\_\colon b'$, \text{originRuleId}, $i) \in Q$. 
\end{itemize}
\noindent\textbf{unCSafeTest}($r_i, \mu$, $Q$)$=$True iff $\exists \_\colon b$, $\_\colon b' \in \bn, 
 \ \exists y_j$ $\in$ $\{\vec y\}$, with all the following being satisfied:
\begin{itemize}
\item 
$\_\colon b \in \{\vec x[\mu]\}$, and
\item 
$c_c\colon(\_\colon b'$, \text{descendantOf}, $\_\colon b)\in  Q$, and 
\item 
$\{c \ | \ c_c\colon(\_\colon b'$, \text{originContext}, $c) \in Q\}$
$=$ $cScope($ $y_j$, $head(r_i)(\vec x$, $\vec y))\setminus \{c_c\}$.
\end{itemize}
For any BR $r_i$ and an assignment $\mu$, the \emph{safe}/\emph{msafe}/\emph{csafe application} of $\mu$ on $r_i$ w.r.t. a quad-graph $Q_{\C}$ is defined as follows:
\[
apply^{\text{safe}}(r_i, \mu, Q_{\C})= \left\{
\begin{array}{l}
 \textbf{unSafe,} \ \text{ If } \text{unSafeTest}(r_i, \mu, Q_{\C})=\text{True}; \\
apply(r_i, \mu), \ \text{ Otherwise};
\end{array}
\right.
\]
\[
apply^{\text{msafe}}(r_i, \mu, Q_{\C})= \left\{
\begin{array}{l}
 \textbf{unMSafe,} \ \text{ If } \text{unMSafeTest}(r_i,\mu, Q_{\C})=\text{True}; \\
apply(r_i, \mu), \ \text{ Otherwise};
\end{array}
\right.
\]
\[
apply^{\text{csafe}}(r_i, \mu, Q_{\C})= \left\{
\begin{array}{l}
 \textbf{unCSafe,} \ \text{ If } \text{unCSafeTest}(r_i,\mu, Q_{\C})=\text{True}; \\
apply(r_i, \mu), \ \text{ Otherwise};
\end{array}
\right.
\]
where \textbf{unSafe} $=$ $c_c\colon(\text{unsafe}$, \text{unsafe}, \text{unsafe}) 
(resp. \textbf{unMSafe} $=$ $c_c\colon(\text{unmsafe}$, \text{unmsafe}, \text{unmsafe}), 
resp. \textbf{unCSafe} $=$ $c_c\colon(\text{uncsafe}$, \text{uncsafe}, \text{uncsafe})$)$
is a distinguished quad
that is generated, if the prerequisites of safety (resp. msafety, resp. csafety) is violated.
For any quad-system $QS_{\C}$ $=$ $\langle Q_{\C},R \rangle$, we define its \emph{safe dChase} $dChase^{\text{safe}}(QS_{\C})$ as follows: 

 $dChase^{\text{safe}}_0(QS_{\C})$ $=$ $Q_{\C}$;  
 $dChase^{\text{safe}}_{m+1}(QS_{\C})$ $=$ $dChase^{\text{safe}}_m(QS_{\C})$
 $\cup$  $apply^{\text{safe}}(r_i$, $\mu$, $dChase^{\text{safe}}_m(QS_{\C}))$, if  $\exists \ r_i$  $\in$ $augS(R)$, assignment $\mu$ 
 such that $applicable_{augS(R)}(r_i$, $\mu$, $dChase^{\text{safe}}_m(QS_{\C}))$; 
 
 $dChase^{\text{safe}}_{m+1}(QS_{\C})$ $=$ $dChase^{\text{safe}}_m(QS_{\C})$, \text{otherwise}; for any $m \in \mathbb{N}$.

$dChase^{\text{safe}}(QS_{\C})=\bigcup_{m\in \mathbb{N}} dChase^{\text{safe}}_m(QS_{\C})$ 

The termination condition for safe dChase computation can be implemented using the following conditional:
If there exists $m$ such that 

$dChase^{\text{safe}}_m(QS_{\C})$ $=$ $dChase^{\text{safe}}_{m+1}(QS_{\C})$; then

$dChase^{\text{safe}}(QS_{\C})=dChase^{\text{safe}}_m(QS_{\C})$.

\noindent The dChases $dChase^{\text{msafe}}(QS_{\C})$ and $dChase^{\text{csafe}}(QS_{\C})$ are defined, similarly, for 
msafe and csafe quad-systems, respectively. We bring to the notice of the reader that although application 
of any $augS(r)$ (resp. $augM(r)$, resp. $augC(r)$) produces quad-patterns of the form $c_c\colon (\_\colon b$, descendantOf, $\_\colon b)$, 
for any Skolem blank node $\_\colon b$ generated, there is no raise of a false alarm in the 
unSafeTest (resp. unMSafeTest, resp. unCSafeTest). This is because unSafeTest (resp. unMSafeTest, resp. unCSafeTest) 
on a bridge rule $r$ $=$ $body(r)(\vec x$, $\vec z)$ $\rightarrow$ $head(r)(\vec x$, $\vec y)$ and assignment $\mu$ checks
if the application of $\mu$ of $r$ with the fresh $\_\colon b''$ assigned to a $y_i \in \{\vec y\}$ by $\mu^{ext(\vec y)}$
would have a child  $\_\colon b \neq b''$ assigned to some $x_i \in \{\vec x\}$ by $\mu$, such that
there exists a  quad of the form $c_c\colon (\_\colon b'$, descendantOf, $\_\colon b)$
in the safe (resp. msafe, resp. csafe) dChase constructed so far, and  $\_\colon b''$ and $\_\colon b'$ 
have the same origin ruleId and originVector (resp. originRuleId, resp. originContexts). Note 
that in the above $\_\colon b'$ should also be distinct from $\_\colon b''$, and hence rules out the 
case in which unSafeTest (resp. unMSafeTest, resp. unCSafeTest) returns True because 
of the detection of a blank node as a self descendant of itself.

\noindent The following theorem shows that the procedure above described for detecting unsafe quad-systems is sound and complete:
\begin{theorem}\label{theorem:soundnessCompleteness}
  For any quad-system $QS_{\C}$ $=$ $\langle Q_{\C},R \rangle$, the quad \textbf{unSafe} (resp. \textbf{unMSafe}, resp. \textbf{unCSafe}) 
  $\in dChase^{\text{safe}}(QS_{\C})$ (resp. $dChase^{\text{msafe}}(QS_{\C})$, resp. $dChase^{\text{csafe}}(QS_{\C})$),
iff $QS_{\C}$  is unsafe (resp. unmsafe, resp. uncsafe). 
\end{theorem}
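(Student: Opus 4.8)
The plan is to prove both directions by relating $dChase^{\text{safe}}(QS_{\C})$ to the ordinary $dChase(QS_{\C})$ through two bridging lemmas and then reading the unsafe condition directly off the bookkeeping quads in the auxiliary context $c_c$. Since $\textsc{csafe}\subset\textsc{msafe}\subset\textsc{safe}$ and the three tests differ only in which labels they compare (originVector$+$originRuleId, originRuleId, originContexts), I would carry out the argument for the safe case and remark that the msafe/csafe cases follow verbatim after the obvious substitutions. The first bridging lemma (\emph{faithful projection}) states that, because $c_c\not\in\C$ while the bodies of the augmented original BRs mention only $\C$-contexts and the body of the transitivity rule of $augS(R)$ mentions only $c_c$, the accounting quads can never enable or disable an application of an original rule; hence, as long as no \texttt{unSafeTest} fires, the restriction of $dChase^{\text{safe}}_m(QS_{\C})$ to $\C$-quads coincides with a prefix of $dChase(QS_{\C})$, with the same original-rule applications and the same fresh Skolem blank nodes. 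The second lemma (\emph{accounting correctness}) is the invariant that, just before any original rule is applied, the $c_c$-part records exactly the bookkeeping of every blank node generated so far: $c_c\colon(k',\text{descendantOf},k)$ holds iff $k'$ is a descendant of $k$ or $k'=k$ (the reflexive–transitive closure of childOf), and the originRuleId/originVector quads carry the values fixed by Lemma~\ref{lemma:BnodeUniquenessLemma}.

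For soundness, suppose \textbf{unSafe} is produced, i.e.\ $\texttt{unSafeTest}(r_i,\mu,\cdot)$ fires with witnesses $\_\colon b\in\{\vec x[\mu]\}$ and $\_\colon b'$. By the two lemmas the ordinary dChase does apply $(r_i,\mu)$ (the rule is applicable and the $\C$-projection agrees up to this point), generating a fresh $\_\colon b''=\mu^{ext(\vec y)}(y_j)$ with $originRuleId(\_\colon b'')=i$, $originVector(\_\colon b'')=\vec x[\mu]$, and $\_\colon b$ among its children. Since $\_\colon b'$ is a reflexive–transitive descendant of the child $\_\colon b$, it is a \emph{proper} descendant of $\_\colon b''$; together with $originRuleId(\_\colon b')=i$ and $originVector(\_\colon b')\cong\vec x[\mu]$ this is precisely the unsafe condition, and $\_\colon b'\neq\_\colon b''$ because $\_\colon b''$ is fresh. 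Hence $QS_{\C}$ is unsafe.

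For completeness, assume $QS_{\C}$ is unsafe but the safe dChase never aborts; then by the lemmas its $\C$-projection is all of $dChase(QS_{\C})$ and its accounting is faithful and saturated. Let $\_\colon b_2$ be the \emph{first-generated} Skolem blank node witnessing unsafety, i.e.\ an ancestor of some $\_\colon b_1$ with equal originRuleId $i$ and $originVector(\_\colon b_1)\cong originVector(\_\colon b_2)$, and let $(r_i,\mu_2)$ be the application producing $\_\colon b_2$. Every node on the childOf-chain from $\_\colon b_1$ up to the child $a\in\{\vec x[\mu_2]\}$ through which the descendance passes is generated strictly before $\_\colon b_2$, so by saturation the edge $c_c\colon(\_\colon b_1,\text{descendantOf},a)$ is present when $(r_i,\mu_2)$ becomes applicable; with the recorded originRuleId and originVector of $\_\colon b_1$ this forces $\texttt{unSafeTest}(r_i,\mu_2,\cdot)$ to return True, contradicting the no-abort assumption. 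Thus \textbf{unSafe} must be produced, and since the first witness occurs at a finite stage the abort happens even when $dChase(QS_{\C})$ is infinite.

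I expect the main obstacle to be the saturation clause of the accounting lemma: guaranteeing that the \text{descendantOf} transitive closure over already-generated blank nodes is complete at the exact moment the critical original rule is tested, despite the interleaving imposed by the $\prec$-ordering. I would discharge this by the observation underlying the first lemma --- the accounting rules touch only $c_c$ and therefore commute with, and are independent of, the original rule applications --- so they may be scheduled eagerly to fixpoint between original steps without altering the $\C$-projection. The same independence shows that membership of \textbf{unSafe} in the limit $dChase^{\text{safe}}(QS_{\C})$ does not depend on the order in which accounting steps are taken, which is what legitimises reasoning about a single canonical schedule in the two directions above.
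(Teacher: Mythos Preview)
Your approach is essentially the same as the paper's: both split into soundness and completeness directions supported by bridging claims that (i) identify the $\C$-projection of the augmented chase with the ordinary $dChase$ and (ii) establish that the $c_c$-bookkeeping faithfully records the descendantOf and origin-attribute information; the paper formulates these as numbered Claims inside separate Soundness and Completeness lemmas and works out the csafe variant in detail rather than safe. Your identification of the transitive-closure saturation timing as the crux of completeness is well placed --- the paper's completeness argument is comparably informal at precisely that step, concluding that the test ``should return True, for some $l\in\mathbb{N}$'' without explicitly verifying that this $l$ coincides with the iteration at which the rule is applicable under the fixed $\prec$-order.
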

\noindent It should be noted that for any quad-system $QS_{\C}$ $=$ $\langle Q_{\C}$, $R\rangle$,  
$dChase^{\text{safe}}(QS_{\C})$ (resp. $dChase^{\text{msafe}}(QS_{\C})$, resp. $dChase^{\text{csafe}}(QS_{\C})$) is a finite set and hence 
the iterative procedure which we described earlier terminates, regardless of whether $QS_{\C}$ is safe (resp. msafe, resp. csafe) or not. 
This is because if $QS_{\C}$ is safe (resp. msafe, resp. csafe), then, as we have seen before, there exists a double exponential bound on 
number of quads in its dChase. Hence, there is an iteration in which no new quad is generated, which leads to stopping of computation. Otherwise,
if $QS_{\C}$ is unsafe (resp. msafe, resp. csafe), then from Theorem~\ref{theorem:soundnessCompleteness}, we know that the 
quad \textbf{unSafe} (resp. \textbf{unMSafe}, resp. \textbf{unCSafe}) gets generated in $dChase^{\text{safe}}(QS_{\C})$ (resp. $dChase^{\text{msafe}}(QS_{\C})$, 
resp. $dChase^{\text{csafe}}(QS_{\C})$) in not more than $\bigO(2^{2^{\|QS_{\C}\|}})$ iterations. 
This implies that there exists an iteration $m$ such that the quad  \textbf{unSafe} (resp. \textbf{unMSafe}, 
resp. \textbf{unCSafe}) is in $dChase_m^{\text{safe}}(QS_{\C})$ (resp. $dChase_m^{\text{msafe}}(QS_{\C})$, 
resp. $dChase_m^{\text{csafe}}(QS_{\C})$). W.l.o.g, let $m$ be the first such iteration. This means that there exists a BR $r_i \in R$ with 
head $head(r_i)(\vec x$, $\vec y)$, 
assignment $\mu$ such that $applicable_{augS(R)}(r_i$, $\mu$, $dChase_{m-1}^{\text{safe}}(QS_{\C}))$  (resp. $applicable_{augM(R)}(r_i$, $\mu$, 
$dChase_{m-1}^{\text{msafe}}(QS_{\C}))$, resp. $applicable_{augC(R)}(r_i$, $\mu$, $dChase_{m-1}^{\text{csafe}}(QS_{\C}))$ holds. 
By construction, since $head(r_i)[\mu^{ext(\vec y)}]$ is not generated,
and instead the  quad \textbf{unSafe} (resp. \textbf{unMSafe}, resp. \textbf{unCSafe}) is generated, 
$applicable_{augS(R)}(r_i$, $\mu$, $dChase_{m}^{\text{safe}}(QS_{\C}))$  (resp. $applicable_{augM(R)}(r_i$, $\mu$, $dChase_{m}^{\text{msafe}}(QS_{\C}))$, 
resp. $applicable_{augC(R)}($ $r_i$, $\mu$, $dChase_{m}^{\text{csafe}}(QS_{\C}))$ holds yet again. This means that the termination condition is satisfied at 
iteration $m+1$, and hence computation stops. Note that regardless of whether a given quad-system is safe (resp. msafe, resp. csafe) or not, the 
number of safe (resp. msafe, resp. csafe) dChase iterations is double exponentially bounded in the size of the quad-system.
Consequently, we derive the following theorem.
\begin{theorem}
Recognizing whether a quad-system is safe/msafe/csafe is in 2EXPTIME.
\end{theorem}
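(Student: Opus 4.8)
The plan is to present, as the recognition algorithm, exactly the safe dChase procedure developed in this subsection, and then to bound its running time by a double exponential. For the safe case the algorithm first computes the augmented rule set $augS(R)$ from the input $QS_{\C}=\langle Q_{\C},R\rangle$, then constructs $dChase^{\text{safe}}(QS_{\C})$ using the modified application operator $apply^{\text{safe}}$, and finally reports ``unsafe'' precisely when the distinguished quad \textbf{unSafe} occurs in the result. Correctness of this answer is immediate from Theorem~\ref{theorem:soundnessCompleteness}, which asserts that \textbf{unSafe} $\in dChase^{\text{safe}}(QS_{\C})$ iff $QS_{\C}$ is unsafe. The msafe and csafe recognizers are identical, using $augM(R),apply^{\text{msafe}},\textbf{unMSafe}$ and $augC(R),apply^{\text{csafe}},\textbf{unCSafe}$ respectively, and are justified by the same theorem.

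The first thing I would verify is that the preprocessing is cheap: each transformation $augS/augM/augC$ appends to the head of a rule $r_i$ only $\bigO(|\mathit{fr}(r_i)|\cdot\|r_i\|)$ bookkeeping quad patterns (the descendantOf, originRuleId, and originVector/originContext patterns), so $\|augS(R)\|$ is polynomial in $\|QS_{\C}\|$ and computable in polynomial time; in particular $rs':=\max_{r\in augS(R)}\|r\|$ stays polynomial in $\|QS_{\C}\|$.

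The heart of the argument is the time bound, obtained by combining two facts. First, the number of iterations is double exponentially bounded irrespective of safety: if $QS_{\C}$ is safe this is Lemma~\ref{lemma:safeChaseComputationTime}, and if it is unsafe then, as argued in the paragraph preceding this theorem, $apply^{\text{safe}}$ never materializes a safety-violating Skolem blank node (such a production is diverted into the single quad \textbf{unSafe}), so every blank node appearing up to the detecting iteration still has a valid descendance tree; by the counting in the proof of Lemma~\ref{lemma:safeChaseComputationTime}(i) there are at most $\bigO(2^{2^{\|QS_{\C}\|}})$ of them, whence \textbf{unSafe} appears within $\bigO(2^{2^{\|QS_{\C}\|}})$ iterations and the termination condition then fires. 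Second, by Lemma~\ref{lemma:chaseSizeIncrease} each iteration over $augS(R)$ costs $\bigO(|augS(R)|\cdot\|dChase^{\text{safe}}_{i-1}(QS_{\C})\|^{rs'})$, where the accounting quads inflate the dChase only to $\bigO(2^{2^{\|QS_{\C}\|}})$ (the descendantOf closure contributes at most quadratically many quads in the number of constants, preserving the double-exponential bound), and the added unSafeTest is a polynomial scan of the current quad-graph for the descendantOf/originRuleId/originVector pattern together with a polynomial vector-isomorphism check $\vec a\cong\vec x[\mu]$; thus one iteration runs in $\bigO(2^{rs'\cdot 2^{\|QS_{\C}\|}})$, i.e.\ double-exponential time. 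Multiplying the double-exponential iteration count by the double-exponential per-iteration cost remains double exponential, since $2^{2^n}\cdot 2^{2^n}=2^{2^{n+1}}$, yielding the 2EXPTIME bound; the msafe and csafe cases are analogous, the relevant descendance-tree depth ($|R|$, resp.\ bounded by $\|R\|$) keeping the blank-node count double exponential.

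The step I expect to require the most care is the iteration-count bound in the unsafe case, namely arguing that the added bookkeeping does not let the safe dChase grow past a double exponential before \textbf{unSafe} is detected; the key observation, which I would make precise, is that $apply^{\text{safe}}$ never produces a blank node violating safety, so the structural bound of Lemma~\ref{lemma:safeChaseComputationTime}(i) on admissible descendance trees continues to apply uniformly up to and including the detecting iteration.
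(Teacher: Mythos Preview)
Your proposal is correct and follows essentially the same approach as the paper: run the safe (resp.\ msafe, csafe) dChase procedure, invoke Theorem~\ref{theorem:soundnessCompleteness} for correctness, and bound the total time by observing that the number of iterations is double exponentially bounded whether or not the input is safe. The paper's own argument is in fact the single paragraph preceding the theorem, which records the iteration bound in both cases and states the theorem as a consequence; your write-up expands this with explicit attention to the polynomial size of the augmentation, the per-iteration cost via Lemma~\ref{lemma:chaseSizeIncrease}, and the reason the unsafe case still obeys the double-exponential bound on generated blank nodes, but the underlying reasoning is the same.
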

\noindent Also notice that after running procedure described above, if the quad \textbf{unSafe} (resp. \textbf{unMSafe}, resp. \textbf{unCSafe}) is not generated, then
 its safe (resp. msafe, resp. csafe) dChase itself can be used for CCQ answering, as in such a case the standard dChase is contained in safe (resp. msafe,
 resp. csafe) dChase,  and all the quads generated
 for accounting information have the  context identifier $c_c$. Hence, for any safe (resp. msafe, resp. csafe) quad-system, for any boolean CCQ 
 that does not contain quad patterns of the form $c_c\colon(s,p,o)$, the dChase entails CCQ iff the safe (resp. msafe, resp. csafe) dChase entails CCQ.
 
 A set of BRs $R$ is said to be \emph{universally safe} (resp. \emph{msafe}, resp. \emph{csafe}) iff, for any quad-graph $Q_{\C}$, 
the quad-system $\langle Q_{\C}, R\rangle$ is safe (resp. msafe, resp. csafe). 
 For any set of BRs $R$, whose set of context identifiers is $\C$, also let $U_R$ be the set of URIs that occur in 
 the triple patterns of $R$ plus an additional ad hoc blank node $\_\colon b^{crit}$, the \emph{critical quad-graph} of $R$ is defined 
 as the set $\{c\colon (s,p,o)|c \in \C, \{s,p,o\}\subseteq U_R\}$. The following property illustrates how the critical quad-graph of
 a set of BRs $R$ can be used to determine, whether or not $R$ is universally safe/msafe/csafe. 
 \begin{property}\label{prop:UniversalSafety}
  A set of BRs $R$ is universally safe (resp. msafe, resp. csafe) iff $\langle Q^{crit}_{\C}, R\rangle$ is safe (resp. msafe, resp. csafe), 
  where $Q^{crit}_{\C}$ is the critical quad-graph of $R$.
 \end{property}

\section{Range Restricted Quad-Systems: Restricting to Range Restricted BRs}\label{sec:horn}
\noindent In this section, we investigate the complexity of CCQ entailment over quad-systems, whose
BRs do not have existentially quantified variables. Such BRs are of the form: 
\begin{eqnarray}
&& c_1:t_1(\vec x, \vec z) \wedge ... \wedge c_n:t_n(\vec x, \vec z) \rightarrow 
 c'_1:t'_1(\vec x) \wedge ... \wedge c'_m:t'_m(\vec x) \nonumber
\end{eqnarray}
Note that any set of BRs $R$ of the form above can be replaced by semantically equivalent set $R'$, such that
each $r \in R'$ is the form:
\begin{equation}\label{eqn:horn-intraContextualRule}
 c_1:t_1(\vec x, \vec z), ... ,c_n:t_n(\vec x, \vec z) \rightarrow  \ c'_1:t'_1(\vec x)
\end{equation}
Also $\|R'\|$ is at most quadratic in $\|R\|$, and hence, w.l.o.g, we assume that each $r \in R$ is of the 
form (\ref{eqn:horn-intraContextualRule}).
 Borrowing the parlance from the $\forall \exists$ rules setting, 
where rules whose variables in the head part are contained in the variables in the body part are called range restricted rules~\cite{BagetLMS11},
we call such BRs \emph{range restricted} (RR) BRs. 
We call a quad-system whose BRs are all of RR-type, a \emph{RR quad-system}. 
Since there exists no existentially quantified variable in the BRs of a RR quad-system, no Skolem blank node is produced during 
dChase computation. Hence, there can be no violation of the safety/msafety/csafety condition in section \ref{sec:safe}, and hence, 
the class of RR quad-systems are contained in the class of safe/msafe/csafe quad-systems, and is also a FEC. Of course, this containment is
strict as any quad-system that contains a BR with an existential variable is not RR. Since one can determine whether 
or not a given quad-system is RR or not by simply iterating through set of BRs and checking their syntax, the following holds:
\begin{theorem}
Recognizing whether a quad-system is RR can be done in linear time.
\end{theorem}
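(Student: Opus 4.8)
The plan is to reduce the recognition task to a purely syntactic test on each bridge rule and then observe that performing all such tests together costs only linear time. Recall that a BR $r = body(r)(\vec x, \vec z) \rightarrow head(r)(\vec x, \vec y)$ is of RR-type precisely when the set of variables occurring in $head(r)$ is contained in the set of variables occurring in $body(r)$, equivalently when the vector $\vec y$ of existentially quantified variables is empty. This equivalence uses the stipulation from the definition of BRs that $\{\vec x\}$, $\{\vec y\}$, and $\{\vec z\}$ are pairwise disjoint: the head variables are $\{\vec x\}\cup\{\vec y\}$ and the body variables are $\{\vec x\}\cup\{\vec z\}$, so a head variable fails to appear in the body iff it belongs to $\{\vec y\}$. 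Hence $QS_{\C}=\langle Q_{\C},R\rangle$ is an RR quad-system iff every $r\in R$ satisfies $\var(head(r))\subseteq\var(body(r))$, and in particular the quad-graph part $Q_{\C}$ need never be inspected.

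First I would give the per-rule procedure. For a single BR $r$, scan the quad-patterns of $body(r)$ once, inserting every variable encountered into an auxiliary set $V$ that supports constant-time membership queries; then scan the quad-patterns of $head(r)$ once, and for each variable test whether it lies in $V$. The rule $r$ is RR iff all of its head variables pass this test. Since each quad-pattern contributes only a bounded number of component positions (context, subject, predicate, object), both scans touch $\bigO(\|r\|)$ symbols, so with constant-time set operations the entire check for $r$ runs in time $\bigO(\|r\|)$. Summing over $R$ then yields $\sum_{r\in R}\bigO(\|r\|)=\bigO(\|R\|)=\bigO(\|QS_{\C}\|)$.

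The only point requiring care, and hence the main obstacle, is guaranteeing that the membership structure gives strictly constant-time operations so that the overall bound is genuinely linear rather than quasi-linear. I would handle this by a single preliminary pass over the input that assigns to each distinct symbol (URI, blank node, literal, variable) a consecutive integer index; thereafter $V$ can be realized as a boolean array indexed by these integers. To keep the per-rule cost at $\bigO(\|r\|)$ without re-initializing the array for each rule, I would record the indices touched while processing $r$ and clear only those entries afterwards, so the same array is reused across all rules. Under the standard RAM model every insertion and lookup is then strictly constant time, the preprocessing pass is itself $\bigO(\|QS_{\C}\|)$, and the total running time of the recognition procedure is $\bigO(\|QS_{\C}\|)$, i.e. linear.
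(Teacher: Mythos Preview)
Your proposal is correct and follows the same approach as the paper, namely a syntactic scan of the BRs in $R$ to check for existential variables; the paper's own justification is a single sentence to this effect, and you have simply supplied the implementation details that make the linear bound precise.
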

In the following, we see that restricting to RR BRs,  size of the dChase becomes polynomial w.r.t. size of the 
input quad-system, and the complexity of CCQ entailment further reduces compared to safe/msafe/csafe quad-systems.
\begin{lemma}\label{lemma:chase-computation-horn}
For any RR quad-system $QS_{\C}=\langle Q_{\C}, R\rangle$,  the
following holds:  (i) $\|dChase(QS_{\C})\|$ $=$ $\bigO(\|QS_{\C}\|^4)$ 
(ii) $dChase(QS_{\C})$ can be computed in \textsc{EXPTIME}
(iii) If $\|R\|$ is fixed to be a constant, $dChase(QS_{\C})$ can be computed in \textsc{PTIME}.
\end{lemma}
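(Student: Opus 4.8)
The plan is to exploit the defining feature of RR BRs: since in a rule of the form (\ref{eqn:horn-intraContextualRule}) every head variable lies in the frontier $\{\vec x\}$, which consists only of body variables, the application of such a BR produces no fresh Skolem blank node. Consequently the set of constants never grows during the chase, i.e. $\const(dChase(QS_{\C})) = \const(QS_{\C})$. This invariant is the engine driving all three parts of the lemma, and establishing it first (by induction on the dChase sequence, using the $apply$ definition restricted to RR BRs) is the cleanest starting point.

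For (i), I would set $n = |\const(QS_{\C})|$ and note $n = \bigO(\|QS_{\C}\|)$. Every quad $c\colon(s,p,o)$ appearing anywhere in $dChase(QS_{\C})$ has its context identifier and all three triple components drawn from $\const(QS_{\C})$, so the number of distinct quads that can ever occur is at most $n \cdot n \cdot n \cdot n = n^4$. Since $\|dChase(QS_{\C})\| \approx 4\,|dChase(QS_{\C})|$, this yields $\|dChase(QS_{\C})\| = \bigO(\|QS_{\C}\|^4)$.

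For (ii), the finiteness from (i) guarantees the dChase sequence stabilizes; because each non-terminal iteration adds at least one quad, the number of iterations is $\bigO(\|QS_{\C}\|^4)$. By Lemma~\ref{lemma:chaseSizeIncrease}(i), iteration $i$ is computable in time $\bigO(|R| \cdot \|dChase_{i-1}(QS_{\C})\|^{rs})$ with $rs = \max_{r\in R}\|r\|$. Substituting the polynomial bound $\|dChase_{i-1}(QS_{\C})\| = \bigO(\|QS_{\C}\|^4)$ and observing $rs \leq \|R\| \leq \|QS_{\C}\|$, each iteration costs $\bigO(\|QS_{\C}\|^{4\|QS_{\C}\|})$, which is exponential; multiplying by the polynomially many iterations keeps the total within \textsc{EXPTIME}. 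For (iii), fixing $\|R\|$ to a constant makes $rs$ a constant $c$, so each iteration costs only $\bigO(\|QS_{\C}\|^{4c})$, a polynomial; with the polynomial iteration count retained, the whole computation falls in \textsc{PTIME}.

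The size bound (i) is routine once the no-new-constants invariant is in hand; the only point demanding care is the running-time accounting for (ii) and (iii), where the exponent $rs$ in Lemma~\ref{lemma:chaseSizeIncrease}(i) is the sole source of super-polynomial behaviour. Isolating $rs$ as the unique non-polynomial contribution is exactly what separates the \textsc{EXPTIME} case from the \textsc{PTIME} case, so I expect the main (mild) obstacle to be presenting this exponent-tracking cleanly rather than any genuine combinatorial difficulty.
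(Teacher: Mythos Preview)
Your proposal is correct and follows essentially the same approach as the paper: establish that RR BRs introduce no new constants so $\const(dChase(QS_{\C}))\subseteq\const(QS_{\C})$, derive the $\bigO(\|QS_{\C}\|^4)$ quad bound from the four-position counting argument, then invoke Lemma~\ref{lemma:chaseSizeIncrease} with the exponent $rs$ as the sole non-polynomial factor to separate the \textsc{EXPTIME} and \textsc{PTIME} cases. Your write-up is, if anything, slightly more explicit than the paper's (e.g.\ you spell out the $\|QS_{\C}\|^{4\|QS_{\C}\|}$ per-iteration bound), but the structure and key observations are identical.
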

\begin{proof}

(i)  
Note that the number of constants in $QS_{\C}$ is roughly equal to  $\|QS_{\C}\|$.
As no existential variable occurs in any BR in a RR quad-system $QS_{\C}$, the
set of constants $\const(dChase(QS_{\C}))$ is contained in $\const(QS_{\C})$. Since
each $c\colon(s,p,o) \in dChase(QS_{\C})$ is such that $c,s,p,o \in \const(QS_{\C})$, 
$|dChase(QS_{\C})|$ $=$ $\bigO(|\const(QS_{\C}$ $)|^4)$. Hence $\|dChase(QS_{\C})\|$ $=$ $\bigO(|\const(QS_{\C})|^4)$
$=$ $\bigO(\|QS_{\C}\|^4)$.

(ii) Since from (i) $|dChase(QS_{\C})|$ $=$ $\bigO(\|QS_{\C}\|^4)$, and in each iteration of the dChase
at least one new quad is added, the number of iterations cannot exceed $\bigO(\|QS_{\C}\|^4)$.
 Since by Lemma \ref{lemma:chaseSizeIncrease}, computation of each iteration $i$ of the dChase
 requires $\bigO(|R|*\|dChase_{i-1}(QS_{\C})\|^{rs})$ time, where $rs=max_{r \in R} \|r\|$, and  
 $rs\leq \|QS_{\C}\|$,  time required for each iteration is of the order $\bigO(2^{\|QS_{\C}\|})$ time. Although 
 the number of iterations is a polynomial,
 each iteration requires an exponential amount of time w.r.t $\|QS_{\C}\|$. Hence time complexity of dChase computation
 is in EXPTIME.

(iii)  As we know that the time taken for application of a BR $R$ is $\bigO(\|dChase_{i-1}(QS_{\C})\|^{\|R\|})$.
Since  $\|R\|$ is fixed to a constant, application of $R$ can be done in PTIME.
Hence, each dChase iteration can be computed in PTIME.
Also since the number of iterations is a polynomial in $\|QS_{\C}\|$, computing dChase is in PTIME.
\end{proof}
\begin{theorem}\label{theorem:hornDataComplexity}
 Data complexity of CCQ entailment over RR quad-systems is \textsc{PTIME}-complete.
\end{theorem}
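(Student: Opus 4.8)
The plan is to establish the two directions of completeness separately, with the membership direction being essentially a corollary of results already proved and the hardness direction carrying the real work.

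For membership in \textsc{PTIME}, recall that in the data complexity setting the set of bridge rules $R$, the schema triples, and the query $CQ()$ are all fixed to constants, so only the quad-graph $Q_{\C}$ varies. By Lemma~\ref{lemma:chase-computation-horn}(iii), under a fixed $\|R\|$ the dChase of an RR quad-system is computable in \textsc{PTIME}, and by part~(i) its size is $\bigO(\|QS_{\C}\|^4)$. By Theorem~\ref{dChaseUniversalModelProperty}(ii), $QS_{\C}\models CQ()$ iff there is a map $\mu\colon\var(CQ)\rightarrow\const$ with $CQ()[\mu]\subseteq dChase(QS_{\C})$. Since $CQ()$ is fixed, the number of candidate maps is bounded by $\|dChase(QS_{\C})\|^{\|CQ()\|}$, a polynomial in $\|QS_{\C}\|$, and each containment check is polynomial; hence membership follows.

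For \textsc{PTIME}-hardness I would reduce from the Monotone Circuit Value Problem (MCVP), which is \textsc{PTIME}-complete under logspace reductions, since forward chaining over RR bridge rules is exactly the monotone, value-free inference that a monotone circuit requires. Given a monotone circuit with \textsf{and}/\textsf{or} gates and input gates carrying fixed truth values, I would encode only its \emph{structure} into the varying quad-graph $Q_{\C}$ over a single context $c$: for each gate $g$ a quad recording its type ($c\colon(g,\texttt{rdf:type},\textsf{and})$, etc.), quads $c\colon(g,\textsf{in1},g_1)$ and $c\colon(g,\textsf{in2},g_2)$ recording its two inputs, a quad $c\colon(g,\textsf{val},\textsf{true})$ for each true input gate, and a single quad $c\colon(g_{out},\texttt{rdf:type},\textsf{Output})$ marking the output gate. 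This construction is clearly logspace computable and of size linear in the circuit.

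The evaluation logic goes into a \emph{fixed}, circuit-independent set $R$ of RR bridge rules that propagate the unary ``evaluates to true'' predicate, encoded as $c\colon(g,\texttt{rdf:type},\textsf{True})$: an input rule $c\colon(g,\textsf{val},\textsf{true})\rightarrow c\colon(g,\texttt{rdf:type},\textsf{True})$; an \textsf{and}-rule that derives $c\colon(g,\texttt{rdf:type},\textsf{True})$ from $c\colon(g,\texttt{rdf:type},\textsf{and})$ together with both inputs being \textsf{True}; and two \textsf{or}-rules that each derive $c\colon(g,\texttt{rdf:type},\textsf{True})$ from $c\colon(g,\texttt{rdf:type},\textsf{or})$ and one input being \textsf{True}. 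None of these rules uses existential variables, so $R$ is RR and $\|R\|$ is a constant. Taking the fixed boolean CCQ $CQ^{\mathit{circ}}=\exists y\ c\colon(y,\texttt{rdf:type},\textsf{Output})\wedge c\colon(y,\texttt{rdf:type},\textsf{True})$, a straightforward induction on gate depth against the dChase construction shows that $c\colon(g,\texttt{rdf:type},\textsf{True})\in dChase(QS_{\C})$ iff gate $g$ evaluates to true; hence $QS_{\C}\models CQ^{\mathit{circ}}$ iff the output gate, and thus the whole circuit, evaluates to true. The steps are largely routine; the subtle direction is completeness, where I must invoke the fact that $dChase(QS_{\C})$ is a \emph{universal} model (Theorem~\ref{dChaseUniversalModelProperty}(i)) so that non-derivation of $c\colon(g,\texttt{rdf:type},\textsf{True})$ faithfully reflects the gate being false in \emph{every} model rather than merely in some model; combined with monotonicity of the rules this guarantees the encoded inference coincides exactly with circuit evaluation. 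I expect this matching of the least-fixpoint dChase with the unique circuit value, and the verification that the reduction stays within logspace, to be the main (though mild) obstacle.
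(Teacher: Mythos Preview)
Your proposal is correct, and differs from the paper's proof mainly in the concrete choices made at both ends.

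For membership, the paper simply invokes the \textsc{PTIME} data-complexity upper bound already established for safe quad-systems (Theorem~\ref{thoeorem:safe-computational-properties}) together with the containment of RR in \textsc{safe}. Your argument is instead self-contained: you compute the dChase directly via Lemma~\ref{lemma:chase-computation-horn}(iii) and then enumerate the polynomially many candidate assignments for the fixed query. Both are fine; yours avoids the dependency on the heavier safe-quad-system machinery, while the paper's is shorter given what has already been proved.

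For hardness, the paper reduces from \emph{3HornSat} rather than MCVP. It uses two contexts $c_t,c_f$: each pure 3Horn clause $P_1\wedge P_2\rightarrow P_3$ is stored as the data quad $c_f\colon(P_1,P_2,P_3)$, truth is seeded by $c_t\colon(t,\texttt{rdf:type},T)$, a single fixed RR bridge rule fires a clause whenever both premises are true, and the fixed query asks whether $c_t\colon(f,\texttt{rdf:type},T)$ holds. Your MCVP reduction is equally valid and arguably more transparent about why forward chaining captures \textsc{PTIME}; the paper's 3HornSat reduction is slightly leaner (one rule, no gate-type dispatch). Both keep $R$ and $CQ$ fixed and push the instance into $Q_{\C}$, as required for a data-complexity lower bound.

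One small expository point: in your ``subtle direction'' you do not actually need Theorem~\ref{dChaseUniversalModelProperty}(i) separately; part~(ii) already gives the full biconditional, so non-derivation of the output-gate fact in $dChase(QS_{\C})$ immediately yields $QS_{\C}\not\models CQ^{\mathit{circ}}$. The phrase ``the gate being false in every model'' conflates circuit semantics with quad-system models; what you really need, and what the induction on dChase iterations gives you, is just soundness of the simulation: $c\colon(g,\texttt{rdf:type},\textsf{True})\in dChase(QS_{\C})$ only if gate $g$ evaluates to true. That, plus Theorem~\ref{dChaseUniversalModelProperty}(ii), closes the argument.
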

\begin{proof}
(Membership) Follows from the membership in P of data complexity of CCQ entailment
for safe quad-systems, whose expressivity subsumes the expressivity of 
RR quad-systems (Theorem \ref{thoeorem:safe-computational-properties}). 

 (Hardness)
In order to prove P-hardness, we reduce a well known P-complete problem, 3HornSat, i.e. the satisfiability of 
propositional Horn formulas with at most 3 literals. Note that a (propositional) Horn formula is a propositional formula of the form:
\begin{eqnarray}\label{eqn:PropHornFormula}
 P_1 \wedge \ldots \wedge P_n \rightarrow P_{n+1}  
\end{eqnarray}
where $P_i$, for  $1 \leq i \leq n+1$, are either propositional variables or constants $t$, $f$, that represents
true and false, respectively. Note that for any propositional variable $P$, the fact that ``$P$ holds'' is 
represented by the formula $t \rightarrow P$, and ``$P$ does not hold'' is 
represented by the formula $P \rightarrow f$.
A 3Horn formula is a formula of the form~(\ref{eqn:PropHornFormula}), where $1\leq n\leq 2$. Note that any (set of) 
Horn formula(s) $\Phi$ can be transformed in polynomial time to a polynomially sized set $\Phi'$ of 3Horn formulas,
by introducing auxiliary propositional variables such that $\Phi$ is satisfiable iff $\Phi'$ is satisfiable.
A pure 3Horn formula is a 3Horn formula of the form~(\ref{eqn:PropHornFormula}), where $n=2$. Any 3Horn formula $\phi$ that is not pure
can be trivially converted to equivalent pure form by appending a $\wedge \ t$ on the body part of $\phi$. For instance, 
 $P\rightarrow Q$,  can be converted to  $P \wedge t \rightarrow Q$. Hence, w.l.o.g. we assume that any set of 3Horn formulas is pure, and is of the form:
 \begin{eqnarray}\label{eqn:PropPure3HornFormula}
  P_1 \wedge P_2 \rightarrow P_3  
 \end{eqnarray}
In the following, we reduce the satisfiability problem of pure 3Horn formulas to CCQ 
entailment problem over a quad-system whose set of schema triples, the set of BRs, and the CCQ $CQ$ are all fixed. 

 For any set of pure Horn formulas $\Phi$, we construct the quad-system $QS_{\C}=\langle Q_{\C}, R\rangle$, where
 $\C=\{c_t, c_f\}$. For any formula $\phi \in \Phi$ of the form (\ref{eqn:PropPure3HornFormula}), $Q_{\C}$ contains
 a quad $c_f \colon (P_1, P_2, P_3)$. In addition $Q_{\C}$ contains a quad $c_t\colon (t$, \texttt{rdf:type}, $T)$. 
 $R$ is the singleton that contains only the following fixed BR:
 \begin{eqnarray}
&&  c_t\colon (x_1, \texttt{rdf:type}, T), c_t\colon (x_2,  \texttt{rdf:type}, T), 
  c_f\colon (x_1, x_2, x_3) \rightarrow c_t\colon (x_3,  \nonumber \\
&&  \texttt{rdf:type}, T) \nonumber 
 \end{eqnarray}
Let the $CQ$ be the fixed query $c_t:(f, \texttt{rdf:type}, T)$. 

Now, it is easy to see that $QS_{\C}$ $\models$ $CQ$, iff $\Phi$ is not satisfiable.
\end{proof}

\begin{theorem}
Combined complexity of CCQ entailment over  RR quad-systems is in \textsc{EXPTIME}.
\end{theorem}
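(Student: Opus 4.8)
The plan is to decide CCQ entailment by materializing the dChase and then performing a brute-force homomorphism check of the query into it, reusing the two results already available for this situation. The whole argument is an upper-bound argument, so it suffices to exhibit one EXPTIME procedure and bound its running time. First I would invoke Lemma~\ref{lemma:chase-computation-horn}: part~(ii) guarantees that for an RR quad-system $QS_{\C}=\langle Q_{\C}, R\rangle$ the set $dChase(QS_{\C})$ can be computed in \textsc{EXPTIME}, while part~(i) guarantees that the resulting structure is small, namely $\|dChase(QS_{\C})\|=\bigO(\|QS_{\C}\|^4)$. This materialization is the first phase of the procedure and already lies within the claimed bound.

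Next I would reduce entailment to a containment test against this materialized structure. For the general (non-boolean) problem $QS_{\C}\models CQ(\vec a)$ I would first substitute the answer vector $\vec a$ for the free variables to obtain the boolean query $CQ(\vec a)$, and then apply Theorem~\ref{dChaseUniversalModelProperty}(ii), which states that $QS_{\C}\models CQ()$ holds iff there is a map $\mu\colon\var(CQ)\rightarrow\const$ with $\{CQ()\}[\mu]\subseteq dChase(QS_{\C})$. Thus the remaining task is purely combinatorial: search for some assignment of the query variables to constants of the dChase that embeds the query into the dChase. I would bound this search exactly as in the proof of Lemma~\ref{lemma:safe-computational-properties}: the number of candidate assignments obtained by binding the constants occurring in $dChase(QS_{\C})$ to the variables of $CQ()$ is at most $\|dChase(QS_{\C})\|^{\|CQ()\|}$.

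Substituting the polynomial bound from Lemma~\ref{lemma:chase-computation-horn}(i) gives $\bigO\bigl((\|QS_{\C}\|^4)^{\|CQ()\|}\bigr)=\bigO(\|QS_{\C}\|^{4\|CQ()\|})$ candidate assignments. Since $\|CQ()\|$ is part of the input in the combined-complexity setting, the exponent $4\|CQ()\|\log\|QS_{\C}\|$ is polynomial in the size of the input, so this count is single-exponential, i.e. of the form $2^{\mathrm{poly}}$. For each candidate $\mu$, testing $\{CQ()\}[\mu]\subseteq dChase(QS_{\C})$ is done in time polynomial in $\|dChase(QS_{\C})\|$ and $\|CQ()\|$, hence polynomial in the input by part~(i). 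Multiplying an exponential number of candidates by a polynomial per-candidate cost keeps the query-evaluation phase in \textsc{EXPTIME}; combining it with the \textsc{EXPTIME} materialization phase yields an overall \textsc{EXPTIME} procedure, which proves membership.

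I do not expect any genuine obstacle here, precisely because RR rules introduce no existential variables and hence no Skolem blank nodes: the dChase stays polynomially sized (Lemma~\ref{lemma:chase-computation-horn}(i)), in contrast with the doubly exponential blow-up in the safe/msafe/csafe case. The only delicate point I would take care to state explicitly is the \emph{source} of the exponential: it arises solely from the query length through the factor $\|dChase(QS_{\C})\|^{\|CQ()\|}$, and not from the size of the dChase itself, which is why the bound is \textsc{EXPTIME} rather than the \textsc{2EXPTIME} bound of the safe classes.
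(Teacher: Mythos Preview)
Your proposal is correct and follows essentially the same approach as the paper: materialize the dChase using Lemma~\ref{lemma:chase-computation-horn} (EXPTIME computation, polynomial size), then enumerate the at most $\|dChase(QS_{\C})\|^{\|CQ()\|}$ groundings of the query and test containment of each in polynomial time. Your exposition is in fact slightly more explicit than the paper's, since you invoke Theorem~\ref{dChaseUniversalModelProperty}(ii) to justify the reduction to containment and spell out where the single exponential comes from.
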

\begin{proof} 
(Membership)
By Lemma \ref{lemma:chase-computation-horn}, for any RR quad-system $QS_{\C}$, its dChase $dChase(QS_{\C})$
can be computed in EXPTIME. Also by Lemma \ref{lemma:chase-computation-horn}, its dChase size $\|dChase(QS_{\C})\|$
is a polynomial w.r.t to $\|QS_{\C}\|$.  
A boolean CCQ $CQ()$ can naively be evaluated by grounding the set of constants in the dChase to the variables in the $CQ()$,
and then checking if any of these groundings are contained in $dChase(QS_{\C})$. The number of such 
groundings can at most be $\|dChase(QS_{\C})\|^{\|CQ()\|}$ ($\dagger$). Since $\|dChase(QS_{\C})\|$ is a polynomial in
$\|QS_{\C}\|$, there are an exponential number of groundings w.r.t $\|CQ()\|$. Since containment of each of these 
groundings can be checked in time polynomial w.r.t. the size of $dChase(QS_{\C})$, and since $\|dChase(QS_{\C})\|$ is a polynomial w.r.t. $\|QS_{\C}\|$,
 the time complexity of CCQ entailment is in EXPTIME. 
\end{proof}
\noindent Concerning the combined complexity of CCQ entailment of RR quad-systems, we leave the lower bounds open.
\subsection{Restricted RR Quad-Systems}
\noindent We call those quad-systems with BRs of form (\ref{eqn:horn-intraContextualRule})
with a fixed bound on $n$ as \emph{restricted RR quad-systems}. They can be further
classified as linear, quadratic, cubic,..., quad-systems, when $n=1,2,3,...$, respectively.
\begin{theorem}
 Data complexity of CCQ entailment over restricted RR quad-systems is \textsc{P}-complete.
\end{theorem}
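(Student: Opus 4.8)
The plan is to prove the two directions, membership and hardness, separately, reusing almost entirely the machinery already developed for RR quad-systems.

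\textbf{Membership.} Restricted RR quad-systems differ from RR quad-systems only by the extra syntactic restriction that the number $n$ of body atoms in each BR of the form~(\ref{eqn:horn-intraContextualRule}) is bounded by a fixed constant. Hence every restricted RR quad-system is an RR quad-system, and the \textsc{PTIME} upper bound on the data complexity of CCQ entailment established in Theorem~\ref{theorem:hornDataComplexity} applies verbatim. No additional argument is required for membership; it is simply inherited from the superclass.

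\textbf{Hardness.} The key observation is that the P-hardness reduction exhibited in the proof of Theorem~\ref{theorem:hornDataComplexity} already lands inside the class of restricted RR quad-systems. There, satisfiability of a set of pure 3Horn formulas is reduced to CCQ entailment by encoding each clause $P_1 \wedge P_2 \rightarrow P_3$ as a single quad $c_f\colon(P_1,P_2,P_3)$ in the quad-graph, while the set of BRs $R$ consists of the \emph{single, fixed} rule whose body comprises exactly three quad-patterns. A BR with a body of three quad-patterns is, by definition, a (cubic) restricted RR BR, so the resulting quad-system $QS_{\C}$ is restricted RR. Since in the data complexity setting $R$, the schema triples, and the query $CQ$ are all held fixed, and only the data varies with the input formula set $\Phi$, this is precisely a data-complexity reduction into the restricted RR class.

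I would then note that this reduction is computable in logarithmic space: each clause maps independently to one quad, and the two fixed facts together with the single rule are constant, so that 3HornSat, which is P-complete, reduces under logspace reductions to CCQ entailment over restricted RR quad-systems. Combining this with the membership argument yields P-completeness. There is essentially no obstacle here; the only point meriting a line of verification is that the reduction of Theorem~\ref{theorem:hornDataComplexity} genuinely uses a BR of bounded body size, which is immediate by inspection of its construction.
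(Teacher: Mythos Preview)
Your proposal is correct and takes essentially the same approach as the paper: the paper's own proof simply states that the argument is identical to that of Theorem~\ref{theorem:hornDataComplexity}, since the BR size is fixed to a constant. You have spelled out in more detail why both membership and hardness carry over (in particular, that the single fixed BR used in the 3HornSat reduction already has a bounded body and hence lies in the restricted RR class), but the underlying reasoning is the same.
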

\begin{proof}
 The proof is same as in Theorem \ref{theorem:hornDataComplexity}, since the size of BRs are fixed to constant.
\end{proof}

\begin{theorem}
Combined complexity of CCQ entailment over restricted RR quad-systems is \textsc{NP}-complete.
\end{theorem}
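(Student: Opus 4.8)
The plan is to prove NP-completeness by treating membership and hardness separately, with the membership direction carrying the real content. For membership, the crucial observation is that bounding the body length of every BR to a fixed constant $n$ forces the maximal rule size $rs = \max_{r \in R} \|r\|$ to be a constant, even though in combined complexity the number of rules $|R|$ still scales with the input. I would first invoke Lemma~\ref{lemma:chase-computation-horn}(i), which gives $\|dChase(QS_{\C})\| = \bigO(\|QS_{\C}\|^4)$, so the dChase is polynomially sized and the dChase sequence has only polynomially many proper iterations, since each adds at least one quad. By Lemma~\ref{lemma:chaseSizeIncrease}(i) each iteration costs $\bigO(|R| \cdot \|dChase_{i-1}(QS_{\C})\|^{rs})$, which, because $rs$ is now a constant, is polynomial in $\|QS_{\C}\|$. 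Hence, unlike the general RR case, the whole $dChase(QS_{\C})$ can be computed in deterministic \textsc{PTIME}. With the polynomially sized dChase in hand, I would appeal to Theorem~\ref{dChaseUniversalModelProperty}(ii): $QS_{\C} \models CQ()$ iff there is a map $\mu\colon \var(CQ) \rightarrow \const$ with $CQ[\mu] \subseteq dChase(QS_{\C})$. A nondeterministic procedure then guesses such a $\mu$, mapping each variable to one of the polynomially many constants of the dChase, and verifies the containment in polynomial time, placing CCQ entailment in NP.

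For hardness, I would reduce from graph 3-colorability, exploiting the fact that pure conjunctive-query evaluation is already the source of intractability, so no BRs are required. Given a graph $G = \langle V, E\rangle$, take the quad-system $QS_{\C} = \langle Q_{\C}, \emptyset\rangle$ with a single context $c$, whose quad-graph encodes the six ordered pairs of distinct colours, namely $c\colon(r,e,g)$, $c\colon(g,e,r)$, $c\colon(r,e,b)$, $c\colon(b,e,r)$, $c\colon(g,e,b)$, $c\colon(b,e,g)$. The empty set of BRs is vacuously of restricted RR type, so $dChase(QS_{\C}) = Q_{\C}$. I would let the boolean CCQ be $CQ = \exists \vec x \ \bigwedge_{(u,v) \in E} c\colon(x_u, e, x_v)$, with one quantified variable $x_v$ per vertex. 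By Theorem~\ref{dChaseUniversalModelProperty}(ii), $QS_{\C} \models CQ$ iff there is a homomorphism from the query into this three-colour gadget, which holds exactly when $G$ admits a proper $3$-colouring. Since the construction is polynomial and uses only restricted RR BRs (indeed none), NP-hardness follows.

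The two complexity-bookkeeping computations and the correctness of the $3$-colouring gadget are routine. The one step deserving care, and the conceptual heart of the result, is the membership observation that the restricted RR bound collapses the per-iteration cost from exponential (the general RR situation, where $rs$ can be linear in the input and Lemma~\ref{lemma:chase-computation-horn}(ii) yields only \textsc{EXPTIME}) down to polynomial, thereby isolating the conjunctive-query homomorphism check as the sole remaining nondeterministic ingredient. Getting this separation right is what pins the combined complexity precisely at NP rather than somewhere strictly above it.
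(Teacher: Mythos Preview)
Your proposal is correct and follows essentially the same approach as the paper: for membership, you exploit that the bounded body size makes $rs$ a constant so that each dChase iteration (via Lemma~\ref{lemma:chaseSizeIncrease}) and hence the whole dChase is computable in \textsc{PTIME}, after which a guessed assignment is verified against the polynomial dChase; for hardness, you reduce from 3-colorability using the same six-quad colour gadget with an empty rule set. The only cosmetic difference is that the paper's CCQ includes both $c\colon(v,edge,v')$ and $c\colon(v',edge,v)$ per edge, whereas yours has a single conjunct per ordered edge, but since the gadget is symmetric this is immaterial.
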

\begin{proof}
 Let the  problem of deciding if $QS_{\C} \models CQ()$ be called DP'.

(Membership) 
for any $QS_{\C}$ whose rules are of restricted RR-type, the size of any $r \in R$ is a constant. Hence,
by Lemma \ref{lemma:chaseSizeIncrease}, any dChase iteration can be computed in PTIME. Since the number of iterations is also polynomial in 
$\|QS_{\C}\|$, $dChase(QS_{\C})$ can be computed in PTIME in the size of $QS_{\C}$ and $dChase(QS_{\C})$ has a polynomial number of 
constants.
Hence, we can guess an assignment $\mu$ for all the existential variables in CCQ $CQ()$, 
to the set of constants in $dChase(QS_{\C})$. Then, one can evaluate the CCQ, by checking if $c\colon(s,p,o) \in dChase(QS_{\C})$,
for each $c\colon(s,p,o) \in CQ()[\mu]$, which can be done in time $\bigO(\|CQ\|*\|dChase(QS_{\C})\|)$, and 
is hence is in non-deterministic PTIME, which implies that DP' is in NP.

(Hardness)
We show that DP' is NP-hard, by reducing the well known NP-hard problem of 3-colorability to DP'.
Given a graph $G$ $=$ $\langle V$, $E\rangle$, where $V$ $=$ $\{v_1$, ..., $v_n\}$ is the set of nodes, 
$E \subseteq V \times V$ is the set of edges,
the 3-colorability problem is to decide if there exists a labeling function 
$l:V \rightarrow \{r,b,g\}$ that assigns each $v \in V$ to an element in $\{r,b,g\}$ 
such that the condition: $(v,v') \in E \rightarrow l(v) \neq l(v')$, for each $(v, v') \in E$,
is satisfied.

One can construct a quad-system $QS_c=\langle Q_c, \emptyset\rangle$, 
where $graph_{Q_c}(c)$ has the following triples:


$\{(r,edge,b),(r,edge,g),(b,edge,g),(b,edge,r)$,
$(g,edge,r),(g,edge,b)\}$

Let $CQ$ be the boolean CCQ: 
$\exists v_1,....,v_n \bigwedge_{(v,v')\in E}$ $[$
$c\colon(v, edge, v')$ $\wedge$ $c\colon(v', edge, v)]$. 
Then, it can be seen that $G$ is 3-colorable, iff $QS_c \models  CQ$.
\end{proof}

\section{Quad-Systems and Forall-Existential rules: A formal comparison}\label{sec:Comparison}
\noindent In this section, we formally compare the formalism of quad-systems with forall-existential ($\forall \exists$) rules, which are also called 
Tuple generating dependencies (Tgds)/Datalog+- rules. $\forall\exists$ rules is a fragment of first order logic in which every formula 
is restricted to a certain syntactic form. A $\forall\exists$ rule  is a first order formula of the form:
\begin{eqnarray}\label{eqn:datalog+-rule}
 \forall \vec x \forall \vec z \ [p_1(\vec x,\vec z) \wedge ... \wedge p_n(\vec x,\vec z) \rightarrow 
 \exists \vec y \ p'_1(\vec x,\vec y) \wedge ... \wedge p'_m(\vec x, \vec y)] 
\end{eqnarray}
where $\vec x, \vec y, \vec z$ are vectors of variables such that $\{\vec x\}, \{\vec y\}$ and $\{\vec z\}$ are pairwise 
disjoint, $p_i(\vec x, \vec z)$, for $1 \leq i \leq n$ are predicate atoms whose variables are 
from $\vec x$ or $\vec z$, $p'_1(\vec x, \vec y)$, for $1 \leq i \leq m$ are predicate atoms whose variables are from $\vec x$ or $\vec y$.
We, for short, occasionally note a $\forall\exists$ rule of the form (\ref{eqn:datalog+-rule}) as $\phi(\vec x$, $\vec z)$ $\rightarrow$ $\psi(\vec x$, $\vec y)$, 
where $\phi(\vec x$, $\vec z)$ $=$ $\{p_1(\vec x,\vec z), ..., p_n(\vec x,\vec z)\}$, $\psi(\vec x$, $\vec y)$ $=$ $\{p'_1(\vec x,\vec y)$, 
... $p'_m(\vec x, \vec y)\}$. A set of $\forall \exists$ rules is called a $\forall \exists$ rule set.
In the realm of $\forall\exists$ rule sets, a conjunctive query (CQ)  is an expression of the form:
\begin{equation}\label{eqn:CQ}
\exists \vec y \ p_1(\vec x,\vec y) \wedge ... \wedge p_r(\vec x, \vec y) 
\end{equation}
where $p_i(\vec x,\vec y)$, for $1 \leq i \leq r$ are predicate atoms over vectors $\vec x$ or $\vec y$. A boolean CQ is defined as 
usual. The decision problem of whether, for a $\forall\exists$ rule set $\mathbb{P}$ and a CQ $Q$, if $\mathbb{P}$ $\models_{\text{fol}} Q$ is called the \emph{CQ EP}, 
 where $\models_{\textrm{fol}}$ is the standard first order logic entailment relation.

For any quad-graph $Q_{\C}=\{c_1\colon(s_1$, $p_1$, $o_1)$, \ldots, $c_n\colon (s_r$, $p_r$, $o_r)\}$, let $r_{Q_{\C}}$ be the BR
\begin{eqnarray}
&& \rightarrow \vec \exists y_{b_1}, \ldots, y_{b_q} \ c_1\colon (s_1, p_1, o_1)[\mu_{\bn}] 
 \wedge \ldots  \wedge c_r\colon (s_r, p_r, o_r)[\mu_{\bn}], \nonumber 
\end{eqnarray}
\noindent where $\{\_ \colon b_1$, \ldots, $\_ \colon b_q\}$ is the set of blank nodes in $Q_{\C}$, and 
$\mu_{\bn}$ is the substitution function $\{\_ \colon b_i \rightarrow y_{b_i}\}_{i=1, \ldots, q}$ that assigns each blank-node
to a fresh existentially quantified variable. It can be noted that the quad-systems $\langle Q_{\C}, R\rangle$ and 
$\langle \emptyset, R \cup \{r_{Q_{\C}}\}\rangle$ are semantically equivalent. The following definition gives
the translation functions that will be necessary to establish the relation between quad-systems and $\forall\exists$
rule sets.
\begin{definition}[Translations $\tau_q$, $\tau_r$, $\tau_{ccq}$, $\tau$]\label{def:TranslationQuadsystemsToForall} 
The translation function $\tau_q$ from the 
set of quad patterns to the set of ternary atoms is defined as: for any quad-pattern $c\colon (s,p,o)$, $\tau_{q}(c\colon (s,p,o))$ $=$ $c(s,p,o)$. 

The translation function $\tau_{br}$ from the set of BRs to the set of $\forall \exists$ rules is defined as:
for any BR $r$ of the form (\ref{eqn:intraContextualRuleWithQuantifiers}):
 \begin{eqnarray}
&&   \tau_{br}(r) = \forall \vec x \forall \vec z \ [\tau_{q}(c_1\colon t_1(\vec x, \vec z)) \wedge ... \wedge \tau_{q}(c_n\colon t_n(\vec x,  \nonumber \\ 
&& \vec z)) \rightarrow \exists \vec y \ \tau_{q}(c'_1\colon t'_1(\vec x, \vec y)) \wedge ... \wedge \tau_{q}(c'_m \colon t'_m(\vec x, \vec y))], \nonumber 
 \end{eqnarray} 
 
The translation function $\tau$ from the set of quad-systems to forall-existential rule sets is defined as: 
for any quad-system $QS_{\C}$ $=$ $\langle Q_{\C},R \rangle$, 
$\tau(QS_{\C})$ $=$ $\tau_{br}(R) \cup \{\tau_{br} (r_{Q_{\C}})\}$, where $\tau_{br}(R)$ $=$ $\bigcup_{r\in R} \tau_{br}(r)$. 

The translation function $\tau_{ccq}$ from the set of boolean CCQs to the set of boolean CQs is defined as:  
for any boolean CCQ $CQ$ $=$ $\exists \vec y$ $c_1\colon t_1(\vec a, \vec y)$ $\wedge$ \ldots $\wedge$ $c_r\colon t_r(\vec a, \vec y)$, 
$\tau_{ccq}(CQ)$ is: 
\begin{eqnarray}
\exists \vec y \ \tau_q(c_1\colon t_1(\vec a, \vec y)) 
 \wedge \ldots  \wedge \tau_q(c_r\colon t_r(\vec a, \vec y)). \nonumber   
\end{eqnarray}
\end{definition}
\noindent The following property gives the relation between CCQ entailment of unrestricted quad-systems and standard first order 
CQ entailment of $\forall \exists$ rule sets.
\begin{property}\label{prop:TranslationQuadsystemsToForall}
For any quad-system $QS_{\C}$, CCQ $CQ$, $QS_{\C} \models CQ$ iff $\tau(QS_{\C})$ $\models_{\textrm{fol}}$ 
$\tau_{ccq}(CQ)$.
\end{property}
\begin{proof}
\noindent  
Notice that every context $c\in \C$ becomes a ternary predicate symbol in the resulting translation.
Also, $\tau(QS_{\C})$ is a $\forall\exists$ rule set, and for any CCQ $CQ$, $\tau_{ccq}(CQ)$ is a CQ.  
 
In order to construct the restricted chase for $\tau(QS_{\C})$, suppose that $\prec_q$ is also extended to set of instances such that 
for any two quad-graphs $Q_{\C}$, $Q'_{\C'}$, $Q_{\C} \prec_q Q'_{\C'}$ iff 
$\tau_q(Q_{\C})$ $\prec_q$ $\tau_q(Q'_{\C'})$. Suppose $\prec$ is extended similarly
to set of instances. Also assume that during the construction of standard chase $chase(\tau(QS_{\C}))$ of $\tau(QS_{\C})$,
for any application of a $\tau_{br}(r)$ with existential variables, with $r \in R$, suppose that the Skolem blank nodes generated 
in $chase(\tau(QS_{\C}))$ follow the same order as they are generated in $dChase(QS_{\C})$. 
Also let us extend the rule applicability function to the $\forall\exists$ rules settings such that for any set of BRs $R$, for 
any $r\in R$, quad-graph $Q'_{\C'}$, assignment $\mu$, 
$applicable_R(r, \mu, Q'_{\C'})$ iff $applicable_{\tau_{br}(R)}(\tau_{br}(r)$, $\mu$, $\tau_q(Q'_{\C'}))$. 

Now $dChase_0(\langle \emptyset$, $R$ $\cup$ $\{r_{Q_{\C}}\}\rangle)$ $=$ $\emptyset$, and also $chase_0(\tau(QS_{\C}))$ $=$ $\emptyset$,  
 $dChase_1(QS_{\C})$ $=$ $apply(r_{Q_{\C}}$, $\mu_{\emptyset})$, where $\mu_{\emptyset}$ is the empty function,  
 $chase_1(\tau(QS_{\C}))$ $=$ $apply(\tau_{br}(r_{Q_{\C}})$, $\mu_{\emptyset})$, and so on. It is straightforward to see that
 for any $m\in \mathbb{N}$, $\tau_q($ $dChase_m(\langle \emptyset$,  $R \cup \{r_{Q_{\C}}\}\rangle))$ $=$ $chase_m(\tau(QS_{\C}))$. As a consequence, 
 $\tau_q(dChase(QS_{\C}))$ $=$ $chase(\tau(QS_{\C}))$, and $\{CQ\}[\sigma]\subseteq dChase(QS_{\C})$ iff
 $\{\tau_{ccq}(CQ)\}[\sigma]$ $\subseteq$ $chase(\tau(QS_{\C}))$.
 
Hence, applying Theorem ~\ref{dChaseUniversalModelProperty} and the analogous theorem 
for $\forall\exists$ rulesets from Deutch et al.~\cite{Deutsch:2008}, it follows that 
for any quad-system $QS_{\C}$ $=$ $\langle Q_{\C}$, $R \rangle$ and a boolean CCQ $CQ$, 
$QS_{\C} \models CQ$ iff $\tau(QS_{\C})$  $\models_{\textrm{fol}} \tau_{ccq}(CQ)$.
\end{proof}
\begin{theorem}\label{theorem:PolynomialTranslationQuadsystemsToForall}
There exists a polynomial time  
translation function $\tau$ (resp. $\tau_{ccq}$) from the set of unrestricted quad-systems (resp. CCQs) to the 
set of $\forall \exists$ rule sets (resp. CQs), such that for any unrestricted quad-system $QS_{\C}$ and a CCQ $CQ$, $QS_{\C} \models CQ$ iff 
$\tau(QS_{\C})$ $\models_{\textrm{fol}}$ 
$\tau_{ccq}(CQ)$.
\end{theorem}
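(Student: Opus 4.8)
The plan is to observe that the functions $\tau$ and $\tau_{ccq}$ already supplied in Definition~\ref{def:TranslationQuadsystemsToForall} can serve as the required translation functions, so that the entire task reduces to two independent parts: (i) invoking the already-established correctness of these translations, and (ii) verifying that they can be carried out in polynomial time. The correctness of the equivalence $QS_{\C} \models CQ$ iff $\tau(QS_{\C}) \models_{\textrm{fol}} \tau_{ccq}(CQ)$ is exactly the content of Property~\ref{prop:TranslationQuadsystemsToForall}, which I would cite directly; hence nothing further is needed for the ``iff'' half of the statement, and the whole remaining work is a size-and-time analysis.

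First I would dissect the four building blocks $\tau_q$, $\tau_{br}$, $\tau$, and $\tau_{ccq}$. The atomic translation $\tau_q$ merely reinterprets a quad-pattern $c\colon(s,p,o)$ as the ternary atom $c(s,p,o)$: the context identifier $c$ is promoted to a ternary predicate symbol and $s,p,o$ become its arguments. This is a constant-cost rewriting per quad-pattern that, crucially, introduces at most one fresh predicate symbol per context identifier in $\C$, so the predicate signature of the output stays bounded by $\|QS_{\C}\|$. Consequently $\tau_{br}$ applied to a BR $r$, for which $\|r\|\approx 4k$ with $k$ the number of quad-patterns, simply calls $\tau_q$ on each of the $k$ atoms and is computable in time $\bigO(\|r\|)$, producing a $\forall\exists$ rule of size $\bigO(\|r\|)$.

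Next I would lift this to a whole quad-system $QS_{\C}=\langle Q_{\C},R\rangle$, where $\tau(QS_{\C})=\tau_{br}(R)\cup\{\tau_{br}(r_{Q_{\C}})\}$. Applying $\tau_{br}$ to every rule of $R$ costs $\bigO(\|R\|)$, and the only remaining object is the single rule $r_{Q_{\C}}$ that encodes the quad-graph. The point to check is that this encoding incurs no blowup: $r_{Q_{\C}}$ is one BR with empty body whose head is the conjunction of the $|Q_{\C}|$ quads, with each blank node replaced by a fresh existentially quantified variable, so $\|r_{Q_{\C}}\| = \bigO(\|Q_{\C}\|)$ and $\tau_{br}(r_{Q_{\C}})$ is of the same size and computable in that time. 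Summing, $\tau(QS_{\C})$ has size $\bigO(\|QS_{\C}\|)$ and is produced in time $\bigO(\|QS_{\C}\|)$; an identical argument over the at-most-$r$ quad-patterns of a CCQ shows $\tau_{ccq}$ runs in time linear in the size of the query. Both maps are therefore linear, hence polynomial, time, which together with Property~\ref{prop:TranslationQuadsystemsToForall} completes the proof.

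I expect the only step requiring genuine care — the nearest thing to an obstacle — is confirming that promoting each context identifier to a ternary predicate and replacing each quad-graph blank node by an existential variable keeps the output \emph{linear}, with no quadratic blowup of the kind that arises from the head-splitting simplification used earlier for the dChase construction. Once the syntactic triviality of $\tau_q$ is made explicit and the single-rule form of $r_{Q_{\C}}$ is exploited, the polynomial bound is immediate and the equivalence is inherited wholesale from Property~\ref{prop:TranslationQuadsystemsToForall}.
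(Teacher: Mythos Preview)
Your proposal is correct and follows essentially the same approach as the paper's proof: invoke the translations $\tau_q$, $\tau_{br}$, $\tau$, $\tau_{ccq}$ from Definition~\ref{def:TranslationQuadsystemsToForall}, appeal to Property~\ref{prop:TranslationQuadsystemsToForall} for the semantic equivalence, and observe that each translation is a direct syntactic rewriting computable in linear time. The paper's own argument is considerably terser (one sentence noting that the maps are ``simple syntax transformations'' with linear time complexity), whereas you spell out the per-component size bounds and flag the non-issue of the head-splitting normalization, but the substance is the same.
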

\begin{proof}  
It is easy to see that $\tau_q$, $\tau_{br}$, $\tau$, and $\tau_{ccq}$ in Definition~\ref{def:TranslationQuadsystemsToForall}  
can be implemented using simple syntax transformation, by iterating through the respective components of a quad-system/CCQ,  
and the time complexity of these functions are linear w.r.t their inputs. 
\end{proof}
\noindent Notice that for any CCQ $CQ$ (resp. CQ $Q$), $\rightarrow CQ$ (resp. $\rightarrow Q$) is a bridge (resp. $\forall\exists$) rule, with an empty body. 
Also, since for any quad-graph $Q_{\C}$, the translation function $\tau_{br}$ defined above can directly be applied on $r_{Q_{\C}}$ 
to obtain a $\forall \exists$ rule, the following theorem immediately follows:
\begin{theorem}
 For quad-systems, the EPs: (i) quad EP, (ii) quad-graph EP, (iii) BR EP, (iv) BRs EP, (v) Quad-System EP, and (vi) CCQ EP are polynomially reducible
 to entailment of $\forall \exists$ rule sets.
\end{theorem}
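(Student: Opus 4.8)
The plan is to exhibit, for each of the six entailment problems, a polynomial-time reduction to the CQ entailment problem over $\forall\exists$ rule sets, and the cleanest route is to compose two reductions that are already in hand. Theorem~\ref{theorem:PolynomialTranslationQuadsystemsToForall} already gives a polynomial-time reduction of the CCQ EP (vi) to $\forall\exists$ CQ entailment, since $QS_{\C} \models CQ$ iff $\tau(QS_{\C}) \models_{\textrm{fol}} \tau_{ccq}(CQ)$ with $\tau$, $\tau_{ccq}$ computable in linear time. It therefore suffices to reduce each of (i)--(v) to the CCQ EP in polynomial time, which is exactly Property~\ref{prop:QuadsystemEPsReductiontoCCQeps}; since a composition of two polynomial-time many-one reductions is again polynomial, the theorem follows.

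To make the reductions explicit I would instantiate each case directly. For the quad EP (i), a target quad $c\colon(s,p,o)$ is entailed iff the boolean CCQ obtained by reading its blank nodes as existential variables is entailed, so (i) is a special case of (vi); and by the definition of quad-graph entailment the quad-graph EP (ii) is the conjunction, over the polynomially many quads of $Q'_{\C'}$, of individual quad EPs, so it reduces to a polynomial family of (vi)-instances. For the BR EP (iii) with $r = body(r)(\vec x,\vec z) \rightarrow \exists \vec y\, head(r)(\vec x,\vec y)$, I would freeze: replace $\vec x, \vec z$ by fresh distinct constants $\vec a, \vec b$ not occurring in $QS_{\C}$, form the quad-system $\langle Q_{\C} \cup body(r)(\vec a,\vec b),\, R\rangle$, and test whether it entails the boolean CCQ $\exists \vec y\, head(r)(\vec a,\vec y)$, which is again an instance of (vi). Finally, the BRs EP (iv) is the conjunction of BR EPs over $r\in R$, and the quad-system EP (v), $QS_{\C} \models \langle Q'_{\C'}, R'\rangle$, is the conjunction of a quad-graph EP and a BRs EP; both are polynomial-size families of the cases already treated.

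The main obstacle is justifying the freezing step for (iii), the only reduction that is not a purely syntactic rewrite: I would argue, using the universal-model property of the dChase (Theorem~\ref{dChaseUniversalModelProperty}) together with the rigidity of the fresh constants $\vec a, \vec b$, that $QS_{\C} \models r$ holds exactly when $\langle Q_{\C} \cup body(r)(\vec a,\vec b), R\rangle \models \exists \vec y\, head(r)(\vec a,\vec y)$. A secondary point requiring care is the blank-node bookkeeping in (i)--(ii): since quad-graph entailment is defined quad-by-quad, each quad's blank nodes must be existentially closed within that single quad rather than shared across the whole quad-graph, so the split into per-quad queries must be shown sound against the semantics in Definition~\ref{def:model-quad-system}. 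Everything else is routine---the translations $\tau_q, \tau_{br}, \tau, \tau_{ccq}$ run in polynomial (indeed linear) time by Theorem~\ref{theorem:PolynomialTranslationQuadsystemsToForall}, freezing and forming conjunctions incur only polynomial overhead, and polynomial reductions compose.
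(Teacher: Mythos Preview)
Your approach is correct but takes a different route from the paper. The paper observes that the translation $\tau_{br}$ already handles every target type directly: a single quad becomes a rule with empty body, a quad-graph becomes a set of such rules (one per quad, respecting the per-quad semantics you correctly flag), a BR $r$ translates verbatim to the $\forall\exists$ rule $\tau_{br}(r)$, and a CCQ becomes a rule $\rightarrow CQ$ with empty body. Since the target problem is entailment of $\forall\exists$ \emph{rule sets}, not merely CQ entailment, there is no need to pass through CCQ EP at all---in particular, no freezing lemma is required for case~(iii), because a BR is already (after $\tau_{br}$) a $\forall\exists$ rule and the question $QS_{\C}\models r$ becomes $\tau(QS_{\C})\models_{\text{fol}}\tau_{br}(r)$ in one syntactic step.

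Your detour through CCQ EP works, and the freezing argument you sketch for~(iii) is the standard one, but it is more machinery than needed here; the paper in fact uses the opposite direction, invoking the present theorem together with the reverse translation (Theorem~\ref{theorem:PolynomialTranslationForallToQuadsystems}) and the folklore reduction of $\forall\exists$ rule entailment to CQ entailment to establish Property~\ref{prop:QuadsystemEPsReductiontoCCQeps} later. So citing that property here would be a forward reference in the paper's logical order, though your explicit instantiations avoid circularity. One small imprecision: you describe the composite as ``two polynomial-time many-one reductions,'' but your reductions for (ii), (iv), (v) produce polynomial \emph{families} of CCQ instances rather than single instances, so the intermediate step to CCQ EP is a Turing reduction; this is harmless because the final target (rule-\emph{set} entailment) absorbs the conjunction into a single instance, but it is worth stating accurately.
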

\noindent 
A $\forall \exists$ rule set $\mathbb{P}$ is said to be a \emph{ternary} $\forall \exists$ rule set, iff all the 
predicate symbols in the vocabulary of $\mathbb{P}$ are of arity less than or equal to three. $\mathbb{P}$ is a \emph{purely ternary} rule set,
iff all the predicate symbols in the vocabulary $\mathbb{P}$ is of arity three. Similarly, 
a (purely) ternary CQ is defined.
The following property gives the relation between the CQ entailment problem of $\forall \exists$ rule sets and CCQ EP of unrestricted quad-systems. 
\begin{theorem}\label{theorem:PolynomialTranslationForallToQuadsystems}
 There exists a polynomial time tranlation function $\nu$ (resp. $\nu_{cq}$) from ternary $\forall\exists$ rule sets (resp. ternary CQs)
 to unrestricted quad-systems (resp. CCQs) such that for any ternary $\forall\exists$ rule set $\mathbb{P}$ and a ternary CQ $Q$, 
  $\mathbb{P}$ $\models_{\text{fol}} CQ$ iff $\langle \emptyset$, $\nu(\mathbb{P}) \rangle$ $\models \nu_{cq}(Q)$.
\end{theorem}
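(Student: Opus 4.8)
The plan is to factor the translation $\nu$ through the purely ternary rule sets and then to recognise the second factor as essentially the inverse of the map $\tau$ of Definition~\ref{def:TranslationQuadsystemsToForall}. Since a quad-system interprets every context as a single ternary relation, a predicate of arity $1$ or $2$ in $\mathbb{P}$ cannot be represented directly; I would therefore first \emph{pad} all atoms to arity three. Concretely, fix a fresh constant $\star$ and, for every predicate symbol $P$ of $\mathbb{P}$ of arity $k\leq 3$, introduce a fresh ternary predicate $\bar P$ and rewrite each atom $P(t_1,\ldots,t_k)$ as $\bar P(t_1,\ldots,t_k,\underbrace{\star,\ldots,\star}_{3-k})$. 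Extending this rewriting homomorphically to bodies, heads and to the query yields a padding map $\pi$ from ternary $\forall\exists$ rule sets (resp.\ ternary CQs) to purely ternary ones; $\pi$ is computable in linear time and preserves the partition of variables into frontier, body-only and existential, so $\pi(\mathbb{P})$ is again a legal $\forall\exists$ rule set.

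First I would prove faithfulness of the padding, i.e.\ $\mathbb{P}\models_{\text{fol}} Q$ iff $\pi(\mathbb{P})\models_{\text{fol}}\pi(Q)$, by the standard ``$\star$-slice'' argument. Given a model $M$ of $\mathbb{P}$, I build a model $M'$ of $\pi(\mathbb{P})$ over the same domain extended by a fresh element denoted $\star$, by setting $\bar P^{M'}=\{(a_1,\ldots,a_k,\star,\ldots,\star)\mid (a_1,\ldots,a_k)\in P^{M}\}$ and leaving the remaining tuples of $\bar P$ out; conversely, from a model $M'$ of $\pi(\mathbb{P})$ I recover a model of $\mathbb{P}$ by reading off the slice $P^{M}=\{(a_1,\ldots,a_k)\mid (a_1,\ldots,a_k,\star^{M'},\ldots,\star^{M'})\in\bar P^{M'}\}$. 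Because $\star$ occurs uniformly in every padded position, a rule (resp.\ query match) is satisfied in $M$ exactly when its padded image is satisfied on the $\star$-slice of $M'$, which yields both implications.

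Next I would compose $\pi$ with the context-encoding $\nu_{3}$ that sends each ternary predicate $\bar P$ to a fresh context identifier $c_{\bar P}$ and each atom $\bar P(a,b,c)$ to the quad pattern $c_{\bar P}\colon(a,b,c)$; applied rule-wise and query-wise this gives $\nu:=\nu_3\circ\pi$ and $\nu_{cq}:=\nu_{3}\circ\pi$, both computable in polynomial (indeed linear) time, which settles the complexity part. The point of this second factor is that $\nu_3$ is exactly the inverse of $\tau_{br}$ (resp.\ $\tau_{ccq}$) on purely ternary inputs: applying $\tau$ to $\langle\emptyset,\nu_3(\pi(\mathbb{P}))\rangle$ returns $\pi(\mathbb{P})$ (the generated rule $r_{Q_{\C}}$ is vacuous because $Q_{\C}=\emptyset$), and $\tau_{ccq}(\nu_{cq}(Q))=\pi(Q)$. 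Invoking Property~\ref{prop:TranslationQuadsystemsToForall} on $\langle\emptyset,\nu(\mathbb{P})\rangle$ then gives $\langle\emptyset,\nu(\mathbb{P})\rangle\models\nu_{cq}(Q)$ iff $\pi(\mathbb{P})\models_{\text{fol}}\pi(Q)$, and chaining this with the faithfulness of $\pi$ delivers the claim.

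The main obstacle I anticipate is not the bookkeeping of the two syntactic maps but ensuring that the built-in \emph{local} semantics of the quad-system creates no spurious entailments without a counterpart in the plain first-order theory $\pi(\mathbb{P})$. To neutralise this I would take the local semantics to be Simple (for which the set LIR is empty, so the $ir_c$ rules added before dChase construction are absent) or, for a richer fixed local semantics such as RDFS, observe that the context identifiers $c_{\bar P}$, the fresh predicates $\bar P$ and the filler $\star$ use no reserved vocabulary, whence $\lclosure(\cdot)$ is the identity on every context and the local inference rules never fire. In either case the dChase of $\langle\emptyset,\nu(\mathbb{P})\rangle$ corresponds atom-for-atom, via $\tau_q$, to the chase of $\pi(\mathbb{P})$, which is precisely what Property~\ref{prop:TranslationQuadsystemsToForall} and Theorem~\ref{dChaseUniversalModelProperty} require.
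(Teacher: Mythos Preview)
Your proposal is correct and follows essentially the same route as the paper: pad all atoms to arity three with a fresh filler (the paper calls it $\Box$ and keeps the predicate name, you call it $\star$ and rename to $\bar P$, a cosmetic difference), then apply the inverse of $\tau_{br}$/$\tau_{ccq}$ to land in quad-systems and CCQs, and finally invoke Property~\ref{prop:TranslationQuadsystemsToForall}. Your treatment is in fact more careful than the paper's: you spell out the model-theoretic ``$\star$-slice'' argument for faithfulness of the padding (which the paper merely asserts), and you explicitly address the potential interference of the local semantics by taking LIR empty or using fresh vocabulary, a point the paper leaves implicit.
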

 \begin{proof}
Note that the CQ EP of  any ternary $\forall \exists$ rule set  $\mathbb{P}$, whose set of predicate symbols is $P$,  
and CQ $Q$ over $P$, can polynomially reduced to the CQ EP of a purely ternary rule set $\mathbb{P}'$ and purely ternary CQ $Q'$,
by the following transformation function $\chi$. Let $\Box$ be an adhoc fresh URI; $\chi$ is such that  
for any ternary atom $c(s,p,o)$, $\chi(c(s,p,o))$ $=$ $c(s,p,o)$. For any binary atom $c(s,p)$, $\chi(c(s,p))$ $=$ $c(s,p,\Box)$, 
and for any unary atom $c(s)$, $\chi(c(s))=c(s,\Box,\Box)$. For any $\forall\exists$ rule $r$ of the form (\ref{eqn:datalog+-rule}), 
\begin{eqnarray}
&& \chi(r)= \forall \vec x \forall \vec z \ [\chi(p_1(\vec x, \vec z)) \wedge \ldots \wedge  \chi(p_n(\vec x, \vec z))  \nonumber \\
&& \rightarrow \exists \vec y \ \chi(p'_1(\vec x, \vec y)) \wedge \ldots \wedge \chi(p'_m(\vec x, \vec y)) ] \nonumber
\end{eqnarray}
And, for any $\forall\exists$ rule set $\mathbb{P}$, $\chi(\mathbb{P})$ $=$ $\bigcup_{r \in \mathbb{P}}$ $\chi(r)$. 
For any CQ $Q$, $\chi(Q)$ is similarly defined. Note that for any ternary $\forall\exists$ rule set $\mathbb{P}$, 
ternary CQ $Q$, $\chi(\mathbb{P})$ (resp. $\chi(Q)$) is purely ternary, and $\mathbb{P} \models_{\text{fol}} Q$
iff $\chi(\mathbb{P}) \models_{\text{fol}} \chi(Q)$.

Also, it can straightforwardly seen that $\tau^{-1}_{br}(\chi(\mathbb{P}))$  
(resp. $\tau^{-1}_{ccq}(\chi(Q))$) is a set of BRs (resp. CCQ).
Suppose, $\nu(\mathbb{P})$ is  such that 
$\nu(\mathbb{P})$ $=$ $QS_{\C}$ $=$ $\langle \emptyset$, $\tau^{-1}_{br}(\chi(\mathbb{P}))\rangle$.
Intuitively, $\C$ contains a context identifier $c$, for each predicate symbol $c\in P$.
Also suppose, $\nu_{cq}(Q)=$ $\tau^{-1}_{ccq}(\chi(Q))$. Notice that $\nu_{cq}(Q)$ is CCQ.
It can straightforwardly seen that $\nu$ and $\nu_{cq}$ can be computed in polynomial time, and 
$\mathbb{P} \models_{\text{fol}} Q$ iff $\nu(\mathbb{P})$ $\models$ $\nu_{cq}(Q)$.
 \end{proof}
\noindent Thanks to Theorem~\ref{theorem:PolynomialTranslationQuadsystemsToForall} and Theorem~\ref{theorem:PolynomialTranslationForallToQuadsystems}, 
the following theorem immediately holds: 
 \begin{theorem}
  The CCQ EP over quad-systems is polynomially equivalent to CQ EP over ternary $\forall\exists$ rule sets.
 \end{theorem}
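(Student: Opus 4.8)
The plan is to recognize that the asserted polynomial equivalence is nothing more than the conjunction of the two directional polynomial-time reductions already in hand, so the whole task amounts to checking that each of Theorem~\ref{theorem:PolynomialTranslationQuadsystemsToForall} and Theorem~\ref{theorem:PolynomialTranslationForallToQuadsystems} supplies a reduction of exactly the required type, and then citing the standard fact that mutual polynomial-time reducibility is precisely polynomial equivalence.

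First I would treat the direction from quad-systems to ternary $\forall\exists$ rule sets. Theorem~\ref{theorem:PolynomialTranslationQuadsystemsToForall} already gives a polynomial-time pair $(\tau, \tau_{ccq})$ with $QS_{\C} \models CQ$ iff $\tau(QS_{\C}) \models_{\textrm{fol}} \tau_{ccq}(CQ)$. The one point I would make explicit is that the target of this translation is a \emph{ternary} rule set rather than an arbitrary one: by the definition of $\tau_q$, every quad pattern $c\colon(s,p,o)$ is sent to the ternary atom $c(s,p,o)$, so every predicate occurring in $\tau(QS_{\C})$ or $\tau_{ccq}(CQ)$ is a context identifier used with arity exactly three, as was already noted in the proof of Property~\ref{prop:TranslationQuadsystemsToForall}. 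Consequently $(\tau, \tau_{ccq})$ is a polynomial-time many-one reduction of the CCQ EP over quad-systems to the CQ EP over (purely) ternary $\forall\exists$ rule sets.

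For the converse direction I would simply invoke Theorem~\ref{theorem:PolynomialTranslationForallToQuadsystems}, whose translation $(\nu, \nu_{cq})$ is a polynomial-time many-one reduction of the CQ EP over ternary $\forall\exists$ rule sets to the CCQ EP over quad-systems, since $\mathbb{P} \models_{\textrm{fol}} Q$ iff $\langle \emptyset, \nu(\mathbb{P})\rangle \models \nu_{cq}(Q)$. Composing the two observations, each of the two entailment problems reduces to the other in polynomial time, which is exactly the meaning of being polynomially equivalent, and the theorem follows. The only genuine content is the arity check in the first step — that the image of $\tau$ already lies in the ternary fragment — and everything else is a direct appeal to the two earlier theorems; I therefore do not expect any real obstacle beyond stating that check clearly.
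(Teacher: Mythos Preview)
Your proposal is correct and matches the paper's own argument: the paper simply states that the theorem holds ``thanks to Theorem~\ref{theorem:PolynomialTranslationQuadsystemsToForall} and Theorem~\ref{theorem:PolynomialTranslationForallToQuadsystems}'', and your write-up is just a slightly more explicit unpacking of that same two-directional reduction, including the observation that the image of $\tau$ lies in the (purely) ternary fragment.
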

\noindent By virtue of the theorem above,  we derive the following property: 
\begin{property}\label{prop:QuadsystemEPsReductiontoCCQeps}
For quad-systems, the Quad EP, Quad-graph EP, BR(s) EP, and Quad-system EP are polynomially reducible to CCQ EP.
\end{property}
\begin{proof}
The following claim is a folklore in the realm of $\forall\exists$ rules.
\begin{claim}(1)
The $\forall\exists$ rule set EP is polynomially reducible to CQ EP.
\end{claim}
Reducibility of $\forall\exists$ rule EP to CQ EP is a folklore in the realm of $\forall\exists$ rules. 
 For a formal proof, we refer the reader to Baget et al.~\cite{BagetLMS11}, where it is shown that the $\forall \exists$ rule EP is 
polynomially reducible to fact (a set of instances) EP, and fact EP are equivalent to CQ EP. Also, Cali et al~\cite{CaliGK08} show that CQ containment problem, which is
 equivalent to $\forall\exists$ rule EP, is reducible to CQ EP.  Since a $\forall\exists$ rule set is a set of $\forall\exists$ rules, 
 by using a series of oracle calls to a function that solves the $\forall\exists$ rule EP, we can define a function for deciding
 $\forall\exists$ rule set entailment. Hence, the claim holds. 
 
(a)  Thanks to translation functions $\tau$, $\tau_{br}$ defined earlier, such that for any quad-system $QS_{\C}$, quad-graph $Q'_{\C'}$,
$QS_{\C} \models Q'_{\C'}$ iff $\tau(QS_{\C}) \models_{\text{fol}} \tau_{br}(r_{Q'_{\C'}})$, we can infer that
quad-graph EP is polynomially reducible to $\forall\exists$ rule set EP. 
 Applying claim 1, it follows the quad-graph EP over quad-systems is polynomially reducible to CQ EP over $\forall\exists$ rule sets. By Theorem
\ref{theorem:PolynomialTranslationForallToQuadsystems}, we can deduce that quad-graph EP is polynomially reducible to CCQ EP.
 
(b) By the translation functions $\tau$ and $\tau_{br}$, defined earlier, such that for any quad-system $QS_{\C}$, a set of BRs $R$,
$QS_{\C} \models R$ iff $\tau(QS_{\C}) \models_{\text{fol}} \tau_{br}(R)$, we can infer that
BRs EP is polynomially reducible to $\forall\exists$ rule set EP. Similar to (a) above, we deduce that
BRs EP is polynomially reducible to CCQ EP.

From (a) and (b), it follows that Quad-system EP is reducible to CCQ EP. 
\end{proof}
\noindent Having seen that the CCQ EP over quad-systems is polynomially equivalent to CQ EP over 
ternary $\forall\exists$ rule sets, we now compare  some of the well known techniques used to ensure 
decidability of CQ entailment in the $\forall\exists$ rules settings to the decidability techniques for quad-systems that we saw earlier in the 
previous sections. Note that since all the quad-system classes we proposed in this paper are FECs, 
for a judicious comparison, the $\forall\exists$ rule classes to which we compare are classes which have a finite chase property.
We compare to the following three well known classes: (i) Weakly Acyclic rule sets (WA),  (ii) Jointly Acyclic rule sets (JA), and (iii) Model 
Faithful Acyclic $\forall\exists$ rule sets (MFA). The following property is well known in the realm of $\forall\exists$ rules:
\begin{property}
For the any $\forall\exists$ rule set $\mathbb{P}$, the following holds:
\begin{enumerate}
 \item If $\mathbb{P} \in$ WA, then $\mathbb{P} \in $ JA (from \cite{KR11jointacyc}),
 \item If $\mathbb{P} \in$ JA, then $\mathbb{P} \in $ MFA (from \cite{Bernardo:dlacyclicity:2012}),
 \item WA $\subset$ JA $\subset$ MFA (from \cite{KR11jointacyc} and \cite{Bernardo:dlacyclicity:2012}).
\end{enumerate}
\end{property}
\noindent Note that a description of few other $\forall\exists$ rule classes that do not have the finite chase property, but still enjoy decidability
of CQ entailment are given in the related work. 
\subsection{Weak Acyclicity}
\noindent Weak acyclicity~\cite{Fagin05dataexchange,Deutsch03reformulationof} is a popular technique used to detect whether a  
 $\forall\exists$ rule set has a finite chase, thus ensuring decidability of query answering. The set \textsc{wa} represents class of ternary $\forall\exists$ rule sets
 that have the weak acyclicity property. 
 
For any predicate atom $p(t_1, \ldots, t_n)$,  an expression $\langle p,i \rangle$, for $i=1,\ldots, n$  is called
a position of $p$. In the above case, $t_1$ is said to occur at position $\langle p, 1\rangle$, $t_2$ at $\langle p, 2\rangle$, 
and so on.
For a set of $\forall\exists$ rules $\mathbb{P}$, its dependency graph is a
graph whose nodes are positions of predicate atoms in $\mathbb{P}$; for each  $r \in \mathbb{P}$ of the form (\ref{eqn:datalog+-rule}),
and for any variable $x$ occurring in position $\langle p,i\rangle$ in head of $r$:
\begin{enumerate}
 \item  if $x$ is universally quantified and $x$ occurs in the body of $r$ at position $\langle p',j\rangle$, 
 then there exists an edge from $\langle p',j\rangle$ to $\langle p,i\rangle$ 
\item if $x$ is existentially quantified, then for any universally quantified variable $x'$ occurring in the head of $r$, 
with $x'$ also occurring in the body of $r$ at position $\langle p',j\rangle$, 
there exists a special edge from $\langle p',j\rangle$ to $\langle p,i\rangle$.
\end{enumerate}
$\mathbb{P}$ is called weakly acyclic, iff its dependency graph does not contain cycles going through a special edge. For
any $\forall\exists$ rule set $\mathbb{P}$, if $\mathbb{P}$ is WA, then its chase is finite, and hence CQ EP is 
decidable. Note that the nodes in the dependency graph that has incoming special edges corresponds to the positions of predicates
where new values are created due to existential variables, and the normal edges capture the propagation of constants from one 
predicate position to another predicate position. In this way, absence of cycles involving special edges ensures that newly created Skolem blank nodes are not
recursively used to create other new Skolem blank nodes in the same position, leading to termination of chase computation.
\begin{example}\label{eg:comparisonWAcsafe}
 Let us revisit the quad-system $QS_{\C}=\langle Q_{\C}$, $R\rangle$ mentioned in example~\ref{eg:descendance}, whose dependency graph is shown in 
 Fig.~\ref{fig:dependencyGraph}. Note that the $QS_{\C}$ is uncsafe, 
 since its dChase contains a Skolem blank-node $\_\colon b_4$, which has as descendant another Skolem blank node 
 $\_\colon b_1$, with the same origin context $c_2$ (see Fig.~\ref{fig:descendance2}). However, it can be seen 
 from  Fig.~\ref{fig:dependencyGraph} that the dependency graph of $\tau(QS_{\C})$ does not contain 
 any directed cycle involving special edges. Hence $\tau(QS_{\C})$ is weakly acyclic. 
\end{example}
\begin{figure}[t]
\centering
\begin{tikzpicture}
[place/.style={circle,draw=black!50,fill=white!20,thick,
inner sep=0pt,minimum size=8mm}
]
\node(c11)[place] at ( -2,2) [place] {$\langle c_1, 1 \rangle$};
\node(c12)[place] at ( -2,-2) [place] {$\langle c_1, 2 \rangle$};
\node(c21)[place] at ( 1,2) [place] {$\langle c_2, 1 \rangle$};
\node(c23)[place] at ( 0.5,0) [place] {$\langle c_2, 3 \rangle$};
\node(c22)[place] at ( 1,-2) [place] {$\langle c_2, 2 \rangle$};
\node(c33)[place] at ( 3.5,0) [place] {$\langle c_3, 3 \rangle$};
\node(c32)[place] at ( -2,0) [place] {$\langle c_3, 2 \rangle$};
\draw [->] (c11.east) to node[auto,swap]{} (c21.west);
\draw [->] (c12.east) to node[auto,swap]{} (c22.west);
\draw [->] (c11.south) to node[auto,swap,label=below:$*$]{} (c23.north);
\draw [->] (c12.north) to node[auto,swap,label=below:$*$]{} (c23.south);
\draw [->] (c23.west) to node[auto,swap]{} (c32.east);
\draw [->] (c23.east) to node[auto,swap,label=below:$*$]{} (c33.west);
\draw [->] (c33.north) to node[auto,swap]{} (c21.east);
\draw [->] (c33.south) to node[auto,swap,label=below:$*$]{} (c22.east);
\end{tikzpicture}
\caption{Dependency graph of the quad-system in Example \ref{eg:descendance}.}
\label{fig:dependencyGraph}
\end{figure}
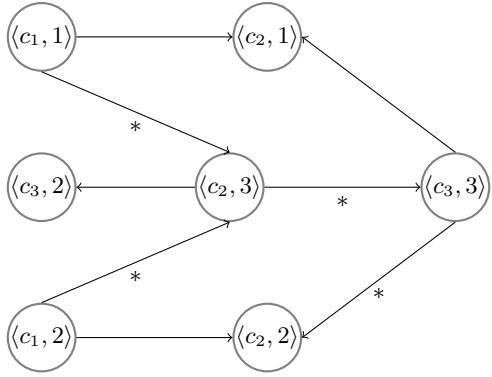
It turns out that there exists no inclusion relationship between the classes \textsc{wa} and \textsc{csafe} in either directions, i.e. 
\textsc{wa} $\not \subseteq $ $\textsc{csafe}$ (from example~\ref{eg:comparisonWAcsafe}),
and $\textsc{csafe} \not \subseteq$ \textsc{wa} (from the fact that \textsc{wa} $\subset$ \textsc{ja}, and example~\ref{eg:comparisonJAcsafe} below). 
Whereas \textsc{wa} $\subset$ \textsc{msafe}, since \textsc{wa} $\subset$ \textsc{mfa} and \textsc{mfa} $\equiv$ \textsc{msafe} (Theorem~\ref{msafeMFAequivalence}).
\subsection{Joint Acyclicity}
\noindent
Joint acyclicity~\cite{KR11jointacyc} extends weak acyclicity, by also taking into consideration the join between variables in body 
of $\forall\exists$ rules while analyzing the rules for acyclicity. The set \textsc{ja} represents the class of all ternary $\forall\exists$ rule sets
 that have the joint acyclicity property.
 A $\forall\exists$ rule set $\mathbb{P}$ is said to be \emph{renamed apart}, if for any $r \neq r'\in R$, $\var(r) \cap \var(r')=\emptyset$. 
Since any set of rules can be converted to an equivalent renamed apart one by simple variable renaming, we assume that any rule set $\mathbb{P}$ is renamed apart. 
Also for any $r \in \mathbb{P}$ and for a variable $y$, let $Pos_H^r(y)$ ($Pos_B^r(y)$) be the set of positions in which $y$ occurs in the head (resp. body) of $r$. 
For any $\forall\exists$ rule set $\mathbb{P}$ and an existentially quantified variable $y$ occurring in a rule in $\mathbb{P}$, we define $Mov_{\mathbb{P}}(y)$ 
as the least set with: 
\begin{itemize}
 \item $Pos_H^r(y) \subseteq Mov_{\mathbb{P}}(y)$, if $y$ occurs in $r$;
 \item $Pos_H^r(x) \subseteq Mov_{\mathbb{P}}(y)$, if $x$ is a universally quantified variable and $Pos_B^r(x) \subseteq Mov_{\mathbb{P}}(y)$;
\end{itemize}
for any $r \in \mathbb{P}$. 
The existential dependency graph of a (renamed apart) set of rules $\mathbb{P}$ is a graph whose nodes are the existentially quantified variables in $\mathbb{P}$.
There exists an edge from a variable $y$ to $y'$, if there is a rule $r \in \mathbb{P}$ in which $y'$ occurs and there exists a universally quantified variable $x$ in the 
head (and body) of $r$ such that $Pos_B^r(x)\subseteq Mov_{\mathbb{P}}(y)$. A $\forall\exists$ rule set $\mathbb{P}$ is jointly acyclic, iff its 
existential dependency graph is acyclic. Analyzing the containment relationships, it happens to be the case that \textsc{ja} $\not \subseteq$ \textsc{csafe} (since 
\textsc{wa} $\subset$ \textsc{ja}, and eg.~\ref{eg:comparisonWAcsafe}). Also example~\ref{eg:comparisonJAcsafe} shows us that 
\textsc{csafe} $\not \subseteq $ \textsc{ja}. However \textsc{ja} $\subset$ \textsc{msafe}, 
since \textsc{ja} $\subset$ \textsc{mfa} and \textsc{mfa} $\equiv$ \textsc{msafe} (Theorem~\ref{msafeMFAequivalence}).
\begin{example}\label{eg:comparisonJAcsafe}
 Consider the quad-system $QS_{\C}$ $=$ $\langle Q_{\C}, R \rangle$, where $Q_{\C}=\{c_1\colon(a,b,c)\}$.
  Suppose $R$ is the following set:
\[  R= \left \{
  \begin{array}{r}
     c_1\colon (x_{11},x_{12},z_1) \rightarrow c_2\colon(x_{11},x_{12},y_1) \hspace{0.4cm} (r_1)\\
     c_1\colon(x_{21}, x_{22}, z_2), c_2\colon (x_{22},x_{21},x_{23})  \rightarrow 
     c_3\colon(x_{21},x_{22},x_{23}) \hspace{.3cm} (r_2) \\
     c_3\colon (x_{31},x_{32}, x_{33}) \rightarrow c_1\colon(x_{33},x_{31},x_{32}) \hspace{0.1cm} (r_3)
  \end{array}
  \right \}\]
   Iterations during the dChase construction are:
  \vspace{-2pt}
 \begin{eqnarray}
  dChase_0(QS_{\C})=\{c_1\text{:}(a,b,c)\} \nonumber \\
  dChase_1(QS_{\C})=\{c_1\colon (a,b,c), c_2\colon (a,b,\_\colon b_1)\} \nonumber \\
  dChase(QS_{\C})=dChase_1(QS_{\C}) \nonumber 
 \end{eqnarray}
 Note that the lone Skolem blank node generated is $\_\colon b_1$, which do not have any descendants. Hence, 
 by definition $QS_{\C}$ is csafe (msafe/safe). Now analyzing the BRs for joint acyclicity, we note that
 for the only existentially quantified variable $y_1$, 
 
 $Mov_R(y_1)=\{\langle c_2, 3\rangle, \langle c_3, 3 \rangle, \langle c_1,1 \rangle \}$
 
\noindent\ Since the BR $r_1$ in which $y_1$ occurs contains the universally quantified variable $x_{11}$ in the head of $r_1$
 such that $Pos_B^{r_1}(x_{11}) \subseteq Mov_R(y_1)$, there exists a cycle from $y_1$ to $y_1$ itself in the existential dependency graph of $\tau(QS_{\C})$.
 Hence, by definition $\tau(QS_{\C})$ is not joint acyclic. Also since the class of weakly acyclic rules are contained in the class of jointly 
 acyclic rule, it follows that $\tau(QS_{\C})$ is also not weakly acyclic. 
\end{example}

\subsection{Model Faithful Acyclicity (MFA)}
\noindent MFA, proposed in Cuenca Grau et al.~\cite{Bernardo:dlacyclicity:2012}, is an acyclicity technique
that guarantees finiteness of chase and decidability of query answering, in the realm of $\forall\exists$ rules. The set \textsc{mfa}
denotes the class of all ternary $\forall\exists$ rule sets that are model faithfully acyclic. As far as we know, the 
MFA technique subsumes almost all other known techniques 
that guarantee a finite chase, in the $\forall\exists$ rules setting. Obviously,
 \textsc{wa} $\subset$ \textsc{ja} $\subset$ \textsc{mfa}.

For any $\forall\exists$ rule $r=\phi(r)(\vec x, \vec z)\rightarrow \psi(r)(\vec x, \vec y)$, for each $y_j \in \{\vec y\}$,
let $Y^j_r$ be a fresh unary predicate unique for $y_j$ and $r$; furthermore, let $S$ be a fresh binary predicate.
The transformation $\mathbf{mfa}$ of $r$ is defined as:
\begin{eqnarray}
\mathbf{mfa}(r)=\phi(r)(\vec x, \vec z)\rightarrow \psi(r)(\vec x, \vec y) \wedge  
\bigwedge_{y_j \in \{\vec y\}} \ [ Y^j_r(y_j) \wedge \bigwedge_{x_k \in \{\vec x\}} S(x_k, y_j)] \nonumber 
\end{eqnarray}
Also let $r_1$ and $r_2$ be two additional rules defined as:
\begin{eqnarray}
 S(x_1,z) \wedge S(z,x_2) \rightarrow S(x_1,x_2) \hspace{1cm} (r_1)\nonumber \\
 Y^j_r(x_1) \wedge  S(x_1, x_2) \wedge Y^j_r(x_2) \rightarrow \mathfrak{C} \hspace{1cm} (r_2) \nonumber 
\end{eqnarray}
where $\mathfrak{C}$ is a fresh nullary predicate. For any set of $\forall \exists$ rules $\mathbb{P}$, let $ad(\mathbb{P})$ be the union of
$r_1$ with the set  of rules obtained by instantiating $r_2$, for each $r \in \mathbb{P}$, for each existential variable $y_j$ in $r$.
For a set of $\forall\exists$ rules $\mathbb{P}$, $\mathbf{mfa}(\mathbb{P})$ $=$ $\bigcup_{r \in \mathbb{P}} \mathbf{mfa}(r) \cup ad(\mathbb{P})$. 
A $\forall\exists$ rule set $\mathbb{P}$ is said to be MFA, iff $\mathbf{mfa}(\mathbb{P}) \not \models_{\text{fol}} \mathfrak{C}$.
It was shown in Cuenca Grau et al.~\cite{Bernardo:dlacyclicity:2012} that if $\mathbb{P}$ is MFA, then $\mathbb{P}$ has
a finite chase, thus ensuring decidability of query answering.
The following theorem establishes the fact that the notion of msafety is equivalent to MFA, thanks to 
the polynomial time translations between quad-systems and ternary $\forall\exists$ rule sets.
\begin{theorem}\label{msafeMFAequivalence} Let $\tau$ be the translation function
from the set of unrestricted quad-systems to the set of ternary $\forall\exists$ rule sets, as defined in Definition~\ref{def:TranslationQuadsystemsToForall},
then, for any quad-system $QS_{\C}$ $=$ $\langle Q_{\C}$, $R \rangle$, $QS_{\C}$ is msafe iff $\tau(QS_{\C})$ is MFA. 
\end{theorem}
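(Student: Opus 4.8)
The plan is to prove both directions at once by establishing that $\mathbf{mfa}(\tau(QS_{\C})) \models_{\text{fol}} \mathfrak{C}$ holds \emph{exactly} when $QS_{\C}$ is unmsafe, and then contraposing. The backbone is the lockstep correspondence already obtained in the proof of Property~\ref{prop:TranslationQuadsystemsToForall}: the dChase of $QS_{\C}$ and the chase of $\tau(QS_{\C})$ are built in perfect synchrony, so that $\tau_q(dChase(QS_{\C}))$ $=$ $chase(\tau(QS_{\C}))$ and each Skolem blank node $\_\colon b \in \bn_{sk}(dChase(QS_{\C}))$ corresponds to a unique labelled null of $chase(\tau(QS_{\C}))$ carrying the same origin ruleId, the same origin existential variable, and the same descendant data. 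First I would verify that passing from $\tau(QS_{\C})$ to $\mathbf{mfa}(\tau(QS_{\C}))$ is a conservative extension: the predicates $S$, $Y^j_r$ and $\mathfrak{C}$ introduced by $\mathbf{mfa}$ are fresh and never occur in the body of a core rule (and $\mathfrak{C}$, being nullary, occurs only in heads), so the restriction of $chase(\mathbf{mfa}(\tau(QS_{\C})))$ to the core vocabulary coincides with $chase(\tau(QS_{\C}))$, hence with $\tau_q(dChase(QS_{\C}))$.

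Next I would pin down what the bookkeeping predicates compute. By a routine induction on the chase stages, using the synchrony above, $Y^j_{\tau_{br}(r_i)}(a)$ is derived iff $a$ is a Skolem null with origin ruleId $i$ whose origin existential variable is $y_j$, while $S(a,b)$ is derived iff $a$ is a proper descendant of $b$ in $dChase(QS_{\C})$ --- immediate, since $\mathbf{mfa}$ adds exactly the atoms $S(x_k,y_j)$ for the frontier variables $x_k$ (mirroring the child-of links generated on each rule application) together with the transitivity rule $r_1$ for $S$ (mirroring that descendance is the transitive closure of child-of). A small but essential point to record is that the body-less fact rule $\tau_{br}(r_{Q_{\C}})$ has empty frontier $\vec x$, so $\mathbf{mfa}$ adds no $S$-edge for it and equips each original (non-Skolem) blank node with its own, pairwise distinct marker $Y^j_{\tau_{br}(r_{Q_{\C}})}$; consequently original blank nodes can never supply the two equal-marker, $S$-related terms needed to fire the detection rule, so $\mathfrak{C}$ can only be produced from Skolem nulls of rules in $R$.

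With this dictionary in place, the detection rule $r_2 = Y^j_r(x_1) \wedge S(x_1,x_2) \wedge Y^j_r(x_2) \rightarrow \mathfrak{C}$ derives $\mathfrak{C}$ precisely when there are \emph{distinct} Skolem blank nodes $\_\colon b \neq \_\colon b'$ of $dChase(QS_{\C})$ with the \emph{same} origin ruleId \emph{and} the same origin existential variable $y_j$, such that $\_\colon b$ is a descendant of $\_\colon b'$. The crux is to reconcile this with unmsafety, which only demands equal origin ruleId. The implication from this finer condition to unmsafety is trivial; for the converse I would reuse the ``shift'' argument already carried out in the proof of Theorem~\ref{theorem:StrictContainment} for the case $j \neq k$: if $\_\colon b = \mu^{ext(\vec y)}(y_j)$ is a descendant of $\_\colon b' = \mu'^{ext(\vec y)}(y_k)$ with $j \neq k$ but both produced by $r_i$, then, since every existential variable of $r_i$ is instantiated in a single application, the null $\_\colon b'' = \mu'^{ext(\vec y)}(y_j)$ is generated together with $\_\colon b'$ and shares its children $\vec x[\mu']$; hence $\_\colon b$ is also a descendant of $\_\colon b''$, and the distinct nulls $\_\colon b,\_\colon b''$ now carry the identical marker $Y^j_{\tau_{br}(r_i)}$. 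Finally, $\mathbf{mfa}(\tau(QS_{\C})) \models_{\text{fol}} \mathfrak{C}$ is equivalent to $\mathfrak{C}$ belonging to the universal model $chase(\mathbf{mfa}(\tau(QS_{\C})))$ (by the $\forall\exists$ analogue of Theorem~\ref{dChaseUniversalModelProperty}), so combining the three steps gives: $\tau(QS_{\C})$ is not MFA iff $QS_{\C}$ is unmsafe, i.e. $\tau(QS_{\C})$ is MFA iff $QS_{\C}$ is msafe.

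I expect the main obstacle to be exactly this granularity mismatch --- $\mathbf{mfa}$ tags nulls by the pair (rule, existential variable) whereas msafety tracks only the rule --- and the care needed to show that a coarse, rule-only descendant cycle always refines to a variable-level cycle detectable by $r_2$; the remaining work (conservativity of the fresh predicates and the inductive reading of $S$ and $Y^j_r$) is routine once the synchrony of Property~\ref{prop:TranslationQuadsystemsToForall} is invoked.
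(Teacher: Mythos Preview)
Your proposal is correct and follows essentially the same route as the paper's proof outline: both rely on the lockstep correspondence of Property~\ref{prop:TranslationQuadsystemsToForall}, identify $S$ with the descendantOf relation, and resolve the granularity mismatch (rule-level for msafety versus rule-plus-variable for MFA) via the same ``shift'' argument already used in Theorem~\ref{theorem:StrictContainment}. You are somewhat more careful than the paper in explicitly noting the conservativity of the $\mathbf{mfa}$ extension and in handling the corner case of $\tau_{br}(r_{Q_{\C}})$ (empty frontier, pairwise distinct $Y^j$ markers for original blank nodes), but these are refinements of the same argument rather than a different approach.
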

\begin{proof}(outline)
Recall that $\tau=\langle \tau_q$, $\tau_{br}\rangle$, where $\tau_q$ is the quad translation function and $\tau_{br}$ is 
the translation function from BRs to $\forall\exists$ rules. Also, $\tau(QS_{\C})$ $=$ $\tau_{br}(\{r_{Q_{\C}}\}$ $\cup$ $R)$.
Also, recall that for every blank node $b$ in $Q_{\C}$, the BR $r_{Q_{\C}}$ contains a corresponding existentially quantified variable $y_b$. 
We already saw that for such a transformation, the following property holds:
for any $m \in \mathbb{N}$, $\tau_q(dChase_m(QS_{\C}))$ $=$ $chase_m(\tau(QS_{\C}))$, and for any BR $r \in$ $R$ $\cup$ $\{r_{Q_{\C}}\}$, an assignment $\mu$, 
$applicable_{R \cup \{r_{Q_{\C}}\}}(r$, $\mu$, $dChase_m(QS_{\C}))$ iff $applicable_{\tau(QS_{\C})}($ $\tau_{br}(r)$, $\mu$, $chase_m(\tau(QS_{\C})))$. Also notice that
for any two blank nodes $\_\colon b_1$, $\_\colon b_2$, $S(\_\colon b_1$, $\_\colon b_2) \in$ $chase(\tau(QS_{\C}))$, iff $\_\colon b_1$ 
is a descendant of $\_\colon b_2$ in $dChase(QS_{\C})$. Hence, the relations $S$ and descendantOf are identical.

 Intuitively, MFA looks for cyclic
creation of a Skolem blank-node whose descendant is another Skolem blank-node that is generated by the same rule $r$ $=$ $body(r)(\vec x$, $\vec z)$ $\rightarrow$
$head(r)(\vec x$, $\vec y)$, by the same existential variable in $y_j \in \{\vec y\}$ of $r$. Wheras, msafety looks only  for generation 
of a Skolem blank-node $\_\colon b'$ whose descendant is another Skolem $\_\colon b$ using the same rule $r$. Hence, 
if $\tau(QS_{\C})$ is not MFA, then $QS_{\C}$ is not msafe, and consequently 
 onlyIf part of the theorem trivially holds.

(If part)
Suppose $QS_{\C}$ is unmsafe, and  
$\mu$ and $\mu'$ are the assignments applied on $r\in R$ to create Skolem blank nodes $\_\colon b$ and $\_\colon b'$, respectively, and suppose 
$\_\colon b$ is a descendant of $\_\colon b'$  in the $dChase(QS_{\C})$. That is 
$\_\colon b$ $=$ $\mu(y_j)$ and $\_\colon b'$ $=$ $\mu'(y_k)$, for $y_j, y_k \in \{\vec y\}$ of $r$. Suppose $j=k$, then the prerequisite of 
non-MFA is trivially satisfied. 
Suppose if $j\neq k$ is the case, then there exists $\_\colon b''$ in $dChase(QS_{\C})$ such that
$\_\colon b''$ $=$ $\mu'(y_j)$, since $\mu'$ is applied on $r$ and $y_j$ $\in$ $\{\vec y\}$. This means that also in this case, the prerequisite of 
non-MFA is satisfied. As a consequence $\tau(QS_{\C})$ is not MFA. Hence it follows that, $QS_{\C}$ is msafe iff $\tau(QS_{\C})$ is MFA.
\end{proof}
\noindent Let us revisit the quad-system $QS_{\C}$ in example~\ref{eg:strictContainmentMsafeToSafe}, it can be easily seen that $\tau(QS_{\C})$ is not MFA.
 Recall that we have seen that $QS_{\C}$ is safe but not msafe. 
We consider the Theorem~\ref{msafeMFAequivalence} to be of importance, as it not only establishes the equivalence of MFA and msafety, but
 thanks to it and the translation $\tau$, it can be deduced that the technique of safety, which we presented earlier, (strictly) extends the MFA technique. 
 As far as we know, the MFA class of
$\forall\exists$ rule sets is one of the most expressive class in the realm of $\forall \exists$ rule sets which allows a finite chase. 
Hence, the notion of safety that we propose can straightforwardly be ported to $\forall\exists$ settings. 
 The main difference between MFA and safety is that MFA only looks for cyclic
creation of two distinct Skolem blank-nodes $\_\colon b, \_\colon b'$ that are generated by the same rule $r$, by the same
existential variable in $r$. Whereas safety also takes into account the origin vectors $\vec a$ and $\vec a'$ used during rule application
to create $\_\colon b$ and $\_\colon b'$, 
respectively, and only raises an alarm if $\vec a \cong \vec a'$. Although, equivalence holds only between quad-systems and ternary $\forall\exists$ rule sets,
it can easily be noticed that the technique of safety can be applied to $\forall\exists$ rule sets of arbitrary arity, and can be used to extend currently established
tools and systems that work on existing notions of acyclicity such as WA, JA, or MFA.

\section{Related Work}\label{sec:related work}

\paragraph{Contexts and Distributed Logics} Work on contexts gained its attention as early as in the 80s, as 
McCarthy~\cite{mccarthy_generality_87} proposed context as a 
solution to the generality problem in AI. 
After this, various studies about logics of contexts mainly in the field of KR were done by Guha~\cite{guhas_thesis}, 
\emph{Distributed First Order Logics} by Ghidini et al.~\cite{Ghidini98distributedfirst} and 
\emph{Local Model Semantics} by Giunchiglia et al.~\cite{Ghidini01localmodels}. 
Primarily in these works, contexts were formalized as a first order/propositional
theory and bridge rules were provided to inter-operate the various theories of contexts. Some of the initial works on contexts
relevant to semantic web were the ones like \emph{Distributed Description Logics}~\cite{DDL} by Borgida et al.,
 and \emph{Context-OWL}~\cite{cowl} by Bouquet et al., and the work of 
 CKR~\cite{serafini2012contextualized,josephWomo} by Serafini et al. 
These were mainly logics based on DLs, which formalized contexts as OWL KBs, whose semantics is given using a 
distributed interpretation structure with additional semantic conditions that suits varying requirements. Compared to these works, 
the bridge rules we consider are much more expressive with conjunctions and existential variables that supports 
value/blank-node creation. 

\paragraph{Temporal RDF/Annotated RDF}
Studies in extending standard RDF with dimensions such as time and annotations have already been 
accomplished. Gutierrez et al. in \cite{GutierrezHV05}
tried to add a temporal extension to RDF and defines the notion of a `temporal rdf graph', in which
a  triple is augmented to a quadruple of the form $t\colon(s, p, o)$, where $t$ is a time point.
Whereas annotated extensions to RDF and querying annotated graphs
have been studied in  Udrea et al.~\cite{udrea-annotated-rdf-2010} and Straccia et al.~\cite{stra-etal-2010}. 
Unlike the case of time, here the quadruple has the form: $a\colon(s,p,o)$,
where $a$ is an annotation. The authors provide  semantics, inference rules and 
query language that allows for expressing temporal/annotated
queries. Although these approaches, in a way address contexts by means of time and annotations, 
the main difference in our work is that we provide the means to specify expressive
bridge rules for inter-operating the reasoning between the various contexts.

\paragraph{DL+rules} Works on extending DL KBs with Datalog like rules was studied by Horrocks et al.~\cite{SWRL} giving 
rise to the  SWRL~\cite{SWRL} language.  Related initiatives propose a formalism using which one can mix a DL ontology with
 the Unary/Binary Datalog RuleML sublanguages of the Rule Markup Language, and hence enables Horn-like rules
to be combined with an OWL KB. Since SWRL is undecidable in general, studies on 
computable sub-fragments gave rise to works like Description Logic Rules~\cite{DLrules}, where the authors
deal with rules that can be totally internalized by a DL knowledge base, and hence if the DL considered is 
decidable, then also is a DL+rules KB. The authors give various fragments of the rule bases like SROIQ rules, EL++ rules
etc. and show that certain new constructs that are not expressible by plain DL can be expressed using
rules, although they are finally internalized into DL KBs. Unlike in our scenario, these works consider only 
horn rules without existential variables.

 
\paragraph{$\forall\exists$ rules, TGDs, Datalog+- rules}
Query answering over rules with universal-existential quantifiers
in the context of databases, where these rules are called Datalog+- rules/tuple generating dependencies (TGDs),
was done by Beeri and Vardi~\cite{BeeriVImplicationProblem81} even in the early 80s, where the authors
show that the query entailment problem, in general, is undecidable. However, recently many classes of such rules 
have been identified for which query answering is decidable. These classes (according to \cite{{BagetLMS11}}) can broadly be divided into the following 
three categories: (i) \emph{bounded treewidth sets} (BTS), (ii) finite unification sets (FUS), and (iii) finite extension sets (FES). 
BTS contains the classes of $\forall\exists$ rule sets, whose models have bounded treewidth. Some of the important classes 
of these sets are the linear $\forall\exists$ rules~\cite{JohnsonK84},  (weakly) guarded rules~\cite{CaliGK08}, (weakly) frontier guarded rules~\cite{BagetLMS11}, and
jointly frontier guarded rules~\cite{KR11jointacyc}. BTS classes in general need not have a finite chase, and query answering is done by exploiting the 
fact that the chase is tree shaped, whose nodes (which are sets of instances)  start replicating (up to isomorphism) after a while. Hence, one 
could stop the computation of the chase, once it can be made sure that any future iterations of chase can only produce nodes that are isomorphic 
to existing nodes. A deterministic algorithm for deciding query entailment for the greedy BTS, which is a subset of this class is 
provided in Thomazo et al.~\cite{thomazo:lirmm-00763518}.

FUS classes include the class of `sticky' rules~\cite{DBLP:conf/rr/CaliGP10,CaliGP2012AIJ}, atomic hypothesis rules in which the body of each rule
contains only a single atom, and also the class of linear $\forall\exists$ rules. 
The approach used for query answering in  FUS classes is to rewrite the input query w.r.t. to the $\forall\exists$ rule sets
to another query  that can be evaluated directly on the set of instances, such that the answers for the former query
and latter query coincides. The approach is called the \emph{query rewriting approach}.  
Compared to approaches proposed in this paper, 
these approaches do not enjoy the finite chase property, and are hence not conducive to materialization/forward chaining 
based query answering.

Unlike BTS and FUS, the FES classes are characterized by the finite chase property, and hence are most related to the techniques proposed in our work.  
Some of the classes in this set employ termination guarantying checks called  `acyclicity tests' that  analyze 
the information flow between rules to check whether cyclic dependencies exists that can lead to infinite chase.  
\emph{Weak acyclicity}~\cite{Fagin05dataexchange,Deutsch03reformulationof},
was one of the first such notions, and was extended to \emph{joint acyclicity}~\cite{KR11jointacyc} and 
\emph{super weak acyclicity}~\cite{Marnette2009}.  The main approach used in these techniques is to exploit the 
structure of the rules and use a dependency graph that models the propagation path of constants across various predicates
in the rules, and restricting the dependency graph to be acyclic. The main drawback of these approaches is that
they only analyze the schema/Tbox part of the rule sets, and ignore the instance part, and hence produce a large number of false alarms, i.e. it is often the 
case that although dependency graph is cyclic, the chase is finite. Recently, a more dynamic approach, called the MFA technique, that also takes into account
the instance part of the rule sets was proposed in Cuenca grau et al.~\cite{{Bernardo:dlacyclicity:2012}}, where existence of cyclic 
Skolem blank-node/constant generations in the chase is detected by augmenting the rules with extra information that keeps track of the Skolem function 
used to generate each Skolem blank-node. As shown in section \ref{sec:Comparison}, our technique of safety subsumes the MFA technique,  
and supports for much more expressive rule sets, by also keeping track of the vectors used by rule bodies
while Skolem blank-nodes are generated.


\paragraph{Data integration} Studies in query answering on integrated heterogeneous databases with expressive integration
rules in the realm of data integration is primarily studied in the following two settings: 
(i) Data exchange~\cite{Fagin05dataexchange}, in which there is a source database and target database that are connected with 
existential rules, and 
(ii) Peer-to-peer data management systems (PDMS)~\cite{Halevy03schemamediation}, 
where there are an arbitrary number of peers that are interconnected using existential rules. 

The approach based on dependency graphs, for instance, is used by 
Halevi et al. in the context of peer-peer data management systems~\cite{Halevy03schemamediation}, and
decidability is attained by not allowing any kind of cycles in the peer topology.
Whereas in the context of Data exchange, 
WA is used in \cite{Fagin05dataexchange,Deutsch03reformulationof} to assure decidability, and the recent work by Marnette~\cite{Marnette2009}
employs the super weak acyclicity (SWA) to ensure decidability. It was shown in Cuenca Grau et al~\cite{Bernardo:dlacyclicity:2012}
that their MFA technique strictly subsumes both WA and SWA techniques in expressivity. Since we saw in section \ref{sec:Comparison}
that our technique of safety subsumes the MFA technique and allows the representation of much more expressive rule sets, the safety 
technique can straightforwardly be employed in the above mentioned systems with decidability guarantees for query answering.

\section{Summary and Conclusion}\label{section:conclusion}
\begin{table*}[t]
\centering
\begin{tabular}{| c | c | c | c |c |}
\hline
&&&&\\
Quad-System &Chase size w.r.t &  Data Complexity of \ & Combined Complexity & Complexity of \\
 Fragment & input quad-system \ & CCQ entailment & of CCQ entailment & Recognition  \\
\hline
Unrestricted Quad-Systems & Infinite & Undecidable & Undecidable & PTIME\\ 
Safe Quad-Systems & Double exponential  & PTIME-complete  & 2EXPTIME-complete & 2EXPTIME\\ 
MSafe Quad-Systems & Double exponential  & PTIME-complete  & 2EXPTIME-complete & 2EXPTIME\\ 
CSafe Quad-Systems & Double exponential  & PTIME-complete  & 2EXPTIME-complete & 2EXPTIME\\ 
RR Quad-Systems & Polynomial  & PTIME-complete & \ \ EXPTIME & PTIME\\ 
Restricted RR Quad-Systems & Polynomial & PTIME-complete & \ \ NP-complete & PTIME  \\
\hline
\end{tabular}
\caption{Complexity info for various quad-system fragments} 
\label{tab:compResults}
\end{table*}
\noindent In this paper, we study the problem of query answering over contextualized RDF knowledge in the presence of 
forall-existential bridge rules.
We show that the problem, in general, is undecidable, and present a few decidable classes of quad-systems.
Table \ref{tab:compResults} displays the complexity results of chase computation and query entailment for the various classes of 
quad-systems we have derived.  
Classes csafe, msafe, and safe, 
ensure decidability by restricting the structure of Skolem blank-nodes generated in the dChase. Briefly, the above classes do not allow an infinite descendant
chain for Skolem blank-nodes generated, by constraining each Skolem blank-node in a descendant chain to have a different value for 
certain attributes, whose value sets are finite. RR and restricted RR quad-systems, do not allow the generation of Skolem blank nodes, thus constraining
the dChase to have only constants from the initial quad-system. The above classes which suit varying situations, can be used to extend the currently established tools  
for contextual reasoning  to give support for expressive
bridge rules with conjunctions and existential quantifiers with decidability guarantees. From an expressivity point of view,
the class of safe quad-systems subsumes all the above classes, and other well known classes in the realm of $\forall\exists$ rules with finite chases.
We view the  results obtained in this paper as a general foundation 
for contextual reasoning and query answering over contextualized RDF knowledge formats such as quads, 
and can straightforwardly be used to extend existing quad stores.
\section{Acknowledgements}
\noindent We sincerely thank  Loris Bozzatto (FBK-IRST, Italy), and Francesco Corcoglionitti (FBK-IRST, Italy),
 and Prof. Roberto Zunino (DISI, University of Trento, Italy) for all their helpful technical feedbacks on an initial version of this paper.
  We also thank Dr. Christoph Lange (School of Computer Science, University of 
 Birmingham, UK),  Prof. Sethumadhavan (Center for Cyber Security, Amrita University, India), 
 and  Prof. Padmanabhan T.R. (Dept. of Computer Science, Amrita University, India), for their time and motivating discussions.

\appendix
\section{Proofs for Section \ref{sec:Query}}
\begin{proof}[Property \ref{prop:precLinearOrder}]
Note that a strict linear order is a relation that is irreflexive, transitive, and linear.

Irreflexivity: By contradiction, suppose $\prec_q$ is not irreflexive, then there exists $Q \in \mathcal{Q}$ such that 
$Q \prec_q Q$ holds. This means that neither of the conditions (i) and (ii) of $\prec_q$ definition holds for $Q$. Hence, due to condition
(iii) $Q \not \prec_q Q$, which is a contradiction.

Linearity: Note that for any two distinct $Q, Q' \in \mathcal{Q}$, one of the following holds: (a) $Q \subset Q'$, (b) $Q' \subset Q$, or 
(c) $Q\setminus Q'$ and $Q'\setminus Q$ are non-empty and disjoint. Suppose (a) is the case, then $Q \prec_q Q'$ holds.
Similarly, if (b) is the case then $Q' \prec_q Q$ holds. Otherwise if (c) is the case, then by condition (ii), either $Q \prec_q Q'$ or $Q' \prec_q Q$ should hold.
Hence, $\prec_q$ is a linear order over $\mathcal{Q}$.

Transitivity: Suppose there exists $Q, Q', Q'' \in \mathcal{Q}$ such that $Q \prec_q Q'$ and $Q' \prec_q Q''$. Then, one of the following
four cases hold: (a) $Q \prec_q Q'$ due to (i) and $Q' \prec_q Q''$ due to (i), (b) $Q \prec_q Q'$ due to (i) and $Q' \prec_q Q''$ due to (ii),
(c) $Q \prec_q Q'$ due to (ii) and $Q' \prec_q Q''$ due to (i), (d) $Q \prec_q Q'$ due to (ii) and $Q' \prec_q Q''$ due to (ii). 

Suppose if (a) is the case, then trivially $Q \subset Q''$, and hence by applying condition (i) $Q \prec_q Q''$. Otherwise if
(b) is the case, then either (1) $Q \subset Q''$ or (2) $Q \not \subset Q''$. Suppose, (1) is the case then, by (i) $Q \prec_q Q''$.
Otherwise, if (2) is the case, then since, $Q \subset Q'$,  it cannot be the case that greatestQuad$_{\prec_l}(Q''\setminus Q)$ $\prec_l$ 
greatestQuad$_{\prec_l}(Q''\setminus Q')$, and it cannot be the case that greatestQuad$_{\prec_l}(Q'\setminus Q'')$ $\prec_l$ 
greatestQuad$_{\prec_l}(Q\setminus Q'')$. Hence, it should be the case that
greatestQuad$_{\prec_l}(Q''\setminus Q')$ $\preceq_l$ greatestQuad$_{\prec_l}(Q''\setminus Q)$ and 
greatestQuad$_{\prec_l}(Q\setminus Q'')$ $\prec_l$ greatestQuad$_{\prec_l}(Q'\setminus Q'')$. But since, greatest-\\
Quad$_{\prec_l}(Q'\setminus Q'')$ $\prec_l$ greatestQuad$_{\prec_l}(Q''\setminus Q')$, it allows us to derive
greatestQuad$_{\prec_l}(Q$ $\setminus$ $Q'')$ $\prec_l$ greatestQuad$_{\prec_l}(Q''$ $\setminus$ $Q)$, and hence by condition (ii), $Q \prec_q Q''$. 
Hence, if (b) is the case, then in both possible cases (1) or (2), it should be the case that $Q \prec_q Q''$.  
Otherwise if (c) is the case, then similar to the arguments in (b), by condition (i) or (ii), it can easily be seen that $Q \prec_q Q''$.

Otherwise, if (d) is the case, then the following must hold: greatestQuad$_{\prec_l}(Q\setminus Q')$ $\prec_l$ 
greatestQuad$_{\prec_l}(Q'\setminus Q)$ ($\dagger$) and greatestQuad$_{\prec_l}(Q'\setminus Q'')$ $\prec_l$ 
greatestQuad$_{\prec_l}($ $Q''\setminus Q')$ ($\ddagger$).
Suppose by contradiction $Q''$ $\prec_q$ $Q$, then one of the following holds:
(1) $Q''$ $\prec_q$ $Q$ by condition (i) or (2) $Q''$ $\prec_q$ $Q$ by condition (ii). Suppose, (1) is the case, then 
it should be the case that $Q'' \subset Q$. Hence, it should  not be the case that greatestQuad$_{\prec_l}($ $Q \setminus Q')$ $\prec_l$ 
greatestQuad$_{\prec_l}(Q''\setminus Q')$ and it should not be the case that greatestQuad$_{\prec_l}(Q'\setminus Q'')$ $\prec_l$ 
greatestQuad$_{\prec_l}(Q' \setminus Q)$.  Hence, it should be the case that 
greatestQuad$_{\prec_l}(Q''\setminus Q')$ $\preceq_l$ greatestQuad$_{\prec_l}(Q \setminus Q')$ ($\heartsuit$), 
and it should be the case that 
greatestQuad$_{\prec_l}($ $Q' \setminus Q)$ $\preceq_l$ greatestQuad$_{\prec_l}(Q'\setminus Q'')$ ($\spadesuit$). 
Applying ($\ddagger$) in $(\heartsuit)$, we get 
greatestQuad$_{\prec_l}(Q'\setminus Q'')$ $\prec_l$ greatestQuad$_{\prec_l}(Q \setminus Q')$,
and Applying ($\dagger$) in $(\spadesuit)$, we get 
greatestQuad$_{\prec_l}($ $Q \setminus Q')$ $\prec_l$ greatestQuad$_{\prec_l}(Q'\setminus Q'')$, which is a contradiction. 
Suppose if (2) is the case, then greatestQuad$_{\prec_l}(Q''\setminus Q)$ $\prec_l$ greatestQuad$_{\prec_l}($ $Q\setminus Q'')$. 
The above can be written as: 
greatestQuad$_{\prec_l}($ $Q''\setminus (Q \cap Q''))$ $\prec_l$ greatestQuad$_{\prec_l}(Q\setminus (Q \cap Q''))$.
Using $Q \cap Q' \cap Q'' \subseteq Q \cap Q'$, it follows that 
greatestQuad$_{\prec_l}(Q''\setminus (Q \cap Q' \cap Q''))$ $\preceq_l$ greatestQuad$_{\prec_l}($ $Q\setminus (Q \cap Q' \cap Q''))$ ($\clubsuit$).
Also applying similar transformation in $(\dagger)$ and $(\ddagger)$, we get 
greatestQuad$_{\prec_l}(Q\setminus (Q \cap Q' \cap Q''))$ $\preceq_l$ greatestQuad$_{\prec_l}($ $Q'\setminus (Q \cap Q' \cap Q''))$, 
and 
greatestQuad$_{\prec_l}(Q'\setminus (Q \cap Q' \cap Q''))$ $\preceq_l$ greatestQuad$_{\prec_l}(Q''\setminus (Q \cap Q' \cap Q''))$.
From which, it follows that 
greatestQuad$_{\prec_l}(Q\setminus (Q \cap Q' \cap Q''))$ $\preceq_l$ greatestQuad$_{\prec_l}(Q''\setminus (Q \cap Q' \cap Q''))$.
Using ($\clubsuit$) in the above, we get 
greatestQuad$_{\prec_l}(Q\setminus (Q \cap Q' \cap Q''))$ $=$ greatestQuad$_{\prec_l}(Q'\setminus (Q \cap Q' \cap Q''))$ $=$
greatestQuad$_{\prec_l}(Q''\setminus (Q \cap Q' \cap Q''))$, which is a contradiction. Hence, it should be the case that 
$Q \prec_q Q''$.
\end{proof}
\begin{proof}[Theorem \ref{theorem:undecidable}]
We show that CCQ entailment is undecidable for
unrestricted quad-systems, by showing that the well known undecidable problem of 
``non-emptiness of intersection of context-free grammars'' is reducible to the CCQ answering
problem. 

Given an alphabet $\Sigma$,  string $\vec w$ is a sequence of symbols from $\Sigma$. A language $L$ is a 
subset of $\Sigma^*$, where $\Sigma^*$ is the set of all strings that can be constructed
from the alphabet $\Sigma$, and also includes the empty string $\epsilon$. Grammars are machineries that generate a particular language.
A grammar $G$ is a quadruple $\langle V, T, S, P \rangle$, where $V$ is the set of variables, $T$, the set of terminals,
$S \in V$ is the start symbol, and $P$ is a set of production rules (PR), in which each PR $r \in P$, is of the form:
\[
\vec w \rightarrow \vec w'
\]
where $\vec w, \vec w' \in \{T \cup V\}^*$. Intuitively application of a PR $r$ of the form above on a string $ \vec w_1$, 
replaces every occurrence of the sequence $\vec w$ in $\vec w_1$ with $\vec w'$. PRs are applied starting from the 
start symbol $S$ until it results in a string $\vec w$, with $\vec w \in \Sigma^*$ or no more production 
rules can be applied on $\vec w$. In the former case, we say that $\vec w \in L(G)$, the language generated by grammar $G$. For
a detailed review of grammars, we refer the reader to Harrison et al. ~\cite{Harrison-formal-language}.
A \emph{context-free grammar} (CFG) is a grammar, whose set of PRs $P$, have the following property: 
\begin{property}\label{prop:CFG}
For a CFG, every PR is of the form $v \rightarrow \vec w$, where $v \in V$, $\vec w \in \{T \cup V\}^*$.
\end{property}
 Given two CFGs, $G_1=\langle V_1, T, S_1, P_1 \rangle$ and $G_2=\langle V_2, T, S_2, P_2 \rangle$, 
where $V_1, V_2$ are the set of variables, 
$T$ such that $T \cap (V_1 \cup V_2)=\emptyset$ is the set of terminals. $S_1 \in V_1$ is the start symbol of $G_1$,
and $P_1$ are the set of PRs of the form $v \rightarrow \vec w$, where $v \in V$, $\vec w$ is a sequence of the form
$w_1...w_n$, where $w_i \in V_1 \cup T$. $S_2, P_2$ are defined similarly.  Deciding whether the language
generated by the grammars $L(G_1)$ and $L(G_2)$ have non-empty intersection is known 
to be undecidable~\cite{Harrison-formal-language}.

Given two CFGs, $G_1=\langle V_1, T, S_1, P_1 \rangle$ and $G_2=\langle V_2, T, S_2, P_2 \rangle$,
we encode grammars $G_1, G_2$ into a quad-system of the form $QS_c=\langle Q_{c},R\rangle$, with a single context identifier $c$.
Each PR $r= v \rightarrow \vec w \in P_1 \cup P_2$, with $\vec w=w_1w_2w_3..w_n$, is encoded as a BR of the form:
\begin{eqnarray}\label{eqn:PRtoICR}
&& c\colon(x_1,w_1,x_2), c\colon(x_2, w_2, x_3),...,c\colon(x_n,w_n,x_{n+1}) 
 \rightarrow c\colon(x_1,v,x_{n+1}) 
\end{eqnarray}
where $x_1,..,x_{n+1}$ are variables. W.l.o.g. we assume that the set of terminal symbols $T$ is
equal to the set of terminal symbols occurring in $P_1 \cup P_2$. For each terminal symbol $t_i \in T$, $R$ contains
a BR of the form:
\begin{eqnarray}\label{eqn:existential rule}
&& c\colon(x, \texttt{rdf:type}, C) \rightarrow \exists y \ c\colon(x,t_i,y), 
 c\colon(y, \texttt{rdf:type},C) 
\end{eqnarray}
and $Q_c$ contains only the triple:
\[
 c\colon(a,\texttt{rdf:type},C)
\]
We in the following show that:
\begin{eqnarray}
QS_c \models \exists y \ c\colon(a , S_1, y) \wedge c\colon(a, S_2, y) \leftrightarrow 
L(G_1)\cap L(G_2)\neq\emptyset 
\end{eqnarray}

\begin{claim}(1)
 For any $\vec w=t_1,...,t_p \in T^*$, there exists $b_1,...b_p$, such that $c\colon(a,t_1,b_1)$, $c\colon(b_1,t_2,b_2)$, ..., 
 $c\colon(b_{p-1}, t_p, b_p)$, $c\colon(b_p,\texttt{rdf:type},C)$ $\in$ $dChase($ $QS_c)$.
\end{claim}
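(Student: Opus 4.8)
The plan is to argue by induction on the length $p=\|\vec w\|$ of the terminal string, reducing the entire statement to a single \emph{successor} property of the dChase that can be read off directly from the fact that $dChase(QS_c)$ is a model of $QS_c$.

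First I would isolate the following successor property: for every constant $k$ with $c\colon(k,\texttt{rdf:type},C)\in dChase(QS_c)$ and every terminal $t_i\in T$, there is a constant $k'$ with $c\colon(k,t_i,k')\in dChase(QS_c)$ and $c\colon(k',\texttt{rdf:type},C)\in dChase(QS_c)$. Since the encoding contains no bridge rule with an empty head, $QS_c$ is trivially consistent, so Theorem~\ref{dChaseUniversalModelProperty} applies and $dChase(QS_c)$ is a model of $QS_c$; in particular it satisfies the terminal bridge rule~(\ref{eqn:existential rule}) for $t_i$. Instantiating the body of that rule by the assignment $x\mapsto k$, which is satisfied because $c\colon(k,\texttt{rdf:type},C)\in dChase(QS_c)$, the satisfaction condition (Definition~\ref{def:model-quad-system}, item~\ref{item:BRSatisfaction}) yields an extended assignment sending $y$ to some $k'$ for which both head atoms hold, which is exactly the successor property.

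With the successor property in hand the induction is routine. For the base case $p=0$ the chain is empty, the endpoint is $a$ itself (set $b_0:=a$), and the only requirement is $c\colon(a,\texttt{rdf:type},C)\in dChase(QS_c)$, which holds because this quad lies in $Q_c=dChase_0(QS_c)$. For the step, given by the induction hypothesis the chain $c\colon(a,t_1,b_1),\dots,c\colon(b_{p-2},t_{p-1},b_{p-1})$ for the prefix $t_1\dots t_{p-1}$ together with the type witness $c\colon(b_{p-1},\texttt{rdf:type},C)$ at its endpoint, I apply the successor property with $k=b_{p-1}$ and $t_i=t_p$ to obtain $b_p$ with $c\colon(b_{p-1},t_p,b_p)$ and $c\colon(b_p,\texttt{rdf:type},C)$ in $dChase(QS_c)$, which extends the chain by one link and preserves a type-$C$ endpoint.

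The main obstacle I anticipate is the restricted (non-oblivious) nature of the dChase: because the $applicable_R$ predicate forbids firing a rule whose head is already satisfied, one cannot simply claim that a \emph{fresh} blank node is generated at each node for each terminal, and a direct trace of the rule-application sequence would have to account for the possibility that a suitable edge already exists. Routing the argument through the model property of $dChase(QS_c)$ sidesteps this entirely, since the claim asserts only the \emph{existence} of a chain and not that its links are freshly created; satisfaction of~(\ref{eqn:existential rule}) in the model is therefore all that is needed. The only side condition to check is consistency of $QS_c$, required to invoke Theorem~\ref{dChaseUniversalModelProperty}, and this is immediate because no bridge rule in the construction has an empty head.
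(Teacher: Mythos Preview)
Your proof is correct and follows essentially the same inductive scheme as the paper: induction on the length of the terminal string, extending the chain by one link using the terminal bridge rule~(\ref{eqn:existential rule}).

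The one genuine methodological difference is worth a sentence. The paper argues operationally, asserting that rule~(\ref{eqn:existential rule}) is \emph{applicable} at the chain's endpoint and hence produces the required successor quads; you instead invoke Theorem~\ref{dChaseUniversalModelProperty} and read off the successor from the fact that $dChase(QS_c)$ \emph{satisfies} the rule as a model. Your route is the more careful of the two: as you observe, under the restricted (non-oblivious) chase the rule need not be literally applicable if a witness already exists, and the paper's phrasing glosses over this case (harmlessly, since non-applicability here just means the head is already present). Starting the induction at $p=0$ rather than $p=1$ is an inessential variation. Your consistency check is adequate for the simple local semantics assumed in the construction.
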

we proceed by induction on $|\vec w|$.
\begin{description}
 \item [base case] suppose if $|\vec w|=1$, then $\vec w=t_i$, for some $t_i \in T$. But
 by construction $c\colon(a$, \texttt{rdf:type}, $C)$ $\in$ $dChase_0(QS_c)$, on which rules of the 
 form (\ref{eqn:existential rule}) is applicable. Hence, there exists an $i$ such that
 $dChase_i(QS_c)$ contains $c\colon(a,t_i,b_i)$, $c\colon(b_i,\texttt{rdf:type},C)$, for each $t_i \in T$. 
 Hence, the base case.
 \item [hypothesis] for any $\vec w=t_1...t_p$, if $|\vec w|\leq p'$, then there exists $b_1,...,b_p$, such that 
 $c\colon(a,t_1,b_1)$, $c\colon(b_1,t_2,b_2)$, ..., $c\colon(b_{p-1}, t_p, b_p)$, $c\colon(b_p$, \\
 \texttt{rdf:type}, $C)$ $\in$ $dChase(QS_c)$.
 \item [inductive step] suppose $\vec w=t_1...t_{p+1}$, with $|\vec w|\leq p'+1$. Since $\vec w$ can be written
 as $\vec{w'}t_{p+1}$, where $\vec w'=t_1...t_p$, and by hypothesis, there exists $b_1,...,b_p$ such that 
   $c\colon(a,t_1,b_1)$, $c\colon(b_1,t_2,b_2)$, $...$, $c\colon(b_{p-1}, t_p, b_p)$, $c\colon(b_p,\texttt{rdf:type},C)$ $\in$ 
 $dChase(QS_c)$. Also since rules of the form (\ref{eqn:existential rule}) are applicable on $c\colon(b_p$, \texttt{rdf:type}, $C)$,
 and hence produces triples of the form $c\colon(b_p,t_i,b_{p+1}^i)$, $c\colon(b_{p+1}^i$, \texttt{rdf:type}, $C)$, 
 for each $t_i \in T$. Since $t_{p+1} \in T$, the claim follows.
\end{description}
 For a grammar $G=\langle V,T,S,P\rangle$, whose start symbol is $S$, and for any $\vec w \in \{V \cup T\}^*$,
 for some $V_j \in V$, we denote by $V_j \rightarrow^i \vec w$, 
 the fact that $\vec w$ was derived from $V_j$ by $i$ production steps, i.e. there exists steps 
 $V_j \rightarrow r_1, ...,r_i \rightarrow \vec w$, which lead to the production of $\vec w$. 
For any $\vec w$, $\vec w \in L(G)$, iff 
there exists an $i$ such that $S \rightarrow^i \vec w$. For any $V_j \in V$, we use $V_j \rightarrow^* \vec w$ to denote
the fact that there exists an arbitrary $i$, such that $V_j \rightarrow^i \vec w$.
\begin{claim}(2)
 For any $\vec w=t_1...t_p \in \{V \cup T\}^*$, and for any $V_j \in V$, if $V_j \rightarrow^* \vec w$
and there exists $b_1,...,b_{p+1}$, with $c\colon(b_1,t_1,b_2), ..., c\colon(b_p,t_p,b_{p+1}) \in dChase(QS_c)$,
then $c\colon(b_1,V_j,b_{p+1}) \in dChase(QS_c)$. 
\end{claim}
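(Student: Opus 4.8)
The plan is to prove Claim (2) by strong induction on the number $i$ of production steps in the derivation $V_j \rightarrow^i \vec{w}$ (assuming the statement for all derivations of length at most $i$ when handling length $i+1$). The engine driving the argument is that $dChase(QS_c)$ is a model of $QS_c$ by Theorem~\ref{dChaseUniversalModelProperty}, so it is closed under the range-restricted encoding rules of the form~(\ref{eqn:PRtoICR}): whenever the body of such a rule is matched by a path of triples already present in the dChase, the corresponding head triple must be present too (no blank nodes are created, since these rules carry no existential variables). For the base case $i=1$, the derivation $V_j \rightarrow^1 \vec{w}$ is a single production $V_j \rightarrow \vec{w}$ with $\vec{w}=t_1\ldots t_p$, which by construction is encoded as the bridge rule $c\colon(x_1,t_1,x_2),\ldots,c\colon(x_p,t_p,x_{p+1}) \rightarrow c\colon(x_1,V_j,x_{p+1})$. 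The hypothesis of the claim says exactly that the assignment $x_k \mapsto b_k$ maps this body into $dChase(QS_c)$, so closure of the dChase under the rule yields $c\colon(b_1,V_j,b_{p+1}) \in dChase(QS_c)$.

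For the inductive step, consider $V_j \rightarrow^{i+1} \vec{w}$. I would decompose the derivation through its first production $V_j \rightarrow \vec{u}$ with $\vec{u}=u_1\ldots u_m$, followed by $i$ further steps. By the standard subtree decomposition of context-free derivations, $\vec{w}$ factors as $\vec{w}_1\cdots\vec{w}_m$ where $u_k \rightarrow^{i_k} \vec{w}_k$ and $\sum_k i_k = i$, so each $i_k \leq i$. Splitting the given path $c\colon(b_1,t_1,b_2),\ldots,c\colon(b_p,t_p,b_{p+1})$ at the boundaries of the factors $\vec{w}_k$ produces boundary nodes $d_1=b_1, d_2, \ldots, d_{m+1}=b_{p+1}$ such that the sub-path spelling $\vec{w}_k$ runs from $d_k$ to $d_{k+1}$ and lies in $dChase(QS_c)$. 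For each $k$ I then show $c\colon(d_k,u_k,d_{k+1}) \in dChase(QS_c)$: if $u_k$ is a terminal or a variable left un-rewritten (so $i_k=0$ and $\vec{w}_k=u_k$), this triple is literally one of the given path triples; if $u_k$ is a variable rewritten in $i_k$ steps with $1 \leq i_k \leq i$, the induction hypothesis applied to the sub-path for $\vec{w}_k$ supplies it. Collecting these $m$ triples gives a path spelling $\vec{u}=u_1\ldots u_m$ from $b_1$ to $b_{p+1}$ inside the dChase, and since $V_j \rightarrow \vec{u}$ is encoded by $c\colon(x_1,u_1,x_2),\ldots,c\colon(x_m,u_m,x_{m+1}) \rightarrow c\colon(x_1,V_j,x_{m+1})$ whose body is now matched via $x_k \mapsto d_k$, closure of the dChase yields $c\colon(b_1,V_j,b_{p+1}) \in dChase(QS_c)$, completing the induction.

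The main obstacle I expect is the bookkeeping in the inductive step rather than any deep difficulty: correctly invoking the context-free subtree decomposition $\vec{w}=\vec{w}_1\cdots\vec{w}_m$, aligning the factor boundaries with the boundary nodes $d_1,\ldots,d_{m+1}$ of the dChase path, and uniformly treating the three cases for each $u_k$ (terminal, un-rewritten variable, rewritten variable)---the un-rewritten case being relevant precisely because $\vec{w}$ ranges over sentential forms in $\{V\cup T\}^*$. Once the path spelling $\vec{u}$ is assembled, the final rule application is immediate, since the encoding rules are range restricted and the dChase is automatically closed under them as a model of $QS_c$.
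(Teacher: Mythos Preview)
Your proof is correct, and it takes a genuinely different route from the paper. The paper argues by induction on the length $|\vec w|$ of the sentential form: in the inductive step it locates a variable $V_k$ inside the derivation that expands to a substring of length at least~$2$, so the remaining sentential form $t_1\ldots t_q V_k t_{q+l+1}\ldots t_{p+1}$ is strictly shorter and the hypothesis applies to it. You instead induct on the derivation length $i$, peel off the first production $V_j \rightarrow u_1\ldots u_m$, and use the context-free subtree decomposition $\vec w = \vec w_1\cdots\vec w_m$ with $u_k \rightarrow^{i_k} \vec w_k$ and $\sum_k i_k = i$; the induction hypothesis then fires on each nontrivial factor. Your decomposition is the standard one for CFG arguments and handles the three cases (terminal, unrewritten variable, rewritten variable) uniformly, whereas the paper's case split is more ad hoc and its indexing in case~(b) is somewhat garbled. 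Both arguments ultimately rely on the same closure fact---that the dChase, being a model, is closed under the range-restricted encoding rules of form~(\ref{eqn:PRtoICR})---but your organisation makes the bookkeeping cleaner and the final rule application more transparent.
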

We prove this by induction on the size of $\vec w$.
\begin{description}
 \item[base case] Suppose $|\vec w|=1$, then $\vec w=t_k$, for some $t_k \in T$. If there exists
 $b_1,b_2$ such that $c\colon(b_1,t_k,b_2)$. But since there exists a PR $V_j \rightarrow t_k$,  by 
 transformation given in (\ref{eqn:PRtoICR}), there exists a BR $c\colon(x_1,t_k,x_2) \rightarrow c\colon(x_1,V_j,x_2)
 \in R$, which is applicable on $c\colon(b_1,t_k,b_2)$ and hence the quad $c\colon(b_1,V_j,b_2) \in dChase(QS_c)$.
 \item[hypothesis] For any $\vec w=t_1...t_p$, with 
 $|\vec w|\leq p'$, and for any $V_j \in V$, if $V_j \rightarrow^* \vec w$
and there exists $b_1,...b_p,b_{p+1}$, such that $c\colon(b_1,t_1,b_2)$, $...$, $c\colon(b_p,t_p,b_{p+1})$ $\in$ $dChase(QS_c)$,
then $c\colon(b_1$, $V_j$, $b_{p+1})$ $\in$ $dChase(QS_c)$.
 \item[inductive step] Suppose if $\vec w=t_1...t_{p+1}$, with $|\vec w|\leq p'+1$, and  $V_j \rightarrow^i \vec w$, 
and there exists $b_1,...b_{p+1}$, $b_{p+2}$, such that $c\colon(b_1,t_1,b_2)$, $...$, 
$c\colon(b_{p+1}$, $t_{p+1}$, $b_{p+2})$ 
 $\in$ $dChase(Q_c)$.  Also, one of the following holds (i) $i=1$, or (ii) $i>1$. Suppose (i) is the case, 
 then it is trivially the case that $c\colon(b_1,V_j,b_{p+2}) \in dChase(QS_c)$. Suppose if (ii) is the case,
 one of the two sub cases holds (a) $V_j \rightarrow^{i-1} V_k$, for some $V_k \in V$ and $V_k \rightarrow^1 \vec w$ or (b)
   there exist a $V_k \in V$, such that $V_k \rightarrow^*  t_{q+1}...t_{q+l}$, with $2\leq l \leq p$, where
 $V_j \rightarrow^* t_1...t_qV_kt_{p-l+1}...t_{p+1}$. If (a) is the case, trivially
 then $c\colon(b_1,V_k,b_{q+2}) \in dChase(QS_c)$, and since by construction there exists $c\colon(x_0, V_k, x_1)$ $\rightarrow$ 
 $c\colon(x_0,V_{k+1},x_1)$, $...$, $c\colon(x_0,V_{k+i},x_1)$ $\rightarrow$ $c\colon(x_0,V_j,x_1)$ $\in$ $R$,
$c\colon(b_1,V_j,b_{q+2}) \in dChase($ $QS_c)$.
If (b) is the case, then since $|t_{q+1}...t_{q+l}|\geq 2$, 
 $|t_1...t_qV_2t_{p-l+1}...t_{p+1}|\leq p'$. This implies that  
 $c\colon(b_1,V_j,b_{p+2}) \in dChase(QS_c)$.
\end{description}
Similarly, by construction of $dChase(QS_c)$, the following claim can straightforwardly be shown to hold:
\begin{claim}(3)
 For any $\vec w=t_1...t_p \in \{V \cup T\}^*$, and for any $V_j \in V$, if there exists 
 $b_1,...,b_p,b_{p+1}$, with $c\colon(b_1,t_1,b_2)$, ..., $c\colon(b_p,t_p,b_{p+1})$ $\in$ $dChase(QS_c)$ 
 and $c\colon(b_1$, $V_j$, $b_{p+1})$ $\in$ $dChase(QS_c)$, then $V_j \rightarrow^* \vec w$. 
\end{claim}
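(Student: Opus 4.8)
The plan is to prove Claim~(3) by strong induction on the dChase level at which the quad $c\colon(b_1, V_j, b_{p+1})$ first appears, exploiting the fact that the terminal edges of $dChase(QS_c)$ form a tree. First I would observe that, since $V_j$ is a nonterminal, the quad $c\colon(b_1, V_j, b_{p+1})$ can be neither a member of $Q_c$ (which contains only $c\colon(a, \texttt{rdf:type}, C)$) nor a terminal edge produced by a rule of the form~(\ref{eqn:existential rule}), whose heads carry only terminal or \texttt{rdf:type} predicates. Hence it must have been generated by the BR of the form~(\ref{eqn:PRtoICR}) associated with some production rule $V_j \rightarrow W_1 \cdots W_m \in P_1 \cup P_2$, under an assignment that fixes intermediate nodes $e_2, \ldots, e_m$ so that the body quads $c\colon(b_1, W_1, e_2), c\colon(e_2, W_2, e_3), \ldots, c\colon(e_m, W_m, b_{p+1})$ all lie at strictly smaller levels.

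The crux is a structural invariant on $dChase(QS_c)$ that I would establish by a parallel induction on level: \emph{every} edge $c\colon(d, X, d')$ in the dChase runs from a node $d$ to a terminal-descendant $d'$, i.e. $d'$ is reachable from $d$ along a unique path of terminal edges. The terminal edges themselves form a tree rooted at $a$: by Lemma~\ref{lemma:BnodeUniquenessLemma}, each fresh blank node created by~(\ref{eqn:existential rule}) is uniquely determined by its parent node and the terminal $t_i$, so every node has a unique terminal-parent and terminal paths are unique. The base of this invariant is immediate for terminal edges; for a variable edge produced by~(\ref{eqn:PRtoICR}), the body is a chain of downward edges whose composite is again a downward terminal path from $b_1$ to $b_{p+1}$, so the head edge inherits the property.

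With the invariant in hand, uniqueness of terminal paths in the tree forces the concatenation of the terminal segments $b_1 \to e_2 \to \cdots \to e_m \to b_{p+1}$ to coincide with the given terminal path spelling $\vec w$. Writing $\vec u_i$ for the terminal string of the segment $e_i \to e_{i+1}$, I then obtain $\vec w = \vec u_1 \cdots \vec u_m$. For each $i$ with $W_i \in T$ the segment is the single edge $W_i$, so $\vec u_i = W_i$; for each $i$ with $W_i \in V$ the body quad $c\colon(e_i, W_i, e_{i+1})$ is a variable edge at a strictly smaller level spanning a terminal path spelling $\vec u_i$, so the induction hypothesis yields $W_i \rightarrow^* \vec u_i$. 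Chaining the first production step with these derivations gives $V_j \rightarrow W_1 \cdots W_m \rightarrow^* \vec u_1 \cdots \vec u_m = \vec w$, which is exactly $V_j \rightarrow^* \vec w$.

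I expect the main obstacle to be the structural invariant, and in particular justifying that the terminal edges form a genuine tree so that the path decomposition is forced. This is where Lemma~\ref{lemma:BnodeUniquenessLemma} does the real work: it rules out two distinct parents for a blank node and guarantees that the restricted (non-oblivious) chase creates exactly one child per node-terminal pair, which is precisely what makes terminal paths unique and the concatenation argument valid. Once that is secured, the inductive step is routine bookkeeping, and the argument runs dual to the one already used for Claim~(2).
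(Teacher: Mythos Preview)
Your approach is sound. The paper itself supplies no proof of Claim~(3): it merely asserts that the claim ``can straightforwardly be shown to hold'' by construction of the dChase, in analogy with Claim~(2). Your induction on the dChase level of the nonterminal edge, together with the tree invariant on the terminal edges, is the natural argument for this converse direction and is different in character from the induction on $|\vec w|$ used for Claim~(2); the level-based induction is what gives you direct access to the body of the firing rule and hence to the production $V_j \rightarrow W_1\cdots W_m$, while the tree structure (secured via Lemma~\ref{lemma:BnodeUniquenessLemma} and the non-oblivious chase) is what forces the decomposition $\vec w = \vec u_1 \cdots \vec u_m$ to be the right one.

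One point worth making explicit: your write-up tacitly restricts to $\vec w\in T^*$ (you speak of the ``terminal path spelling $\vec w$''), whereas the claim as stated allows $\vec w\in\{V\cup T\}^*$. This restriction is harmless --- part~(b) of the undecidability argument only invokes the claim with terminal strings --- and in fact the claim as literally stated is false over $\{V\cup T\}^*$: if two distinct nonterminals $V_1,V_2$ each have the single production $V_i \rightarrow t_1t_2$, then both $c\colon(a,V_1,b)$ and $c\colon(a,V_2,b)$ lie in the dChase for the same node $b$, yet $V_2\rightarrow^* V_1$ need not hold. So your restriction is not a gap but a silent repair of a slight overstatement.
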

(a) For any $\vec w=t_1...t_p \in T^*$, if $\vec w \in L(G_1) \cap L(G_2)$, then
by claim 1, since there exists $b_1,...,b_p$, such that $c\colon(a,t_1,b_1),...,c\colon(b_{p-1},t_p,b_p) \in dChase(QS_c)$. But
since $\vec w \in L(G_1)$ and $\vec w \in L(G_2)$, $S_1 \rightarrow \vec w$ and $S_2 \rightarrow \vec w$. Hence 
by claim 2, $c\colon(a,S_1,b_p),c\colon(a,S_2,b_p)$ $\in$ $dChase(QS_c)$, which implies that $dChase(QS_c)$ $\models$ 
$\exists y$  $c\colon(a,s_1,y)$ $\wedge$ $c\colon(a,s_2,y)$. Hence, by Theorem \ref{dChaseUniversalModelProperty}, $QS_c$ $\models$  
$\exists y$ $c\colon(a,s_1,y)$ $\wedge$ $c\colon(a,s_2,y)$.  \\
(b) Suppose if $QS_c \models  \exists y \ c\colon(a,S_1,y) \wedge c\colon(a,S_2,y)$, then applying Theorem \ref{dChaseUniversalModelProperty},
it follows that 
there exists $b_p$ such that $c\colon(a$, $S_1$, $b_p)$, $c\colon(a,S_2,b_p) \in dChase(QS_{\C})$. 
Then it is the case that there exists $\vec w=t_1...t_p \in T^*$, and
$b_1,...,b_p$ such that $c\colon(a,t_1,b_1)$, ..., $c\colon(b_{p-1},t_p,b_p)$, $c\colon(a,S_1,b_p)$, 
$c\colon(a,S_2,b_p)$ $\in$ $dChase(QS_c)$. Then by claim 3,
$S_1 \rightarrow^* \vec w$, $S_2 \rightarrow^* \vec w$. Hence, $w \in L(G_1) \cap L(G_2)$. 

By (a),(b) it follows that there exists $\vec w \in L(G_1) \cap L(G_2)$ 
iff $QS_c \models \exists y \ c\colon(a,s_1,y) \wedge c\colon(a,s_2,y)$.
As we have shown that the intersection of CFGs, which is an undecidable problem, is reducible to the problem of
query entailment on unrestricted quad-system, the latter is undecidable.
\end{proof}

\section{Proofs for Section \ref{sec:safe}}

\begin{proof}[Theorem \ref{theorem:soundnessCompleteness}]
We in the following show the case of $dChase^{\text{csafe}}(QS_{\C})$, i.e. \textbf{unCSafe} $\in dChase^{\text{csafe}}(QS_{\C})$ iff 
$QS_{\C}$ is uncsafe. The proof follows from Lemma \ref{lemma:Soundness} and Lemma \ref{lemma:Completeness} below. 

The proofs for the case of $dChase^{\text{safe}}(QS_{\C})$ and $dChase^{\text{msafe}}(QS_{\C})$ is similar, and is omitted.

\end{proof}

\begin{lemma}[Soundness]\label{lemma:Soundness}
 For any quad-system $QS_{\C}=\langle Q_{\C}, R\rangle$, if the quad \textbf{unCSafe} $\in dChase^{\text{csafe}}(QS_{\C})$,
then $QS_{\C}$  is uncsafe. 
\end{lemma}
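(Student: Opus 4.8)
The plan is to trace the appearance of the distinguished quad \textbf{unCSafe} back to the application of $\mathrm{unCSafeTest}$ that triggered it, and then to read off from the $c_c$-accounting quads an honest pair of Skolem blank nodes that witnesses uncsafety in the \emph{standard} $dChase(QS_{\C})$. The whole argument rests on showing that the accounting quads generated by $augC$ faithfully mirror the structural relations of the standard dChase, so I would first isolate that as an invariant.

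Concretely, I would prove by induction on the iteration index $m$ an \emph{accounting correctness} claim connecting $dChase^{\text{csafe}}(QS_{\C})$ to $dChase(QS_{\C})$: up to the first iteration at which \textbf{unCSafe} is produced, the restriction of $dChase^{\text{csafe}}_m(QS_{\C})$ to quads whose context identifier is not $c_c$ coincides with the corresponding stage of $dChase(QS_{\C})$; moreover the $c_c$-quads are sound and complete bookkeeping, in the sense that $c_c\colon(\_\colon b', \text{descendantOf}, \_\colon b) \in dChase^{\text{csafe}}(QS_{\C})$ with $\_\colon b' \neq \_\colon b$ entails that $\_\colon b'$ is a genuine descendant of $\_\colon b$ in $dChase(QS_{\C})$, and $\{c \mid c_c\colon(\_\colon b', \text{originContext}, c) \in dChase^{\text{csafe}}(QS_{\C})\}$ equals $originContexts(\_\colon b')$ computed in the standard dChase. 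The base case is immediate, as $dChase^{\text{csafe}}_0 = Q_{\C} = dChase_0(QS_{\C})$ carries no $c_c$-quads. In the inductive step each application of $augC(r_i)$ reproduces the real head $head(r_i)[\mu^{ext(\vec y)}]$ exactly as the standard rule $r_i$ would, while the appended $c_c$-atoms record, for each fresh $\_\colon b'' = \mu^{ext(\vec y)}(y_j)$, precisely its children $\{\vec x[\mu]\}$ (its genuine one-step descendants) and its origin contexts $cScope(y_j, head(r_i))$ (exactly the contexts in which $\_\colon b''$ first appears); the extra transitivity BR closes $\text{descendantOf}$ into $\text{childOf}^+$.

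Granting the invariant, suppose \textbf{unCSafe} $\in dChase^{\text{csafe}}(QS_{\C})$ and let $m$ be the first iteration at which it is generated, via $(r_i, \mu)$ with $\mathrm{unCSafeTest}(r_i, \mu, dChase^{\text{csafe}}_m(QS_{\C})) = \text{True}$. The test yields $\_\colon b, \_\colon b' \in \bn$ and $y_j \in \{\vec y\}$ with $\_\colon b \in \{\vec x[\mu]\}$, with $c_c\colon(\_\colon b', \text{descendantOf}, \_\colon b) \in dChase^{\text{csafe}}_m(QS_{\C})$, and with the recorded origin contexts of $\_\colon b'$ equal to $cScope(y_j, head(r_i)) \setminus \{c_c\}$. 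Let $\_\colon b'' = \mu^{ext(\vec y)}(y_j)$ be the fresh Skolem blank node the unaborted step would create. Since $dChase^{\text{csafe}}_m$ and $dChase$ agree on real quads, the application of $r_i$ under $\mu$ is legitimate in the standard dChase, so $\_\colon b''$ is actually generated in $dChase(QS_{\C})$ with $originContexts(\_\colon b'') = cScope(y_j, head(r_i))$ and with every element of $\{\vec x[\mu]\}$, in particular $\_\colon b$, a child of $\_\colon b''$. Using the invariant once more, $\_\colon b'$ is a genuine descendant of $\_\colon b$ and $originContexts(\_\colon b') = cScope(y_j, head(r_i)) = originContexts(\_\colon b'')$. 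Assembling the witness: as $\_\colon b$ is a child of $\_\colon b''$ and $\_\colon b'$ a descendant of $\_\colon b$, transitivity gives that $\_\colon b'$ is a descendant of $\_\colon b''$; since $\_\colon b''$ is created only at iteration $m$ whereas $\_\colon b'$ already occurs in $dChase^{\text{csafe}}_m$, we have $\_\colon b' \neq \_\colon b''$ (the irreflexivity guarded by the ``no false alarm'' remark). Both are Skolem blank nodes — $\_\colon b'$ because it carries an $\text{originContext}$ quad, emitted only for generated nodes, and $\_\colon b''$ by construction — and they share origin contexts with $\_\colon b'$ a descendant of $\_\colon b''$, which is exactly the definition of an uncsafe quad-system; hence $QS_{\C}$ is uncsafe.

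I expect the main obstacle to be the accounting correctness invariant, and specifically the delicate bookkeeping around the extra context $c_c$. One must check that adjoining the augmented rules and the transitivity BR perturbs neither the set of real quads nor the applicability ordering relative to the standard dChase, and that the origin contexts read off the $c_c$-quads coincide with the standard-dChase notion (which ignores $c_c$). That last point is what makes the $\setminus \{c_c\}$ in the test deliver precisely the equality $originContexts(\_\colon b') = originContexts(\_\colon b'')$ required by the witness; getting this matching exactly right, rather than off by the spurious context $c_c$, is the crux of the argument.
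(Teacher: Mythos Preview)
Your approach mirrors the paper's: both establish that the $c_c$-bookkeeping in the csafe dChase soundly reflects the descendant and origin-context structure of the standard dChase (the paper factors this into separate Claims 0--3, with Claim 0 giving a precise iteration-by-iteration correspondence; you bundle everything into a single ``accounting correctness'' invariant), and then both read off a witness pair of blank nodes from the \textbf{unCSafeTest} trigger. Your phrasing of the origin-context invariant as an exact equality is in fact more careful than the paper's Claim 2, which only states the inclusion in one direction.

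There is one small gap you should patch. Your descendant invariant is stated only for $\_\colon b' \neq \_\colon b$, but \textbf{unCSafeTest} can fire with $\_\colon b' = \_\colon b$ (the reflexive quads $c_c\colon(y_j,\text{descendantOf},y_j)$ added by $augC$ make this possible). In that case your sentence ``$\_\colon b'$ is a genuine descendant of $\_\colon b$'' is not justified by the invariant, and the transitivity step does not go through as written. The fix is the case split the paper makes explicitly: if $\_\colon b' = \_\colon b$, then $\_\colon b'$ is already a child of $\_\colon b''$ and hence a descendant; if $\_\colon b' \neq \_\colon b$, your invariant applies and transitivity finishes the argument as you wrote. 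With that one-line addition your proof is complete.
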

\begin{proof}
Note that $augC(R)= \bigcup_{r \in R} augC(r)$ $\cup$ $\{brTR\}$, where $brTR$ is 
the range restricted BR $c_c\colon (x_1$, descendantOf, $z)$, $c_c\colon (z$, descendantOf, $x_2)$ $\rightarrow$
$c_c\colon (x_1$, descendantOf, $x_2)$. Also for each $r \in R$, $body(r)=body(augC(r))$, and
for any $c \in \C$, $c\colon (s,p,o) \in head(r)$ iff $c\colon (s,p,o) \in head(augC(r))$. That 
is, $head(r)$ $=$ $head(au$- $gC(r))(\C)$, where $head($ $r)(\C)$ denotes the quad-patterns in $head(r)$, whose 
context identifiers is in $\C$. Also, $head(augC(r))$ $=$ $head(augC(r))(\C)$ $\cup$ $head(augC(r))(c_c)$, and also
the set of existentially quantified variables in $head(augC(r))(c_c)$ is contained in the set of existentially quantified 
variables in $head(augC(r))(\C)$ ($\dagger$). We first prove the following claim:
\begin{claim}(0)
For any quad-system $QS_{\C}$ $=$ $\langle Q_{\C}, R \rangle$, let $i$ be a csafe dChase iteration, let $j$ be the number of csafe dChase iterations before $i$ in 
which $brTR$ was applied, then $dChase_{i-j}(QS_{\C})$ $=$ $dChase^{\text{csafe}}_{i}(QS_{\C})(\C)$. 
\end{claim}
We approach the proof of the above claim by induction on $i$.
\begin{description}
 \item[base case] If $i=1$, then $dChase^{\text{csafe}}_0(QS_{\C})(c_c)$ $=$ $\emptyset$ and $dChase^{\text{csafe}}_0(QS_{\C})(\C)$ $=$
$dChase^{\text{csafe}}_0(QS_{\C})$ $=$ $dChase_0(QS_{\C})$. Hence, it should be the case that $applicab$- $le_{augC(R)}(brTR$, $\mu$, $dChase^{\text{csafe}}_0(QS_{\C}))$
does not hold, for any $\mu$. Hence, $applicab$- $le_R($ $r$, $\mu$, $dChase_0(QS_{\C}))$ iff
$applicable_{augC(R)}($ $augC(r)$, $\mu$, $dChase^{\text{csafe}}_0(QS_{\C}))$, for any $r \in R$, assignment $\mu$. Also using $(\dagger)$, 
it follows that $dChase_1(QS_{\C})$ $=$ $dChase^{\text{csafe}}_{1-0}(QS_{\C})(\C)$.
\item[hypothesis] for any $i \leq k$, if $i$ is a csafe dChase iteration, and $j$ be the number of csafe dChase iterations
before $i$ in which $brTR$ was applied, then $dChase_{i-j}(QS_{\C})$ $=$ $dChase^{\text{csafe}}_{i}(QS_{\C})(\C)$.
\item[inductive] suppose $i=k+1$, then one of the following three cases should hold: (a) $applicable_{augC(R)}(r$, $\mu$, $dChase^{\text{csafe}}_k(QS_{\C}))$
does not hold for any $r \in augC(R)$, assignment $\mu$, and $dChase^{\text{csafe}}_{k+1}(QS_{\C})$ $=$ 
$dChase^{\text{csafe}}_k(QS_{\C})$, or (b) $applicable_{augC(R)}($ $brTR$, $\mu$, $dChase^{\text{csafe}}_k(QS_{\C}))$ holds, for some assignment $\mu$, 
or (c) $applicable_{augC(R)}(r$, $\mu$, $dChase^{\text{csafe}}_k(QS_{\C}))$ holds, for some $r$ $\in$ $augC(R)$ $\setminus$ $\{brTR\}$, for some assignment $\mu$.
If (a) is the case, then it should be the case that $applicable_R(r'$, $\mu$, $dChase_{k-j}(QS_{\C}))$ does not hold, 
for any $r' \in R$, assignment $\mu$. As a result $dChase_{k+1-j}(QS_{\C})$ $=$ $dChase_{k-j}(QS_{\C})$, and hence,
$dChase_{k+1-j}($ $QS_{\C})$ $=$ $dChase^{\text{csafe}}_{k+1}(QS_{\C})(\C)$. If (b) is the case, then since $dChase^{\text{csafe}}_{k+1}(QS_{\C})(\C)$
$=$ $dChase^{\text{csafe}}_{k}($ $QS_{\C})(\C)$, $dChase^{\text{csafe}}_{k+1}(QS_{\C})(\C)$ $=$ $dChase_{k+1-j-1}($ $QS_{\C})$ $=$ \linebreak
$dChase_{k-j}(QS_{\C})$. If (c) is the case, then it should the case that $applicable_R(r'$, $\mu$, $dChase_{k-j}(QS_{\C})$, 
where $r$ $=$ $augC(r')$ and $head(r)(\C)$ $=$ $head(r)$. Hence, it should be the case that $dChase^{\text{csafe}}_{k+1}(QS_{\C})(\C)$
$=$ $dChase_{k+1-j}($ $QS_{\C})$.
\end{description}
The following claim, which straightforwardly follows from claim 0, shows that any quad $c\colon(s,p,o)$, 
with $c \in \C$  derived in csafe dChase, is also derived in its standard dChase. In this way, csafe dChase do not
generate any unsound triples in any context $c \in \C$. 
\begin{claim}(1)
 For any quad $c\colon(s,p,o)$, where $c \in \C$, 
if $c\colon(s,p,o)$ $\in$ $dChase^{\text{csafe}}(QS_{\C})$, then $c\colon(s,p,o) \in dChase(QS_{\C})$.
\end{claim}
The following claim shows that the set of origin context quads are also sound. 
\begin{claim}(2) If there exists quad $c_c\colon(b, \text{originContext}, c)$ 
$\in$ $dChase^{\text{csafe}}(QS_{\C})$, then $c$ $\in$ $originContexts(b)$.  
\end{claim}
If $c_c\colon(b$, originContext, $c)$ $\in$ $dChase^{\text{csafe}}(QS_{\C})$, there exists $i\in \mathbb{N}$,
such that  $c_c\colon(b$, originContext, $c)$ $\in$ $dChase^{\text{csafe}}_i($ $QS_{\C})$ and there exists no $j<i$ with 
$c_c\colon(b$, originContext, $c)$ $\in$ $dChase^{\text{csafe}}_j(QS_{\C})$. But if 
$c_c\colon(b$, originContext, $c)$ $\in$ $dChase^{\text{csafe}}_i(QS_{\C})$ implies that there exists an 
 $augC(r)$ $=$ $body(\vec x$, $\vec z)$ $\rightarrow$ $head(\vec x, \vec y)$ $\in$ $augC(R)$, 
with $c_c\colon(y_j$, originContext, $c)\in$ $head(\vec x$, $\vec y)$, $y_j \in \{\vec y\}$, 
such that  $c_c\colon(b$, originContext, $c)$ was generated due to application of an assignment $\mu$ on $augC(r)$, with 
$b=y_j[\mu^{ext(\vec y)}]$. 
This implies that there exists $c\colon(s,p,o)$ $\in$ $head(\vec x,\vec y)$, with $s$ $=$ $y_j$ or $p$ $=$ $y_j$
or $o$ $=$ $y_j$, $c \in \C$. Since according to our assumption, $i$ is the first iteration in which $c_c\colon(b$, originContext, $c)$
is generated, it follows that $i$ is the first iteration in which $c\colon(s,p,o)[\mu^{ext(\vec y)}]$ is also generated. 
Let $k$ be the number of iterations before $i$ in which $brTR$ was applied. By applying claim 0, it should be the case 
that $c\colon (s,p,o)[\mu^{ext(\vec y)}]$ $\in dChase_{i-k}(QS_{\C})$, and $i-k$ should be the first such dChase iteration. 
Hence, $c \in orginContexts(b)$.

\noindent In the following claim, we prove the soundness of the descendant quads generated in a safe dChase.
\begin{claim}(3)
 For any two distinct blank nodes $b, b'$ in $dChase^{\text{csafe}}(QS_{\C})$, if
$c_c\colon$ $(b'$, \text{descendantOf}, $b)$ $\in$ $dChase^{\text{csafe}}(QS_{\C})$ then $b'$ is a descendant of $b$.
\end{claim}
Since any quad of the form $c_c\colon(b'$, descendantOf, $b)$ $\in$ $dChase^{\text{csafe}}(QS_{\C})$
is not an element of $Q_{\C}$, and can only be introduced by an application of a BR $r\in augC(R)$,
any quad of the form $c_c\colon(b'$, descendantOf, $b)$ can only be introduced, earliest in the first iteration of 
$dChase^{\text{csafe}}(QS_{\C})$. 
Suppose $c_c\colon(b'$, descendantOf, $b)$ $\in$ $dChase^{\text{csafe}}(QS_{\C})$, then there exists an iteration 
$i\geq 1$ such that  $c_c\colon(b'$, $\text{descendantOf}$, $b)$ $\in$ $dChase^{\text{csafe}}_j(QS_{\C})$, for any $j\geq i$, 
and $c_c\colon(b'$, $\text{descendantOf}$, $b)$ $\not \in$ $dChase^{\text{csafe}}_{j'}(QS_{\C})$, for any $j'<i$. 
We apply induction on $i$ for the proof. 
\begin{description}
 \item[base case] suppose $c_c\text{:}(b'$, $\text{descendantOf}$, $b)$ $\in$ $dChas$- -$e^{\text{csafe}}_1($ $QS_{\C})$ and since $b$ $\neq$ $b'$,
 then there exists a BR $r$ $\in$ $augC(R)$, $\exists \mu$ such that  $applicable_{augC(R)}($ $r$, $\mu$, $dChase^{\text{csafe}}_0(QS_{\C}))$, i.e.
 $body(r)(\vec x$, $\vec z)[\mu]$ $\subseteq$ $dChase^{\text{csafe}}_0(QS_{\C})$ and  $c_c\colon(b'$, descendantOf, $b)$ 
 $\in$ $head(r)(\vec x, \vec y)[\mu^{ext(\vec y)}]$.
Then by construction of $augC(r)$, it follows that $b=y_j[\mu^{ext(\vec y)}]$, for some 
$y_j \in \{\vec y\}$   
and $b'=\mu(x_i)$, for some $x_i \in \{\vec x\}$. Since $dChase_0(QS_{\C})$ $=$ $dChase^{\text{csafe}}_0(QS_{\C})$, it follows using ($\dagger$) that  
$applicable_R(r', \mu, dChas$- -$e_0(QS_{\C}))$ holds, for $r'$ $=$ $body(r')(\vec x$, $\vec z)$ $\rightarrow$ $head(r')(\vec x$, $\vec y)$, 
with $augC(r')$ $=$ $r$. Hence, by construction, it follows that $b=y_j[\mu^{ext(\vec y)}] \in \const(dChase_1(QS_{\C}))$, for 
$y_j \in \{\vec y\}$   
and $b'=\mu(x_i)$, for $x_i \in \{\vec x\}$. Hence $b'$ is a descendant of $b$ (by definition).
 \item[hypothesis] if $c_c\colon(b'$, $\text{descendantOf}$, $b)$ $\in$ $dChase^{\text{csafe}}_i($ $QS_{\C})$, 
 for $1 \leq i \leq k$, then $b'$ is a descendant of $b$.
 \item[inductive step] suppose $c_c\colon(b'$, descendantOf, $b)$ $\in$ $dChase^{\text{csafe}}_{k+1}(QS_{\C})$, then 
 either (i) $c_c\colon(b'$, descendantOf, $b)$
$\in$ $dChase^{\text{csafe}}_{k}(QS_{\C})$ or (ii) $c_c\colon(b'$, descendantOf, $b)$ $\not\in$ 
$dChase^{\text{csafe}}_{k}(QS_{\C})$. Suppose (i) is the case, then 
by hypothesis, $b'$ is a descendant  of $b$. If (ii) is the case, then either (a) $c_c\colon(b', \text{descendantOf}, b)$ is 
the result of the application of a $brTR \in augC(R)$ on $dChase^{\text{csafe}}_k(QS_{\C})$ or
(b) $c_c\colon(b', \text{descendantOf}, b)$ is 
the result of the application of a $r \in augC(R)\setminus \{brTR\}$ on $dChase^{\text{csafe}}_k(QS_{\C})$. 
If (a) is the case, then there exists a $b'' \in \const(dChase^{\text{csafe}}_k(QS_{\C}))$ such that  
$c_c\colon(b'$, descendantOf, $b'')$ $\in $ $dChase^{\text{csafe}}_k(QS_{\C})$ and
$c_c\colon(b'', \text{descendantOf}, b) \in $ $dChase^{\text{csafe}}_k($ $QS_{\C})$. Hence, by hypothesis 
$b'$ is a descendantOf $b''$ and $b''$ is a descendantOf $b$. Since `descendantOf' relation is transitive, 
$b'$ is a descendantOf $b$.  Otherwise if (b) is the case then similar to the arguments used
 in the base case, it can easily be seen that $b'$ is a descendant of $b$.
\end{description}
Suppose if the quad \textbf{unCSafe} $\in$ $dChase^{\text{csafe}}(QS_{\C})$, then this 
implies that there exists an iteration $i$ such that 
the function unCSafeTest on $augC(r)$, with $r=$ $body(r)(\vec x,\vec z)$ $\rightarrow$ $head(r)(\vec x$, $\vec y)$ 
$\in R$, assignment $\mu$,
and $dChase^{\text{csafe}}_i(QS_{\C})$ returns True. This implies that, there exists 
   $b, b' \in \bn$, $y_j$  $\in$ $\{\vec y\}$ such that 
 $body(r)(\vec x, \vec z)[\mu]$ $\subseteq$ $dChase^{\text{csafe}}_i(QS_{\C})$, $b$ $\in$ $\{\mu(\vec x)\}$, $c_c\colon(b'$, descendantOf, $b)$ $\in$ 
 $dChase^{\text{csafe}}_i(QS_{\C})$
 and $\{c \ | \ c_c\colon(b'$, originContext, $c)$ $\in$ $dChase^{\text{csafe}}_i(QS_{\C})\}=
 cScope(y_j$,  $head(r)(\vec x$, $\vec y))$.
Suppose $k$ be the number of csafe dChase iterations
before $i$, in which $brTR$ was applied. Hence, by claim 0, $dChase_{i-k-1}(QS_{\C})$ $=$ $dChase^{\text{csafe}}_{i-1}(QS_{\C})(\C)$, 
and consequently $applicable_R($ $r$, $\mu$, $dChase_{i-k-1}(QS_{\C}))$ holds. Hence, as a result of $\mu$ being applied on $r$, 
there exists $b''=y_j[\mu^{ext(\vec y)}]$ $\in$ $\bn(dChase_{i-k}(QS_{\C})))$, with $b$ $\in$ $\{\mu(\vec x)\}$. Hence, by 
definition $originContext(b'')$ $=$ $cScope(y_j, head(r))$, and
$b$ is a descendantOf $b''$. If $b \neq b'$, then by Claim 2, $b'$ is a descendantOf $b$, otherwise $b'=b$ and 
hence $b'$ is a descendantOf $b''$. Consequently, $b'$ is a descendantOf $b''$. 
Also, applying claim 3, we get that $originContexts(b')$ $=$ $originContexts(b'')$, which means that
prerequisites of uncsafety is satisfied, and hence, $QS_{\C}$ is uncsafe. 
\end{proof}

\begin{lemma}[Completeness]\label{lemma:Completeness}
 For any quad-system, $QS_{\C}=\langle Q_{\C}, R\rangle$, if $QS_{\C}$ is uncsafe then 
$\textbf{unCSafe} \in dChase^{\text{csafe}}(QS_{\C})$.
\end{lemma}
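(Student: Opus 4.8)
The plan is to prove the converses of the bookkeeping facts established inside the proof of Lemma~\ref{lemma:Soundness}, and then to use the uncsafe witness to force the test to fire. Concretely, I would first show, by induction on the csafe-dChase iterations and using the correspondence $dChase_{i-j}(QS_{\C}) = dChase^{\text{csafe}}_i(QS_{\C})(\C)$ of Claim~0 in that proof, the three statements dual to its Claims 1--3: (i) if $c\colon(s,p,o) \in dChase(QS_{\C})$ with $c \in \C$, then $c\colon(s,p,o) \in dChase^{\text{csafe}}(QS_{\C})$; (ii) if $c \in originContexts(\_\colon d)$, then $c_c\colon(\_\colon d, \text{originContext}, c) \in dChase^{\text{csafe}}(QS_{\C})$; and (iii) if $\_\colon d_1$ is a descendant of $\_\colon d_2$, then $c_c\colon(\_\colon d_1, \text{descendantOf}, \_\colon d_2) \in dChase^{\text{csafe}}(QS_{\C})$. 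Statements (i) and (ii) are immediate from the shape of $augC(r_i)$, which attaches the required $\text{originContext}$ and direct $\text{descendantOf}$ quads to every Skolem node it generates; (iii) additionally closes the direct childOf quads under composition through the transitivity rule $brTR \in augC(R)$.

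The second ingredient is a scheduling lemma guaranteeing that this bookkeeping is available at the decisive moment. Fix a Skolem node $A$ produced in $dChase(QS_{\C})$ by an application $(r_i,\mu)$ with $A = y_j[\mu^{ext(\vec y)}]$, and let $A = n_0, n_1, \ldots, n_t$ be a descendant chain, so that each $n_{\ell+1}$ is a child of $n_\ell$. Because $n_{\ell+1}$ occurs in the body from which $n_\ell$ is generated, the levels increase towards $A$, giving $\text{level}(n_\ell) \ge 1 + \text{level}(n_{\ell+1})$ and hence $L_A := \text{level}(A) \ge 1 + \text{level}(n_1)$. A descending induction along the chain then shows that $brTR$ derives the transitive quad $c_c\colon(n_t, \text{descendantOf}, n_1)$ at level at most $1 + \text{level}(n_1) \le L_A$, while the relevant direct and reflexive quads sit at level at most $\text{level}(n_1)$. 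Thus every $\text{descendantOf}$ fact required to inspect the would-be generation of $A$ is produced no later than $A$ itself; I would upgrade this to \emph{strictly before} using that $\prec$ is a strict linear order (Property~\ref{prop:precLinearOrder}) and the $\prec$-minimality condition~(b) in $applicable_{augC(R)}$.

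With these pieces the statement follows. Assume $QS_{\C}$ is uncsafe, so there are distinct Skolem nodes $A$ and $D$ in $dChase(QS_{\C})$ with $D$ a descendant of $A$ and $originContexts(D) = originContexts(A)$; by Lemma~\ref{lemma:BnodeUniquenessLemma} each is uniquely generated, and I choose the pair whose ancestor $A$ has the least generation iteration. Write $A = y_j[\mu^{ext(\vec y)}]$ as generated by $(r_i,\mu)$, and recall $originContexts(A) = cScope(y_j, head(r_i)(\vec x,\vec y)) \setminus \{c_c\}$. If $\textbf{unCSafe}$ is produced at or before the csafe iteration mirroring this generation, we are done. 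Otherwise Claim~0 and the scheduling lemma ensure that the csafe state immediately before that iteration already contains, for the child $n_1 \in \{\vec x[\mu]\}$ of $A$ lying on the chain to $D$, the quad $c_c\colon(D, \text{descendantOf}, n_1)$ (or, when $D = n_1$, the reflexive quad $c_c\colon(D, \text{descendantOf}, D)$), together with $c_c\colon(D, \text{originContext}, c)$ for every $c \in originContexts(D)$. Instantiating $unCSafeTest(r_i,\mu,\cdot)$ with $\_\colon b := n_1$, $\_\colon b' := D$ and the variable $y_j$ satisfies all three of its conditions, so $apply^{\text{csafe}}$ returns $\textbf{unCSafe}$; hence $\textbf{unCSafe} \in dChase^{\text{csafe}}(QS_{\C})$.

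The hard part will be the scheduling lemma, and inside it the boundary case in which $\text{level}(c_c\colon(D, \text{descendantOf}, n_1)) = L_A$: here the guarding $\text{descendantOf}$ fact and the node $A$ it must block live at the same level, so I must show that the range-restricted $brTR$ step (or the already-present direct/reflexive fact) is scheduled by $\prec$ strictly ahead of the value-generating application $(r_i,\mu)$. I expect to settle this by combining the linearity of $\prec$ with the selection of the $\prec$-least applicable pair in $applicable_{augC(R)}$, and with the minimal-ancestor choice of $(A,D)$, which prevents any earlier abortion from disturbing the faithful correspondence between $dChase$ and $dChase^{\text{csafe}}$ up to the relevant iteration. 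The remaining inductions merely mirror the Claims proved for Lemma~\ref{lemma:Soundness}.
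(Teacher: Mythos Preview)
Your approach is essentially the paper's own argument, with one structural difference worth flagging. The paper proves the lemma by contrapositive: it \emph{assumes} $\textbf{unCSafe}\notin dChase^{\text{csafe}}(QS_{\C})$ and then establishes (its Claims~0--3) that the csafe dChase faithfully simulates the standard dChase and contains all the bookkeeping facts, after which the uncsafe witness forces $\textbf{unCSafeTest}$ to fire---contradiction. Your statements (i)--(iii) are written unconditionally, but they are \emph{false} as stated: once $\textbf{unCSafe}$ is produced the csafe dChase halts, so later standard-dChase quads need not appear in it. You do recover the right hypothesis later (``If $\textbf{unCSafe}$ is produced at or before \ldots we are done. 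Otherwise \ldots''), but the clean way to organise this is to pull that case split to the front, exactly as the paper does; then (i)--(iii) become the paper's Claims~1--3, each under the standing assumption $\textbf{unCSafe}\notin dChase^{\text{csafe}}(QS_{\C})$. Relatedly, the correspondence you invoke is the Soundness Claim~0 ($dChase_{i-j}=dChase^{\text{csafe}}_i(\C)$), whereas for (i)--(iii) you need the \emph{completeness} direction (every standard iteration is reached by some csafe iteration), which the paper proves separately and again under the contrapositive assumption.

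Your scheduling lemma makes explicit a step the paper handles rather informally: inside its Completeness Claim~0 the paper asserts that between two augmented rule firings all applicable $brTR$ steps are exhausted first, which is precisely your claim that the needed $\text{descendantOf}$ facts are available strictly before the generation of $A$. Your level computation showing $\text{level}(c_c\colon(n_t,\text{descendantOf},n_1))\le L_A$ is correct, and you are right that the delicate point is the boundary case where equality holds and one must argue via $\prec_q$ that the range-restricted $brTR$ step is selected ahead of $(augC(r_i),\mu)$. The paper does not spell this out either; it simply asserts the $brTR$-saturation. So your plan and the paper's coincide in content; the paper's contrapositive packaging lets it avoid stating the scheduling lemma as a separate claim, while your presentation isolates (and honestly flags) the same subtle step.
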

\begin{proof}
We first prove a few supporting claims in order to prove the theorem. 
\begin{claim}(0)
 For any quad-system $QS_{\C}$ $=$ $\langle Q_{\C}$, $R\rangle$,  suppose $\textbf{unCSafe}$ $\not\in$ $dChase^{\text{csafe}}(QS_{\C})$, 
 then for any dChase iteration $i$, there exists a $j \geq 0$ such that  $dChase_i(QS_{\C})$ $=$ $dChase^{\text{csafe}}_{i+j}(QS_{\C})(\C)$.
\end{claim}
We approach the proof by induction on $i$.
\begin{description}
 \item[base case] for $i=0$, we know that $dChase_0(QS_{\C})=dChase^{\text{csafe}}_0(QS_{\C})$ $=$ $Q_{\C}$. Hence, the base case trivially holds.
 \item[hypothesis] for $i\leq k\in \mathbb{N}$, there exists $j\geq 0$ such that  $dChase_i(QS_{\C})$ $=$ $dChase^{\text{csafe}}_{i+j}($ $QS_{\C})$
 \item[step case] for $i$ $=$ $k+1$, one of the following holds: (a) $dChase_{k+1}(QS_{\C})$ $=$ $dChase_k($ $QS_{\C})$ or (b)
 $dChase_{k+1}(QS_{\C})$ $=$ $dChase_{k}(QS_{\C})$ $\cup$ $head(r)($ $\vec x$, $\vec y)[\mu^{ext(\vec y)}]$ and 
 $applicable_R(r$, $\mu$, $dChase_k($ $QS_{\C}))$ holds, for some $r$ $=$ $body(r)(\vec x$, $\vec z)$ $\rightarrow$ $head(r)($ $\vec x$, $\vec y)$, 
 assignment $\mu$. If (a) is the case, then trivially the claim holds. Otherwise, if (b) is the case, then let $j\in \mathbb{N}$ be such that  
 $dChase_k($ $QS_{\C})$ $=$ $dChase^{\text{csafe}}_{k+j}(QS_{\C})(\C)$. Let $j'\geq j, l \in \mathbb{N}$ be such that 
 $applicable_{augC(R)}(brTR$, $\mu$, $dChase^{\text{csafe}}_{k+l}(QS_{\C}))$, for any $j'\geq l \geq j$, and 
 $applicable_{augC(R)}(brTR$, $\mu$, $dChase^{\text{csafe}}_{k+j'+1}(QS_{\C}$ $))$ does not hold. By construction, it should be 
 the case that $applicable(r'$, $\mu$, \linebreak
 $dChase^{\text{csafe}}_{k+j'+1}(QS_{\C}))$ holds, where
 $r'$ $=$ $augC($ $r)$. Also since no new Skolem blank node was introduced in any csafe dChase iteration $k+l$, for any $j \leq l \leq j'$. 
 It should be the case that $head(r)[\mu^{ext(\vec y)}]$ $=$ $head(r')[\mu^{ext(\vec y)}](\C)$. Since, 
 $dCha$- $se^{\text{csafe}}_{k+l}(QS_{\C})(\C)$ $=$ $dChase_{k}(QS_{\C})$, for any $j \leq l \leq j'$, and 
 $dChase^{\text{csafe}}_{k+j'+1}(QS_{\C})$ $=$ $dChase^{\text{csafe}}_{k+j'}(QS_{\C})$ $\cup$ $head(r')[\mu^{ext(\vec y)}]$, 
 $dChase^{\text{csafe}}_{k+j'+1}(QS_{\C})(\C)$ $=$ $dChase_{k+1}($ $QS_{\C})$. Hence, 
 the claim follows.
\end{description}
The following claim, which straightforwardly follows from claim 0, shows that, for csafe quad-systems
its standard dChase is contained in its safe dChase.
\begin{claim}(1)
 Suppose $\textbf{unCSafe}$ $\not\in$ $dChase^{\text{csafe}}(QS_{\C})$, then $dChase(QS_{\C})$ $\subseteq$ 
 $dChase^{\text{csafe}}($ $QS_{\C})$. 
 \end{claim}
 Claim below shows that the generation of originContext quads in csafe dChase is complete.
\begin{claim}(2)
 For any quad-system $QS_{\C}$, if $\textbf{unCSafe}$ $\not\in$ $dChase^{\text{csafe}}(QS_{\C})$, then for any
 Skolem blank-node $b$ generated in $dChase(QS_{\C})$, and for any $c \in \C$, 
 if $c$ $\in$ $originCon$- $texts(b)$,
then there exists a quad $c_c\colon(b$, originContext, $c)$ $\in$ $dChase^{\text{csafe}}(QS_{\C})$.
\end{claim}
Since the only way a Skolem blank node $b$ gets generated in any iteration $i$ 
of $dChase($ $QS_{\C})$ is by the application of a BR $r\in R$, i.e. when there
 $\exists r$ $=$ $body(r)(\vec x$, $\vec z)$ $\rightarrow$ $head(r)(\vec x$, $\vec y) \in R$, assignment $\mu$, 
such that  $applicable_R(r$, $\mu$, $dChase_{i-1}(QS_{\C}))$, and $b$ $=$ $y_j[\mu^{ext(\vec y)}]$, for some $y_j$ $\in$ $\{\vec y\}$, 
and $dChase_{i}(QS_{\C})$ $=$ $dChase_{i-1}(QS_{\C})$ $\cup$ $head(r)(\vec x$, $\vec y)[\mu^{ext(\vec y)}]$. 
Also since $c$ $\in$ $originContexts(b)$, it should be the case that $c$ $\in$ $cScope(y_j$, $head(r))$. 
From claim 0, we know that there exists $j\geq 0$, such that  $dChase_i(QS_{\C})$ $=$ $dChase^{\text{csafe}}_{i+j}(QS_{\C})(\C)$. 
W.l.o.g, assume that $i+j$ is the first such csafe dChase iteration. Hence, it follows that 
$applicable_{augC(R)}(r'$, $\mu$, $dChase^{\text{csafe}}_{i+j-1}($ $QS_{\C}))$, where $r'=augC(r)$. Since,
$head(r) \subseteq head(r')$, it should be the case that $c \in$ $cScope(y_j$, $head(r'))$. Hence, by construction of $augC$, 
$c_c\colon (y_j$, originContext, $c)$ $\in$ $head(r')$, and as a result of application of $\mu$ on $r'$ 
in iteration $i+j$, $c_c\colon (b$, originContext, $c)$ gets generated in $dChase^{\text{csafe}}_{i+j}(QS_{\C})$.
Hence, the claim holds.

For the claim below, we introduce the concept of the sub-distance.
For any two blank nodes, their sub-distance is inductively defined as:
\begin{definition}
 For any two blank nodes $b, b'$, sub-distance$(b,b')$ is defined inductively as:
\begin{itemize}
 \item sub-distance$(b,b')=0$, if $b'=b$;
 \item sub-distance$(b,b')=\infty$, if $b\neq b'$ and $b$ is not a descendant of $b'$;
 \item sub-distance$(b,b')$ $=$ $min_{t \in \{\vec x[\mu]\}}\{$ sub-distance$(b$, $t)\}$ $+$ $1$, if $b'$ was generated by application 
 of $\mu$ on $r=body(r)(\vec x, \vec z)$ $\rightarrow$ $head(r)(\vec x, \vec y)$, i.e.
 $b'$ $=$ $y_j[\mu^{ext(\vec y)}]$, for some $y_j \in \{\vec y\}$,  and $b$ is a descendant of $b'$.
\end{itemize}
\end{definition}
\begin{claim}(3)
For any quad-system $QS_{\C}$ $=$ $\langle Q_{\C}$, $R\rangle$, if $\textbf{unCSafe}$ $\not \in$ 
$dChase^{\text{csafe}}(QS_{\C})$, then for any  two Skolem blank nodes $b, b'$ in $dChase(QS_{\C})$, if $b$ is a descendant of $b'$ then 
there exists a quad of the form $c_c\colon(b$, descendantOf, $b')$ $\in$ $dChase^{\text{csafe}}(QS_{\C})$.
\end{claim}
Note by the definition of sub-distance that if $b$ is a descendant of $b'$, 
then sub-distance$(b$, $b')$ $\in$ $\mathbb{N}$. Assuming
$\textbf{unCSafe}$ $\not \in$  $dChase^{\text{csafe}}(QS_{\C})$, and $b$ is a descendant of $b'$,
we approach the proof by induction on sub-distance$(b,b')$.
\begin{description}
 \item [base case] Suppose sub-distance$(b, b')=1$, then this implies that there exists $r$ $=$ $body(\vec x$, $\vec z)$ 
 $\rightarrow$ $head(r)(\vec x$, $\vec y)$, assignment $\mu$ such that  $b'$ was generated due to application of $\mu$ on $r$, i.e. 
 $b'$ $=$ $y_j[\mu^{ext(\vec y)}]$, for some $y_j \in \{\vec y\}$, and $b \in \{\vec x[\mu]\}$. This implies that there exists a dChase iteration $i$ 
 such that  $applicable_R(r$, $\mu$, $dChase_i(QS_{\C}))$ and $dChase_{i+1}(QS_{\C})$ $=$ $dChase_i(QS_{\C})$ $\cup$ $apply(r$, $\mu)$.
Since $\textbf{unCSafe}$ $\not \in$  $dChase^{\text{csafe}}(QS_{\C})$, using claim 0,  $\exists \ k\geq i$ such that  
$dChase_i(QS_{\C})$ $=$ $dChase^{\text{csafe}}_k(QS_{\C})(\C)$. W.l.o.g., let $k$ be the first such csafe dChase iteration. 
This means that $applicable_{augC(R)}(r'$, $\mu$, $dChase^{\text{csafe}}_k(QS_{\C}))$, where $r'=augC(r)$, and 
$dChase^{\text{csafe}}_{k+1}$ $=$ $dChas$ $e^{\text{csafe}}_k(QS_{\C})$ $\cup$ $head(r')[\mu^{ext(\vec y)}]$, 
and $b$, $b'$ $\in$ $head(r'$ $)[\mu^{ext(\vec y)}]$, $b \in \{\vec x[\mu]\}$, $b'=y_j[\mu^{ext(\vec y)}]$.  
By construction of $augC()$, since there exists a quad-pattern $c_c\colon (x_l$, descendantOf, $y_j) \in head(r')$, 
for any $x_l \in \{\vec x\}$, $y_j \in \{\vec y\}$, it follows that $c_c\colon (b$, descendantOf, $b') \in$ $dChase^{\text{csafe}}_{k+1}(QS_{\C})$.
 \item [hypothesis] Suppose sub-distance$(b,b') \leq k$,  $k \in \mathbb{N}$,
 then $c_c\colon(b$, \text{descendantOf}, $b')$ $\in$ $dChase^{\text{csafe}}(QS_{\C})$.
 \item [inductive step] Suppose sub-distance$(b,b')=k+1$, then there exists a $b''\neq b$, assignment $\mu$, and BR $r$ $=$ $body(r)(\vec x, \vec z)$ $\rightarrow$
 $head(r)(\vec x, \vec y) \in R$ such that  $b'$ was generated due to the application of $\mu$ or $r$ with $b''\in \{\vec x[\mu]\}$, i.e. 
 $b'=y_j[\mu^{ext(\vec y)}]$, for $y_j \in \{\vec y\}$, and  $b$ is a descendant of $b''$. This implies that sub-distance$(b'',b')=1$,
and sub-distance$(b,b'')$ $=$ $k$, and hence by hypothesis $c_c\colon(b$, descendantOf, $b'')$ $\in$ 
$dChase^{\text{csafe}}(QS_{\C})$, 
and $c_c\colon(b''$, descendantOf, $b')$ $\in$ $dChase^{\text{csafe}}(QS_{\C})$.  
Hence, by construction of csafe dChase, $c_c\colon (b$, descendantOf, $b')$ $\in$ $dChase^{\text{csafe}}($ $QS_{\C})$. 
\end{description}
\noindent Suppose $QS_{\C}$ is uncsafe, then by definition, there exists a blank nodes $b$, $b'$ in $\bn_{sk}($ $dChase(QS_{\C}))$,
such that  $b$ is descendant of $b'$, and $originContexts(b)$ is equal to $originContexts(b')$.
By contradiction, if $\textbf{unCSafe} \not \in dChase^{\text{csafe}}(QS_{\C})$, then by claim 1, 
$dChase(QS_{\C}) \subseteq dChase^{\text{csafe}}(QS_{\C})$. 
Since by claim 2, for any $c \in originContexts(b)$, there exists quads of the form 
$c_c\colon(b$, originContext, $c)$ $\in$ $dChase^{\text{csafe}}(QS_{\C})$ and for every $c' \in originContexts(b')$, 
there exists  $c_c\colon(b'$, originContext, $c')$  $\in$ $dChase^{\text{csafe}}(QS_{\C})$. 
Since $originContexts(b)$ $=$ $originContexts(b')$,
it follows that $\{c$ $|$ $c_c\colon(b$, originContext, $c)$ $\in$ $dChase^{\text{csafe}}($ $QS_{\C})\}$
$=$ $\{c'$ $|$  $c_c\colon(b'$, originContext, $c')$ $\in$ $dChase^{\text{csafe}}($ $QS_{\C})\}$
Also by claim 3, since $b$ is a descendant of $b'$, there exists a quad of the form 
$c_c\colon(b$, \text{descendantOf}, $b')$ in $dChase^{\text{csafe}}(QS_{\C})$. 
But, by construction of $dChase^{\text{csafe}}(QS_{\C})$, it should be the case that there exist a 
$b''\in \bn_{sk}(dChase^{\text{csafe}}(QS_{\C}))$, $r$ $=$ $body(r)(\vec x$, $\vec z)$ $\rightarrow$ $head(r)(\vec x$, $\vec y)$ 
$\in$ $augC(R)$, assignment $\mu$ such that  $b'$ was generated due to the application of $\mu$ on $r$, i.e. $b'$ $=$ $y_j[\mu^{ext(\vec y)}]$
with $b''$ $\in$ $\{\vec x[\mu]\}$, and 
$c_c\colon(b$, \text{descendantOf}, $b'')$
$\in$ $dChase^{\text{csafe}}(QS_{\C})$. But, since 
$\{c \ | \ c_c\colon(b$, originContext, $c)\in$ $dChase^{\text{csafe}}(QS_{\C})\}$
$=$ $cScope(y_j$, $head(r_i))$,
the method \textbf{unCSafeTest}$(r$, $\mu$, $dChase^{\text{csafe}}_l(QS_{\C}))$
should return True, for some $l \in \mathbb{N}$. Hence, it should be the case that
$\textbf{unCSafe}$ $\in$ $dChase^{\text{csafe}}(QS_{\C})$, which is a contradiction to our assumption.
Hence $\textbf{unCSafe}$ $\in$ $dChase^{\text{csafe}}(QS_{\C})$, if $dChase(QS_{\C})$ is uncsafe. 
\end{proof}

\begin{proof}[Property~\ref{prop:UniversalSafety}]
 (Only If)  By definition, $R$ is universally safe (resp. msafe, resp csafe) iff $\langle Q_{\C}, R\rangle$ is safe (resp. msafe, resp. csafe),
 for any quad-graph $Q_{\C}$. Hence, $\langle Q^{crit}_{\C}, R\rangle$ is safe (resp. msafe, resp. csafe).
 
 (If part) We give the proof for the case of safe quad-systems. The proof for the msafe and csafe case can be obtained by 
 slight modification.  In order to show that if $\langle Q^{crit}_{\C}$, $R\rangle$ is safe, then $R$ is universally safe, we prove the contrapositive. 
 That is we show that if there exists $Q_{\C}$ such that  $\langle Q_{\C}$, $R\rangle$ is unsafe, then $QS^{crit}_{\C}$ $=$ $\langle Q^{crit}_{\C}$, $R\rangle$ is unsafe. 
 Suppose, there exists such an unsafe quad-system $QS_{\C}$ $=$ $\langle Q_{\C}$, $R\rangle$, we show how to incrementally construct a homomorphism $h$ from 
 constants in $dChase(QS_{\C})$ to the constants in $dChase(QS^{crit}_{\C})$ such that  for any Skolem blank node $\_\colon b$ in 
 $dChase(QS_{\C})$,  there exists a homomorphism from 
 descendance graph of  $\_\colon b$  to  the descendance graph of $h(\_\colon b)$ in $dChase(QS^{crit}_{\C})$. Suppose $h$ is initialized as: 
 for any constant $c\in \const(QS_{\C})$, $h(c)=\_\colon b^{crit}$, if $c \in \const(QS_{\C})$ $\setminus$ $\const(QS^{crit}_{\C})$; 
 and  $h(c)=c$ otherwise . It can be noted that for any BR $r$ $=$ $body(r)(\vec x, \vec z)$
 $\rightarrow$ $head(r)(\vec x, \vec y)$ $\in R$, if $body(r)[\mu]$ $\subseteq$ 
 $dChase_0(QS_{\C})$ then $body(r)[\mu][h]$ $\subseteq$ $dChase_0(QS^{cric}_{\C})$. 
 Now it follows that for any $i \in \mathbb{N}$, $level(body(r)[\mu])=0$ if $applicable(r$, $\mu$, $dChase_i(QS_{\C}))$, 
 then there exists $j \leq i$ such that  $applicable(r$, $h\circ \mu$, $dChase_j(QS^{crit}_{\C}))$.
 Let $h$ be extended so that for any $i \in \mathbb{N}$, for any Skolem blank node $\_\colon b$ 
 introduced in $dChase_{i+1}(QS_{\C})$ while applying $\mu$ on $r$, for existential variable $y\in \{\vec y\}$, let $h(\_\colon b)$ be the blank node introduced in 
 $dChase_{j+1}(QS^{crit}_{\C})$, for the existential variable $y$ while applying $h\circ \mu$ on $r$. Hence, it follows that, for any $i \in \mathbb{N}$,
  $applicable_R(r$, $\mu$, $dChase_i(QS_{\C}))$ implies there exists $j \leq i$ such that $applicable(r$, $h\circ \mu$, $dChase_j(QS^{crit}_{\C}))$, 
  for any $r, \mu$.
   Also note that, for any Skolem blank node $\_\colon b$ generated in $dChase_i(QS_{\C})$, 
 it can be noted that  $\lambda_r(\_\colon b)$ $=$ $\lambda_r(h(\_\colon b))$ and 
 $\lambda_c(\_\colon b)$ $=$ $\lambda_c(h(\_\colon b))$ and $\lambda_v(\_\colon b)[h]$ $=$ 
 $\lambda_v(h(\_\colon b))$. Hence, it follows that for any Skolem blank node $\_\colon b$ in $dChase(QS_{\C})$, $h$ 
 is a homomorphism from descendance graph of $\_\colon b$ to the descendance graph of $h(\_\colon b)$ in 
 $dChase(QS^{crit}_{\C}$. Hence, 
 if there exists two Skolem blank nodes $\_\colon b$, $\_\colon b'$ in $dChase(QS_{\C})$, 
 with $\_\colon b'$ a descendant of $\_\colon b$ and $originRuleId(\_\colon b)$ $=$ $originRuleId(\_\colon b')$ and
 $originVec$- $tor(\_\colon b)$ $\cong$ $originVector(\_\colon b')$, then it follows that
 there exists $h(\_\colon b)$, $h(\_\colon b')$ in $dChase($ $QS^{crit}_{\C})$, with
 $h(\_\colon b')$  descendant of $h(\_\colon b)$ and $originRuleId(h(\_\colon b))$ $=$ $originRuleId(h(\_\colon b'))$ and
 $originVector(h(\_\colon b))$ $\cong$ $originVector(h(\_\colon b'))$. Hence, it follows from the definition that $QS^{critic}_{\C}$ is unsafe. 
\end{proof}
\end{document}